\newcommand{\MB}[1]{\textcolor{black}{#1}}
\newcommand{\el}[1]{\textcolor{green}{Eleni: #1}}
\newcommand{\hide}[1]{\ifthenelse{\boolean{includeHidden}}{{\tiny\textbf{HIDDEN:~}#1}}{}}
\def\munderbar#1{\underline{\sbox\tw@{$#1$}\dp\tw@\z@\box\tw@}}
\newcommand{\mcal}[1]{\ensuremath{\mathcal{#1}}}
\DeclareMathOperator{\R}{\mathbb{R}}
\DeclareMathOperator{\Z}{\mathbb{Z}}
\begin{document}
\RUNAUTHOR{tbd}

\RUNTITLE{Assignment Markets with Budget Constraints}

\TITLE{Assignment Markets with Budget Constraints}
\ARTICLEAUTHORS{%
\AUTHOR{Eleni Batziou, Martin Bichler, Maximilian Fichtl\footnote{We thank Itai Ashlagi, Shahar Dobzinski, Kassian Koeck, Paul Milgrom, Alexander Teytelboym, Rakesh Vohra, Zaifu Yang, seminar and conference participants at the SLMath fall program on market and mechanism design in 2023, the ACM Conference on Economics and Computation 2022, the AdONE Retreat 2021, and the Dagstuhl Workshop on Matching under Preferences 2021 for helpful comments and suggestions. Eleni Batziou acknowledges the support of the German Research Foundation (DFG) within the Research Training Group AdONE (GRK 2201). Martin Bichler is grateful for support by the German Research Foundation (DFG) (BI 1057/1-9).}}
\AFF{School of Computation, Information, and Technology, Technical University of Munich, 85748 Garching, Germany}

} 

\ABSTRACT{%
This paper studies markets where a set of indivisible items is sold to bidders with quasilinear, unit-demand valuations, subject to a hard budget constraint. Without financial constraints the well-known assignment market model of Shapley and Shubik (1971) allows for a simple ascending auction format that is incentive-compatible, and strongly Pareto-optimal. However, this auction model does not capture the possibility that bidders face hard budget constraints. 
We design an iterative auction that depends on demand queries and an easily verifiable additional condition to maintain the properties in the presence of budget constraints.  
If instead this additional condition does not hold, incentive compatibility and core stability are at odds, and we cannot hope to achieve strong Pareto optimality in a simple ascending auction even with truthful bidding.
Moreover, even in a complete information model where the auctioneer has access to valuations and budget constraints, the problem is NP-hard.\\

JEL: D44, D47, D52\footnote{An earlier version of this paper was circulated under the title: ’Core-Stability in Assignment Markets with Financially Constrained Buyers’}
}
\KEYWORDS{assignment market, incentive compatibility, core-stability}

\maketitle
{\begin{center} Version \today \end{center}}


\section{Introduction}


The assignment market model of \citet{shapley1971assignment} is a fundamental model in economics. Bidders have quasilinear preferences and they demand at most one out of several items. A housing market where each buyer is interested in at most one item is a classic example. 
\citet{shapley1971assignment} show that in such a market with quasilinear and transferable utility,  maximization of aggregate surplus (the sum of consumer and producer surplus) is a linear optimization problem and the dual variables constitute competitive equilibrium prices. Such an outcome is in the strong core because any deviation from this equilibrium would reduce the total surplus and thus leave at least one agent worse off.
Subsequent contributions showed that the lowest prices in the core are incentive-compatible for buyers \citep{leonard1983elicitation} and that these Walrasian prices can be found in polynomial time with a simple ascending auction implementing a variant of the Hungarian algorithm \citep{demange1986multi}. In summary, such an ascending auction is incentive-compatible, budget-balanced, strongly Pareto-efficient, and it can be run efficiently. 

We ask if these desiderata are conflicting when buyers have a hard budget constraint, i.e., a limit on how much they are allowed to spend in an auction.\footnote{A soft budget constraint refers to the case where an agent might get additional liquidity at some cost, which our model does not allow \citep{baldwin2023equilibrium}.}  
Quasilinear utilities with hard, binding budget constraints have been studied in many markets including spectrum auctions \citep{Bichler16, janssen_karamychev_kasberger_2017}, display ad or sponsored search auctions \citep{wu2018budget, conitzer2017multiplicative, borgs2007dynamics, dobzinski2014efficiency, conitzer2022pacing}. They have also been discussed in the literature on optimal auction design \citep{laffont1996optimal, che1998standard, benoit2001multiple, pai2014optimal}. 
Imperfect capital markets were mentioned as a reason for such budget constraints \citep{che1998standard, che2013assigning}, but also delegation problems within bidding firms \citep{bichler2018principal}. \citet{pai2014optimal} write that ``not every potential buyer of a David painting who values it at a million dollars has access to a million dollars to make the bid.'' One can argue that there is always some limit on how much an agent can spend in the market and the ability-to-pay is not infinity, but that this constraint is not necessarily binding, i.e., below the value that an agent has for a good. 
While a quasilinear objective with a hard budget constraint is a stark model of preferences, it is a good approximation of environments in the field and it is widely used in the literature. 

We stick to bidders with quasilinear preferences with unit demand as in the assignment model of \citet{shapley1971assignment}, but add hard and binding budget constraints. We are interested in deterministic mechanisms where money can be used to transfer utility up to a certain limit on spending. 
We look for mechanisms that implement strongly Pareto-efficient outcomes in the core. 
With budget constraints, the core and the strong core no longer coincide as is the case where budgets are so high that they do not constrain the transfers. In a core solution there can be possibilities for Pareto improvements, as the following example shows.

\begin{example}\label{ex:example1}
Suppose that an item is being sold and agent 1 has a value of \$1, while agent 2 has a value of \$100. Both agents have no budget constraint. This is the standard quasilinear case with transferable utility. Now, assigning the item to agent 2 at a price of \$1 is a competitive equilibrium, which is a strongly Pareto-optimal solution. 
If we add a budget constraint of \$1 for both agents, there is no reason to change the allocation to agent 1 with a value of only \$1. Suppose the item was originally assigned to agent 1, then reassigning the item to agent 2 constitutes a Pareto improvement, making the outcome strongly Pareto-optimal. The same allocation is still a core outcome that is strongly Pareto optimal, but the reverse allocation of the item to agent 1 is not.
The utility of agent 1 at a price of \$1 is zero, independent of the allocation. 
\end{example}
 

\subsection{Contributions}

First, we show that in the assignment model with binding budget constraints, selecting a core outcome that maximizes aggregate surplus is a sufficient condition for strong Pareto optimality. 
Then we introduce and analyze an iterative auction that always finds an outcome in the core with bidders who maximize utility in each round and have unit demand and a hard budget constraint. 
Our algorithm relies exclusively on demand queries via prices (no value or budget queries) and provides a natural generalization of the \textit{ascending auction} by \citet{demange1986multi} and ascending auctions as they can be found in the field.
We identify an additional condition on top of the unit-demand valuations when this auction leads to a unique solution that is strongly Pareto-optimal. Under this condition, at most one bidder hits his budget constraint in a round. In this case, the auction is ex-post incentive-compatible and maintains all properties of the auction by \citet{demange1986multi}. However, we also show that when this additional condition is not met, then no ex-post incentive-compatible mechanism can terminate in a solution in the core for every input. 


\MB{When multiple bidders hit their budget constraint in a round, the auctioneer can exclude one of these bidders and restrict her to bid on certain items in subsequent rounds. Then, the auction always finds an outcome in the core, but it does not necessarily find one that is strongly Pareto-efficient. 
For any instance of our iterative auction, if the auctioneer would be able to guess the right decisions throughout the auction, the auction terminates in a strongly Pareto-efficient solution. Otherwise, aggregate surplus can be arbitrarily low and there can be a possibility for Pareto improvements. However, guessing the right decision is hard. Even if the auctioneer, whenever a decision has to be made regarding which bidder to exclude, had access to the true valuations of bidders who hit their budget constraint, he could not determine the right bidder to exclude with certainty. }
We show that even if the auctioneer had complete information about valuations and budgets of all bidders, then finding a surplus-maximizing outcome in the core is NP-hard and no simple (polynomial-time) iterative auction can find this solution. So not only is the lack of incentive compatibility a concern, there are also fundamental computational barriers to finding a strongly Pareto-optimal outcome in the core, if some bidders experience binding budget constraints.


\subsection{Organization}

In Section 2, we discuss related literature, before we introduce notation and relevant definitions in Section 3. Section 4 briefly summarizes assignment markets without binding budget constraints, and Section 5 discusses the differences that arise when budget constraints become binding. Section 6 introduces an iterative auction and describes its properties, and Section 7 shows that computing an outcome that is always strongly Pareto-optimal is NP-hard. Section 8 provides conclusions.

\hide{
\subsection{An Illustrative Example}

In order to illustrate the problem, we introduce a simple example with two buyers and two sellers (see Figure \ref{fig:example}). Without budget constraints an auctioneer could implement an outcome that is in the core and maximizes aggregate surplus using the ascending auction by \citet{demange1986multi}. With budget constraints there is no anonymous and linear price vector, for which the surplus-maximizing allocation (ignoring budget constraints) would be such that no pair of buyers and sellers would want to deviate. 

In more detail, buyer $B_1$ has a value of \$4 for the item of seller $S_1$ and of \$10 for the item of seller $S_2$. Buyer $B_2$ has a value of \$3 for the item of seller $S_1$ and of \$6 for the item of seller $S_2$. The budget constraints of $B_1, B_2$ are $b_1 = 2$ and $b_2 = 4$ respectively. Both sellers have a reserve price of zero, namely $r_1 = r_2 = 0$.

\begin{figure}[h]
	\centering
	\includegraphics[scale=0.25]{../figures/illustrative_ex.jpg}
	\captionsetup{justification=centering, textfont=small, format=hang}
	\caption{Assignment market with two buyers and two sellers.}
	\label{fig:example}
\end{figure}

Two possible outcomes exist: In the first one, indicated with a grey solid line in Figure \ref{fig:example}, buyer $B_1$ is assigned to seller $S_1$ and $B_2$ to $S_2$, achieving an aggregate surplus of \$10. If seller $S_1$ charges \$1 and $S_2$ charges \$2.5, then the utility of $B_1$ is $4-1=3$ and that of buyer $B_2$ is $6-2.5=3.5$. $B_2$ therefore has higher utility by acquiring item of seller $S_2$ than that of $S_1$ for the given price vector ($3-1 = 2 < 3.5$). In contrast, $B_1$ would prefer to acquire the item of $S_2$, but achieves a higher utility (in terms of price) by trading with $B_2$.  

In the second outcome,  indicated by a dotted line in Figure \ref{fig:example}, buyer $B_1$ is assigned to seller $S_2$ and $B_2$ to $S_1$ with an aggregate surplus of \$13. However, there is no price vector that makes this outcome in the core. Suppose prices for the items of sellers $S_1, S_2$ were set to 3 and 2 respectively. At this price, $B_1$ can no longer afford the item of $S_1$ but achieves a payoff of $10-2=8$ from buying the item of $S_2$. Buyer $B_2$ has a payoff of $3-3=0$ from buying the item of $S_1$, and profits from switching to $S_2$, with corresponding payoff $6-2=4$. $S_2$ cannot charge more than \$2 because this would otherwise exceed the budget of buyer $B_1$. As a result $B_2$ and $S_2$ always strictly prefer being assigned to one another. Thus, the surplus-maximizing outcome is not in the core, and the auctioneer must account for budgets in the process of determining such an outcome.

A question that arises from the budget constraints in this example regards that of Pareto optimality of a surplus-maximizing solution. Without budget constraints but with quasilinear utility functions, aggregate surplus maximization is a necessary condition for Pareto efficiency (see \citet[10.D]{mas1995microeconomic}). 
Budget constraints restrict bidders from Pareto improvements, thus leading to outcomes that are inefficient if the auctioneer did not consider budget constraints. In markets with soft budget constraints there might be opportunities for Pareto improvements through side payments or other arrangements. For instance, if a buyer could borrow money or negotiate a payment plan, she might be able to acquire an item, reflecting her higher valuation and potentially leading to a Pareto improvement. We look at markets with hard budget constraints, where this is not possible. Such a constraint might be a spending limit imposed by a third party for a particular transaction. Ignoring this limit would lead to infeasible outcomes similar to ignoring allocation constraints that an auctioneer needs to consider. As long as the utilities of agents are assumed to be comparable in a quasilinear utility function, an auctioneer might still want to maximize surplus as in the case without this hard spending limit. However, in order not achieve a stable outcome, she would aim to find a surplus-maximizing outcome among those in the core.
} 


\section{Related Literature}\label{sec:related_literature}

Two-sided matching markets describe markets where buyers want to win at most one item (also known as the unit-demand model) and sellers sell only one item. Buyers and sellers are disjoint sets of agents and each buyer forms exclusive relationships with a seller. Such markets are central to economics, and there is substantial literature. For instance, the well-known marriage model by \citet{gale1962college} assumes ordinal preferences and non-transferable utility. \citet{koopmans1957assignment} analyzed such markets with quasilinear utility functions (aka. assignment markets or assignment model), and \citet{shapley1971assignment} showed that the core of this game is nonempty and encompasses all competitive equilibria. 
While their model assumed access to all valuations, \citet{demange1986multi} showed that an ascending auction with only demand queries results in a competitive equilibrium at the lowest possible price, i.e., at the competitive equilibrium price vector that is optimal for buyers. In such an auction, the auctioneer specifies a price vector (the demand query) in each round, and buyers respond with their demand set, i.e., the set of goods that maximize payoff at the prices. 

The housing market by \citet{shapley1974cores} is an example of a market without transferable utility or monetary funds. 
In the housing market, each agent is endowed with a good or house, and each agent is interested in one house only. 
The goal of this market is to redistribute ownership of the houses in accordance with the ordinal preferences of the agents. 
In such housing markets, the core is nonempty. If no agent is indifferent between any two houses, then the economy has a unique competitive allocation, which is also the unique strong core allocation. An allocation belongs to the strong core, if no coalition of buyers and sellers can make all members as well off and at least one member better off by trading items among themselves. When referring to the notion of core, we assume an allocation belongs to the core if no coalition can lead to all members' utilities improved when redistributing items amongst themselves. 

Related to our paper are house allocation problems without endowments, which describe markets without transferable utility but unit-demand agents. \citet{svensson1999strategy} showed that a mechanism for this problem is strategy-proof, non-bossy, and neutral if and only if it is a serial dictatorship, which is strongly Pareto-optimal, but not envy-free. \citet{bogomolnaia2001new} show that the random serial dictatorship mechanism is the unique strategy-proof, ex-post efficient mechanism that eliminates strict envy between agents with the same preferences. Note that we rule out buyers with zero budgets in our model such that utility can be transferred up to the budget constraint.

There is literature on non-quasilinear models where small changes of prices do not lead to a discontinuous change of the buyers' utilities as is the case with hard budget constraints. \citet{alaei2016competitive} present a characterization of utilities in competitive equilibria of two-sided matching markets in which the utility of each agent depends on the choice of the partner and the terms of the partnership. 
In contrast, \citet{quinzii1984core} analyze an exchange model with multiple agents with unit demand and transferable but non-quasilinear utility. 
Buyers derive utility from at most one good and a transfer of money. Sellers aim at obtaining the highest possible price above a reservation level. 
They show that outcomes are in the core and they are competitive equilibria. In a related model, \citet{gale1984equilibrium} show that a competitive equilibrium always exists. 
These models allow for "soft" budget constraints with finite financing costs. 
Hard budget constraints as in our paper are different, because buyers must not spend more than a certain amount of money, and the constraints cannot be relaxed. 

\citet{zhou2017multiitem} provide a multi-item auction for computing core-allocations. In contrast to their algorithm, ours only relies on demand queries, so buyers are not required to report their budgets. 
\citet{zhou2020serial} analyze an assignment market with imperfect transferability of utility and income effects. Their serial Vickrey mechanism finds a minimum price equilibrium, but their utility model is different and allows for income effects. \citet{herings2022competitive} introduce the notion of a quantity-constrained competitive equilibrium and establish existence and equivalence to core outcomes for a more general utility model. Recently, \citet{HERINGS2024105799} generalized previous works by allowing for more complex contract sets and preference structures. Their notion of expectational equilibrium provides a more generalized and flexible equilibrium concept that connects and extends existing theories in economic matching models.

Closer to our paper is the work by \citet{che2013assigning} who study different methods of assigning a good to budget-constrained agents. They show that mechanisms that assign the good randomly and allow resale may outperform the competitive market.
We focus on deterministic mechanisms like \cite{laan2016ascending}, who propose an ascending auction for the assignment market that results in a \emph{equilibrium under allotment}. This is in general not an outcome in the core. \citet{talman2015efficient}  analyze the same model and introduce an elegant algorithm that finds a core solution in an assignment market with hard budget constraints, \MB{but there are no incentive compatibility guarantees.} They find an outcome in the core, i.e., one that is weakly Pareto-optimal. 


We show that without an additional assumption, incentive compatibility and the core are not always possible with hard budget constraints. Importantly, even with access to all valuations and budgets, we can show that finding a strongly Pareto-optimal outcome in the core is NP-hard for larger problem sizes and the aggregate surplus of simple iterative auctions can be arbitrarily low with possibilities for Pareto improvements. It was shown that for combinatorial exchanges, finding an outcome that maximizes aggregate surplus in the core is in even higher complexity classes of the polynomial hierarchy \citep{bichler2021or}. Overall, computational complexity is a key concern in this model and a barrier for practical implementations. The paper by \citep{bichler2021or} does not discuss incentive compatibility, because \citet{Dobzinski2008} already showed that incentive-compatible and Pareto-efficient auctions are impossible for multi-unit auctions with financially constrained buyers. The impossibility result by \citet{Dobzinski2008} requires multi-unit demand, and does not extend to the assignment model with unit-demand buyers, which is the focus of this paper.

\section{Preliminaries} \label{sec:preliminaries}

An assignment market \MB{$M=(\mcal{B}, \mcal{S}, V, B, R)$} 
consists of two disjoint sets of agents $\mcal{B}$ and $\mcal{S}$, representing buyers, i.e., bidders, $i \in \mcal{B} = \{1,2,\dots, n\}$ and goods $j \in \mcal{S} = \{1,2,\dots, m\} \cup \{0\}$.  We identify good $j$ with the seller owning it, i.e., each seller owns one good. The $0$-item is a dummy item and does not have value to any bidder, meaning that a bidder receiving good $0$ corresponds to no real good. Additionally, agents' preferences are given by each bidder $i$'s valuation $v_i: \mcal{S} \rightarrow V \subseteq \Z_{\geq 0}$ with $v_i(0)=0$ and budget $b^i \in B \subseteq \Z_{> 0}$, as well as each seller $j$'s reserve values/ask price \MB{$r_j : \mcal{B} \rightarrow R \subseteq \Z_{\geq 0}$}.
\footnote{The integrality assumption for valuations $v_i$ is made throughout in the literature starting with \citet{demange1986multi} or \citet{mishra2007ascending}. } The budget constraint is a limit on how much an agent is allowed to spend in a particular market, and it cannot be relaxed. 
A \emph{price vector} is a vector $p \in \R^{\mcal{S}}$ with $p(0) = 0$, assigning price $p(j)$ to every good. Bidders have quasilinear utilities, so if bidder $i$ receives item $j$ under prices $p$, their utility is $\pi_i(j,p) = v_i(j)-p(j)$, if $p(j) \leq b^i$, and $\pi_i(j,p) = -\infty$, otherwise. \MB{Such utility functions do not allow for affine transformations. }

An \emph{assignment} is represented as a map $\mu: \mcal{B} \rightarrow \mcal{S}$ from bidders to the items they receive, with $\mu \in X$ and $X$ describes the set of feasible assignments. Here, $|\mu^{-1}(\{j\})| \leq 1$ for all $j \neq 0$, so only the dummy good may be assigned to more than one bidder. 
\MB{
An \textit{outcome} of the market mechanism is a pair \((\mu,t)\) where 
\(\mu\colon \mathcal{B} \to \mathcal{S}\) 
is an assignment mapping each buyer \(i\) either to a seller \(\mu(i)\in \mathcal{S}\) or to 
the 0-item, which implies that the bidder is not matched to any seller, 
and \(t=(t_i)_{i\in \mathcal{B}}\in\mathbb{R}^\mathcal{B}\) is a transfer vector. We then \textit{induce} a price vector \(p\in\mathbb{R}^\mathcal{S}\) by \(p(j)\;=\;t_i
\text{ whenever } \mu(i)=j. \)
We will also write $(\mu,p)$ to describe an outcome \((\mu,t)\) with induced prices \(p\). 
}
\MB{In settings with budget constraints, we focus on outcomes where the induced price vector is such that no budget constraint is violated, i.e., $p(\mu(i)) \leq b^i$ for all $i \in \mcal{B}$ and only sold items may have a positive price: $p(j) > 0$ implies that $|\mu^{-1}(\{j\})|=1$.}

We also require that the price for the dummy item $0$ is $0$. For the sake of simplicity of the exposition, we assume all reserve prices to be equal to $0$ in what follows, as the results can be easily generalized by starting the auction at the reserve prices, instead of $0$. 
Given prices, the \emph{demand set} of bidder $i$ consists of the most preferred items $j$:
\[
D_i(p) = \left\{j\,:\, p(j) \leq b^i \, \wedge \, \pi_i(j,p) \geq \pi_i(k,p) \, \forall k \text{ with } p(k) \leq b^i\right\}.
\]

A deterministic \textit{market mechanism} $\mathcal{M}=(V,B,f,p)$ is defined by a set of reported valuations $V$ and budget constraints $B$ available to the bidders $i \in \mcal{B}$, an allocation function $f: V^n \times B^n \rightarrow X$, and a payment rule \MB{$p: X \rightarrow \R^n$}
for each agent $i$. We want such market mechanisms to be incentive-compatible.

\begin{definition}[Dominant-Strategy Incentive Compatibility (DSIC)]
Let \(\mathcal{M}~=~(V, B, f, p)\) be a mechanism, where each agent \(i\) has true value \(v_i\in V\) and true budget \(c_i\in B\).  For any assignment \(x\) and payment \(p\), define
\[
u_i(x,p \;;\,v_i,c_i)
\;=\;
\begin{cases}
v_i(x)\;-\;p, & p \le c_i,\\
-\infty,      & p > c_i.
\end{cases}
\]
Then \(\mathcal{M}\) is \emph{dominant-strategy incentive-compatible with budgets} if, for every agent \(i\), every true profile \((v,c)\in V^n\times B^n\), every misreport \((w_i,d_i)\in V\times B\), and every profile of other reports \((w_{-i},d_{-i})\in V^{n-1}\times B^{n-1}\) including any demand set $D_i(p)$ derived in an ascending auction, we have
\[
u_i\bigl(f((v_i, w_{-i}),(c_i,d_{-i}))_i,\;p((v_i, w_{-i}),(c_i,d_{-i}))_i \;;\,v_i,c_i\bigr) 
\;\ge\;
\]
\[
u_i\bigl(f\bigl((w_i,w_{-i}),(d_i,d_{-i})\bigr)_i,\;
      p\bigl((w_i,w_{-i}),(d_i,d_{-i})\bigr)_i
      \;;\,v_i,c_i\bigr).
\]
In other words, no matter what the others report, agent \(i\) can never gain by lying about her own value or budget.
\end{definition}

In a direct mechanism, this would mean that a bidder reveals his true values for the items and his budget constraint. An indirect mechanism might use demand or value queries. In a demand query, participants are asked to specify their demand set at the prices. In a value query, bidders are typically asked to reveal their private valuation or willingness to pay for the item without reference to a specific price. 
In an indirect and incentive-compatible mechanism as the one described below with only demand queries, a bidder would reveal his demand set up to his budget limit and then exit the auction or switch to the remaining item with the highest payoff. Many real-world auctions follow such a process. \MB{Standard ascending auctions (like English or clock auctions without proxy bidders) cannot be dominant-strategy incentive-compatible, because truth-telling isn't always best against arbitrary behavior. As usual, we restrict to ex-post incentive compatibility in this case, and follow definitions as in \citet{mishra2007ascending}.}

\begin{definition}[Ex-Post Incentive Compatibility (EPIC)]
Let \(\mathcal{M} = (V, B, f, p)\) be a mechanism.  For any agent \(i\), true valuation \(v_i\), true budget \(c_i\), reported valuation \(w_i\), and reported budget \(d_i\), define her utility under assignment \(x\) and payment \(p\) by
\[
u_i(x,p \,;\, v_i, c_i)
\;=\;
\begin{cases}
v_i(x) - p, & p \,\le\, c_i,\\
-\infty,    & p \,>\, c_i.
\end{cases}
\]
Then \(\mathcal{M}\) is \emph{ex-post incentive compatible with budgets} if, for every agent \(i\), every true profile \((v,c)\in V^n\times B^n\), and every misreport \((w_i,d_i)\in V\times B\), it holds that
\[
u_i\!\bigl(f(v,c)_i,\;p(v,c)_i\,;\,v_i,c_i\bigr)
\;\ge\;
u_i\!\bigl(f((w_i,v_{-i}),(d_i,c_{-i}))_i,\;
      p((w_i,v_{-i}),(d_i,c_{-i}))_i
      \;;\,v_i,c_i\bigr).
\]
In other words, even if \(i\) lies about her valuation \emph{and/or} her budget, she can never improve her (budget-constrained) payoff ex post.
\end{definition}

A design desideratum for the {outcome} $(\mu,p)$ of a mechanism is that of Pareto optimality (PO), aka. Pareto efficiency. (Weak) Pareto optimality means that an allocation is considered optimal if there is no other allocation that makes everyone strictly better off. In contrast, strong Pareto optimality requires that no reallocation can make some people better off without making others worse off, even if the others are indifferent.

\begin{definition}[Pareto optimality]
An outcome \((\mu, p)\) is (weakly) Pareto-optimal if there is no other allocation \((\mu', p')\) such that:
\begin{align*}
v_i(\mu'(i)) - p'(\mu'(i)) &> v_i(\mu(i)) - p(\mu(i)) \quad \text{for all } i \in \mathcal B, \\
p'(j) &> p(j) \quad \text{for all } j \in \mathcal  S
\\
p'(\mu'(i)) &\leq b^i \quad \forall i  \in \mathcal B.
\end{align*} 
\end{definition}

\begin{definition}[Strong Pareto optimality]
An outcome \((\mu, p)\) is strongly Pareto-optimal if there is no other allocation \((\mu', p')\) such that:
\begin{align*}
v_i(\mu'(i)) - p'(\mu'(i)) &\geq v_i(\mu(i)) - p(\mu(i)) \quad \text{for all } i \in \mathcal B, \\
p'(j) &\geq p(j) \quad \text{for all } j \in \mathcal  S
\\
p'(\mu'(i)) &\leq b^i \quad \forall i  \in \mathcal B
\\
\text{and for some $i \in \mathcal B$: } v_i(\mu'(i)) - p'({\mu'(i)}) &> v_i(\mu(i)) - p(\mu(i)),
\\
\textbf{or } \text{for some $j \in \mathcal S$: } p'(j) &> p(j).
\end{align*} 
\end{definition} 
We are also interested in an outcome in the core of the market game: a pair $(\mu,p)$ is a core outcome if there does not exist any pair $(\mu', p')$ for some coalition such that the utility for each member of the coalition is strictly higher. A strong core allocation means that no coalition can make all its members at least as well off, with at least one member strictly better off \citep{talman2015efficient}.


\hide{
\begin{definition}[Core]
An outcome \((\mu, p)\), where \( \mu: \mcal{B} \rightarrow \mcal{S} \) is the assignment of goods to bidders and \( p: \mcal{B} \rightarrow \Z_{\geq 0} \) is the price paid by each bidder, is said to be in the \textit{(weak) core} if 
there does not exists any coalition $C = B \cup S \subseteq \mathcal B \cup \mathcal S$ and outcome $(\mu',p')$, such that $\mu'(B) = S$ and
\begin{align*}
v_i(\mu'(i)) - p'(\mu'(i)) &\geq v_i(\mu(i)) - p(\mu(i)) \quad \text{for all } i \in B, \\
p'(\mu'(i)) &\geq p(\mu'(i)) \quad \text{for all } \mu'(i) \in S
\\
p'(\mu'(i)) &\leq b^i \quad \forall i  \in B
\\
\text{For some $i \in B$: } v_i(\mu'(i)) - p'({\mu'(i)}) &> v_i(\mu(i)) - p(\mu(i)),
\\
\text{AND for $\mu'(i) \in S$: } p'(\mu'(i)) &> p(\mu'(i)) 
\end{align*}
The outcome \((\mu, p)\) is in the \textit{strong core}, if there does not exist any coalition $C = B \cup S \subseteq \mathcal B \cup \mathcal S$ and outcome $(\mu',p')$ such that
\begin{align*}
\text{For $i \in B$: } v_i(\mu'(i)) - p'(\mu'(i)) &> v_i(\mu(i)) - p(\mu(i)),
\\
\text{OR for $\mu'(i) \in S$: } p'(\mu'(i)) &> p(\mu'(i)) 
\end{align*}
\end{definition}
}


\begin{definition}[Core and Strong Core]
An outcome $(\mu,p)$ is in the core of the assignment market, if and only if there does not exist a coalition of a single buyer $i$ and single seller $j$ such that
$v_i(j) - p(j) > v_i(\mu(i)) - p(\mu(i))$ \textbf{and} $p(j) < b^i.$ 
An outcome $(\mu,p)$ is in the strong core, if and only if there does not exist a coalition of a single buyer $i$ and single seller $j$ such that
$v_i(j) - p(j) > v_i(\mu(i)) - p(\mu(i))$ \textbf{or} $p(j) < b^i$.
\end{definition}

The idea of a blocking pair $(i,j)$ when the outcome is not in the core is that both bidder $i$ and seller $j$ would strictly increase their utility, if $i$ received item $j$ instead of $\mu(i)$: if $i$ pays $p(j)+\varepsilon$ for item $j$, then still $\pi_i(j,p)-\varepsilon > \pi_i(\mu(i),p)$, and at the same time, the profit of seller $j$ is increased by $\varepsilon$.

Note that in our setting, the definition of the core provided above is equivalent to the standard definition in the literature: an outcome $(\mu,p)$ is said to be in the core if there are no subsets $\mcal{B}' \subseteq \mcal{B}$ and $\mcal{S}' \subseteq \mcal{S}$ and an outcome $(\mu', p')$ on $\mcal{B}' \times \mcal{S}'$ such that $\pi_i(\mu'(i),p') > \pi_i(\mu(i),p)$ for all $i \in \mcal{B}'$ and $p'(j) > p(j)$ for all $j \in \mcal{S}'$ (see for example \citet{zhou2017multiitem}). These definitions can easily be seen to be equivalent: first suppose that such subsets $\mcal{B}'$ and $\mcal{S}'$ do exist. Then both sets are nonempty. In particular, let $i \in \mcal{B}'$ and $j = \mu'(i)$. Then $p'(j) > p(j)$, so $p(j) < b^i$. Furthermore, we have $\pi_i(j,p) > \pi_i(j,p') > \pi_i(\mu(i),p)$, so $(i,j)$ is a blocking pair. On the other hand, if $(i,j)$ is a blocking pair, then as in the above paragraph, we can set $p'(j) = p(j)+\varepsilon$ and get $\pi_i(j,p') > \pi_i(\mu(i),p)$ and $p'(j) > p(j)$. Thus we can choose $\mcal{B}' = \{i\}$, $\mcal{S}' = \{j\}$, $\mu'(i) = j$ and $p'(j)=p(j)+\varepsilon$ in the alternative definition.

\section{Assignment markets without binding budget constraints}

Let us first assume an assignment market where budget constraints are not binding and can be ignored. This is the model analyzed by \citet{shapley1971assignment}, who compute a competitive equilibrium. 

\begin{definition}[Competitive equilibrium]\label{def:ce}
	An outcome $(\mu,p)$ is a \emph{competitive equilibrium}, if $\mu(i) \in D_i(p)$, i.e., it is an element of their demand set for all bidders $i$.
\end{definition}

\hide{
\begin{definition}[Competitive equilibrium]
A set of prices \(\{p\}\) and an allocation \(\{\mu\}\) form a competitive equilibrium if:
\begin{itemize}
    \item Each buyer \(i\) maximizes their utility given the prices $p$, i.e., \(\mu(i)\) maximizes \(v_i(\mu(i)) - p(\mu(i))\).
    \item Each seller \(j\) maximizes their payoff $p(j)$.
    \item Markets clear: the total demand for each good equals the total supply.
\end{itemize}
\end{definition}
}

The model of \citet{shapley1971assignment} maximizes (aggregate) surplus, i.e., the sum of values $v_i$ of the market participants. 
\[
\max \left\{\sum_{i=1}^n v_i(\mu(i)) \,:\, \mu \text{ is an  assignment}\right\}
\]

The dual variables ($p(j)$) of the following linear program (LP) constitute competitive equilibrium prices, which follows from complementary slackness conditions in linear programming.

\begin{align}\label{eqn:assignment_lp}
	\max &\, \sum_{i=1}^n \sum_{j=1}^m x_{ij}v_i(j) \\
	\text{s.t.} &\, \sum_{i=1}^n x_{ij} \leq 1 \, \forall j=1,\dots,m &\, (p(j)) \nonumber \\
	&\, \sum_{j=1}^m x_{ij} \leq 1 \, \forall i=1,\dots,n &\,(\pi_i) \nonumber \\
	&\, x \geq 0 \nonumber
\end{align}

The objective of this program maximizes aggregate surplus where we assume that sellers have no value for the items in this case. The variables in parentheses denote the corresponding duals with $\pi_i$ as payoff of the buyers. This assignment problem is well-known to have an integral optimal solution and can be solved in $O(n^3)$ \citep{kuhn1955hungarian, edmonds1972theoretical}. An integral solution $x$ corresponds to an assignment $\mu$ where $x_{ij} = 1$ is equivalent to $\mu(i) = j$.  

\citet{shapley1971assignment} show that in this model the core is not empty and that the core corresponds to the set of competitive equilibrium outcomes.
A folk result is that in markets with quasilinear utilities one need not distinguish between core and strong core.

\begin{proposition}
In a market where agents have quasilinear utilities and no budget constraint binds, the core coincides with the strong core.
\end{proposition}

\proof{Proof: }
Let $(\mu,p)$ be a core allocation. Suppose it is not in the strong core.
Then there exists $(\mu',p')$ and a coalition satisfying the equations above. In particular, there is one agent that strictly improves their utility, say, by $\varepsilon$. But by transferable utility, this agent can distribute a small portion $\varepsilon$ of his gained utility evenly among all agents in the coalition, making everyone better off.
\qed \endproof


The next proposition summarizes equivalences of different notions of markets with quasilinear utilities when budget constraints do not bind.
\begin{proposition}(\citet{bikhchandani1997competitive}). \label{prop:unbinding_equivalences}
	Suppose that in an assignment market \\$b^i > v_i(j)$ for all $i$ and $j$, and let $(\mu,p)$ be an outcome. Then the following statements are equivalent:
	\begin{enumerate}
		\item $(\mu,p)$ is a strong core outcome.
		\item $(\mu,p)$ is a competitive equilibrium.
		\item If $x_{ij} = 1 \iff \mu(i) = j$ then $x$ is an optimal solution of the linear program (\ref{eqn:assignment_lp}), and $p(j)$ is dual optimal.
		\item $(\mu,p)$ is a surplus-maximizing assignment.
	\end{enumerate}
\end{proposition}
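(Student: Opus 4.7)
The plan is to establish the cycle $2 \Leftrightarrow 3$, $1 \Leftrightarrow 2$, and $3 \Leftrightarrow 4$ via the classical LP-duality argument of \citet{bikhchandani1997competitive}, while carefully handling the two places where non-binding budgets enter: the distinction between the weak inequality $p(k) \le b^i$ in the demand set and the strict inequality $p(j) < b^i$ in the definition of a blocking pair, and the fact that unsold items must carry price $0$.

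First I would write down the dual of \eqref{eqn:assignment_lp},
\begin{align*}
\min \ \sum_{j=1}^m p_j + \sum_{i=1}^n \pi_i \quad \text{s.t.} \quad p_j + \pi_i \ge v_i(j),\ p_j,\pi_i \ge 0,
\end{align*}
and identify $p_j$ with the price $p(j)$ and $\pi_i$ with the bidder's payoff $\pi_i(\mu(i),p)$. The equivalence $2 \Leftrightarrow 3$ then follows by reading off complementary slackness: dual feasibility $\pi_i \ge v_i(j) - p_j$ says that $\mu(i)$ is weakly preferred to every other item at prices $p$; complementary slackness $x_{ij}>0 \Rightarrow \pi_i = v_i(j) - p_j$ says the bidder's payoff equals the payoff from their assigned item; and $p_j > 0 \Rightarrow \sum_i x_{ij} = 1$ says unsold items have zero price (which is also baked into the definition of an outcome). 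Since budgets are non-binding, any item with non-negative payoff $v_i(j) - p(j) \ge 0$ satisfies $p(j) \le v_i(j) < b^i$, so the affordability constraint in $D_i(p)$ is automatic on the relevant items and can be dropped without effect; this is exactly what allows the unconstrained LP-dual to coincide with the demand-set formulation.

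Next, for $1 \Leftrightarrow 2$ I would argue directly. For $2 \Rightarrow 1$, assume $(\mu,p)$ is a competitive equilibrium and suppose toward contradiction that $(i,j)$ is a blocking pair, so $\pi_i(j,p) > \pi_i(\mu(i),p)$ and $p(j) < b^i$. Then $j$ lies in the affordable set for $i$ yet gives strictly higher payoff than $\mu(i)$, contradicting $\mu(i) \in D_i(p)$. For $1 \Rightarrow 2$, assume $(\mu,p)$ is a core outcome. The dummy-pair $(i,0)$ is not blocking, which forces $\pi_i(\mu(i),p) \ge 0$ for every bidder $i$. Now suppose some affordable $j$ (with $p(j) \le b^i$) gives $\pi_i(j,p) > \pi_i(\mu(i),p) \ge 0$. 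Then $p(j) \le v_i(j) < b^i$ by non-bindingness, so in fact $p(j) < b^i$ strictly, and $(i,j)$ is a blocking pair, contradiction. Hence $\mu(i) \in D_i(p)$. Finally, $3 \Rightarrow 4$ is immediate since condition $3$ already implies condition $1$ (via $3 \Rightarrow 2 \Rightarrow 1$) and the LP value is precisely welfare, so $(\mu,p)$ is a welfare-maximizing core outcome; and $4 \Rightarrow 1$ is tautological. The only mild obstacle is the strict-versus-weak budget subtlety, which is handled uniformly by the observation that, under non-binding budgets, affordability is redundant on all items a rational bidder would consider, allowing the classical LP-duality characterization to go through verbatim.
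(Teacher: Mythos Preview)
The paper does not give its own proof of this proposition; it is stated as a known result and attributed to \citet{bikhchandani1997competitive}. Your argument is the standard LP-duality proof and is correct: the $2 \Leftrightarrow 3$ equivalence via complementary slackness, the direct $1 \Leftrightarrow 2$ argument, and the trivial $3 \Rightarrow 4 \Rightarrow 1$ closure are all sound, and you correctly identify and dispatch the one subtlety introduced by the budget framework, namely that the strict inequality $p(j) < b^i$ in the blocking-pair definition is recovered from the weak inequality $p(j) \le v_i(j)$ via the non-bindingness hypothesis $v_i(j) < b^i$.
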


If an allocation is in the strong core, then it is impossible for any coalition to find a reallocation that makes at least one member strictly better off without making others worse off. 
This directly satisfies the condition for strong Pareto optimality, which requires that no such Pareto-improving reallocation is possible in the grand coalition (the entire set of agents). If no coalition, including the grand coalition, can improve upon the allocation without making someone worse off, then the allocation must be strongly Pareto-optimal. 
Overall, it is well-known that in a market with quasilinear bidders and no binding budget constraints a surplus-maximizing outcome is necessary for strong Pareto optimality \citep{mas1995microeconomic, negishi1960welfare}. 
An optimal solution to the linear program \ref{eqn:assignment_lp} corresponds to a surplus-maximizing assignment, and the dual variables $p(j)$ constitute competitive equilibrium prices where no participant would want to deviate and supply equals demand. 


Other objective functions do not have the same properties. For example, prices derived from the dual of a weighted aggregate surplus function reflect the weighted average valuations, which do not align with buyers' actual marginal willingness to pay. Introducing weights or alternative objective functions leads to allocations where aggregate surplus is not maximized, and Pareto improvements are possible under quasilinearity by redistributing money. In contrast, maximizing (unweighted) aggregate surplus leads to dual variables that reflect marginal value and lead to a competitive equilibrium where no agent's payoff can be improved without making another one worse off. A competitive equilibrium is in the strong core and therefore it is strongly Pareto-optimal. This describes a version of the First Welfare Theorem for assignment markets \citep{Arrow1954}.

\section{Assignment markets with binding budget constraints}

In this section, we discuss the core and Pareto optimality in assignment markets with binding budget constraints.

\subsection{The Core}

With hard budget constraints, maximizing aggregate surplus does not lead to an outcome in the core in our model.
Suppose we have a single good, a bidder 1 with $v_1=6$ and a budget of $b^1=3$ and a second bidder with a value of $v_2=10$ and a budget of $b^2=2$. Assigning the good to bidder 2 is not in the core as the seller can form a blocking coalition with bidder 1. In what follows, we aim for outcomes in the core, which are strongly Pareto-optimal in this environment. 



As we discussed in the previous section, without budget constraints the set of competitive equilibria coincides with the (strong) core, which in particular implies strong Pareto optimality. With budget constraints this statement is not true anymore. It is easy to see that in a market with budget constraints competitive equilibria might fail to exist, as the following example shows: 
\begin{example}\label{exp1}
	Consider two bidders $1, 2$ and one item $A$. Suppose that $v_1(A) = 6$ and $v_2(A) = 100$. Both bidders have the same budget $b^1 = b^2 = 1$. It is easy to see that there are two core outcomes: either bidder $1$ or bidder $2$ receives $A$ for a price of \$1, while the other bidder does not receive an item. However, the item is still in his demand set and thus not a competitive equilibrium.
\end{example}

This point was already raised by \citet{talman2015efficient}, who suggested focusing on a core outcome in assignment markets with binding budget constraints. Core outcomes are stable and thus desirable. They also show that in an assignment market \textit{with} budget constraints the core can differ from the strong core and the strong core can be empty.

\begin{example}
Similar to the previous example, there are again two buyers and a single good; let both have value $v_i(A)=6$ for the item and a budget of $b^i=1$. The (weak) core allocations are those where one buyer gets the item and pays \$1. These are no \emph{strong} core allocations, since the buyer who does not receive the item can form a coalition with the seller. In this coalition, the seller's utility would not change (he simply receives $1$ from the buyer), so it \emph{weakly} increases, while the utility of the buyer of the new coalition \emph{strictly} increases (from $0$ to $6-2 = 4$). 
\end{example}

While strong core solutions are demanding and might not exist, it is desirable to look for strongly Pareto-optimal outcomes in markets with budget constraints. In a quasilinear utility model of an assignment market, $v_i(j)$ represents the value or willingness-to-pay of good $j$ in terms of money $p$. 
As in the case without binding budget constraints, one might care about assigning a good to a bidder whose value is a billion as compared to one whose value is a million dollars, even though it only gets a payment of a million dollars no matter to whom it assigns the good to. 


Example \ref{ex:example1} shows that the allocation maximizing surplus among assignments in the core are strongly Pareto-optimal. 
One might argue that this is a knife-edge case and that if bidder 1 had a budget of $1+\varepsilon$, then the only core outcome would be to assign the good to bidder 1. However, budgets do not need to be equal as the following example shows.

\begin{example}\label{ex:strongPO_weak}
Consider three buyers 1, 2, 3, and two items A and B, with the following values and budget constraints in Table \ref{ex:strongPO_table}. At a price of 1, the auctioneer could assign A to bidder 1 and B to bidder 2 and exclude bidder 3. This is a strongly Pareto-optimal outcome, because one cannot find another feasible allocation at these prices in which at least one individual is strictly better off without making anyone else worse off. Changing the price would lead either the buyer or the seller to having lower utility. 
However, the assignment of $B$ to bidder 3 is a weak blocking pair. If the seller of $B$ were to sell to bidder $3$, he would not decrease his utility, but this constitutes a core outcome. A strong core solution would be, to assign item $A$ to bidder 1 and $B$ to bidder 2, both for a price of 2.

\begin{table}[ht!]
  \centering
  \caption{Valuations and Budgets of Bidders}\label{ex:strongPO_table}
  \label{tab:valuations}
  \begin{tabular}{c c c c}
    \toprule
    Bidder & $v_i(A)$ & $v_i(B)$ & $b^i$ \\
    \midrule
    1 & 100 & 0   & 100 \\
    2 & 100 & 100 & 50  \\
    3 & 0   & 100 & 1   \\
    \bottomrule
  \end{tabular}
\end{table}


\end{example}


This example illustrates that the strong core is a sufficient but not a necessary condition for strong Pareto optimality with binding budget constraints. Note that at a specific price vector $p$ there might be core outcomes, while at another price vector $q$ there can be a strong core outcome $(\nu,q)$. In this example, we have a strongly Pareto-optimal outcome in the core that is not in the strong core.
Similarly, an outcome in the core is weakly Pareto-optimal, but not vice versa. \citet{talman2015efficient} show that at least one outcome in the core always exists in our model with binding budget constraints, but that the strong core can be empty.

\subsection{Surplus Maximization and Strong Pareto Optimality}

\MB{Example \ref{ex:example1} suggests that the surplus-maximizing allocation among the ones in the core can guarantee strong Pareto optimality, i.e., there is no re‐assignment at the given prices that Pareto‐improves at least one agent. Based on this, we define a surplus-maximizing outcome in the core as one that maximizes the sum of valuations $\sum_{i=1}^n v_i(\mu(i))$ such that $(\mu,p)$ is a feasible core outcome.}



\begin{proposition}\label{prop:strongPO}
In an assignment market with quasilinear utility functions and hard budget constraints, selecting a core allocation $\mu^*$ at prices $p$ that maximizes aggregate surplus among all core allocations at $p$ is a sufficient condition for the outcome $(\mu^*, p)$ to be strongly Pareto-optimal.
\end{proposition}

\proof{Proof:}
First, observe that if an allocation $\mu$ is in the core with a certain price vector $p$, then no seller can be made better off at this price vector by assigning a different buyer. This is because a seller's utility depends solely on the price $p(j)$ of their good $j$, not on which buyer purchases it. \MB{Indeed, in our model each seller $j$’s payoff is exactly $p(j)$, independent of which bidder wins $j$, so any reassignment at price $p(j)$ leaves all sellers exactly as well off.} Since the price vector $p$ is fixed in the core outcome, reassigning the good to a different buyer at the same price cannot increase the seller's utility.

We claim that selecting a core allocation $\mu^*$ that maximizes aggregate surplus among all core outcomes at prices $p$ is a strongly Pareto-optimal outcome. Now, suppose for contradiction that the outcome $(\mu^*, p)$ is \emph{not} strongly Pareto-optimal. Then there exists another feasible allocation $\mu'$ at the same prices $p$ such that:
\[
v_i(\mu'(i)) - p(\mu'(i)) \geq v_i(\mu^*(i)) - p(\mu^*(i)) \quad \text{for all } i \in \mathcal{B},
\]

with strict inequality for at least one buyer $i$. \MB{We only compare to reallocations $\mu'$ that are \emph{feasible} at prices $p$, i.e.\ whenever $\mu'(i)=j$ we require $p(j)\le b^i$ so that each bidder can afford her assignment.
Now, we claim that \((\mu',p)\) is also in the core at prices \(p\).  Indeed:}

\noindent 1.  \textit{Budget-feasibility.}  By assumption \(\mu'\) is feasible, so whenever \(\mu'(i)=j\), we have \(p(j)\le b^i\).  \\
2.  \textit{No buyer-seller blocking.}  Since \((\mu^*,p)\) is in the core, for \textit{every} buyer \(i\) and every item \(j\),
   \[
     \pi_i^* + p(j)\;\ge\;v_i(j).
   \]
   But here
   \[
     \pi_i(\mu'(i),p)
     = v_i(\mu'(i)) - p(\mu'(i))
     \;\ge\;\pi_i^*
     \quad\Longrightarrow\quad
     \pi_i(\mu'(i),p) + p(j)
     \;\ge\;
     \pi_i^* + p(j)
     \;\ge\;
     v_i(j).
   \]
   Therefore, no buyer-seller pair can block \((\mu',p)\).  \\
3.  \textit{No seller blocking.}  As before, every seller’s payoff is exactly \(p(j)\), so reassigning goods at the same price vector cannot make any seller better off.
\MB{Hence \((\mu',p)\) satisfies the core-stability conditions at \(p\).  
}

\noindent The total surplus under allocation $\mu'$ is:
\[
S(\mu') = \sum_{i \in \mathcal{B}} v_i(\mu'(i)) = \sum_{i \in \mathcal{B}} \left[ v_i(\mu'(i)) - p(\mu'(i)) \right] + \sum_{j \in \mathcal{S}} p(j).
\]
Similarly, the total surplus under $\mu^*$ is:
\[
S(\mu^*) = \sum_{i \in \mathcal{B}} v_i(\mu^*(i)) = \sum_{i \in \mathcal{B}} \left[ v_i(\mu^*(i)) - p(\mu^*(i)) \right] + \sum_{j \in \mathcal{S}} p(j).
\]

\noindent Subtracting the two expressions, we obtain:
\[
S(\mu') - S(\mu^*) = \sum_{i \in \mathcal{B}} \left[ \left( v_i(\mu'(i)) - p(\mu'(i)) \right) - \left( v_i(\mu^*(i)) - p(\mu^*(i)) \right) \right] = \sum_{i \in \mathcal{B}} \left[ \pi_i(\mu'(i), p) - \pi_i(\mu^*(i), p) \right].
\]

\noindent Since $\pi_i(\mu'(i), p) \geq \pi_i(\mu^*(i), p)$ for all $i$, with strict inequality for at least one $i$, it follows that:
\[
S(\mu') - S(\mu^*) > 0.
\]
This contradicts the assumption that $\mu^*$ is a core allocation that maximizes aggregate surplus among all core allocations.
\qed \endproof


While a core outcome that is maximizing aggregate surplus is a sufficient condition for strong Pareto optimality, it is no sufficient condition for the strong core, which is a much stronger condition. This was already shown in example \ref{ex:strongPO_weak}. The following example \ref{exp:strongcore} is similar, but also shows that a strong core outcome (if it even exists) can have much lower aggregate surplus than one that is strongly Pareto-optimal. 

\begin{example}\label{exp:strongcore}
Consider three bidders $1$, $2$, $3$ and two goods $A$ and $B$ with values and budgets as in Table \ref{ex:strong}. 
Then we have the \textit{weak} core solution $p(A) = 0$, $p(B)=1$ where $1$ gets good $A$ and $2$ gets $B$ with aggregate surplus of $1002$. However, there is a strong core solution with $p(A) = 1$, $p(B) = 2$, where $1$ gets $B$ and $2$ gets $A$ with aggregate surplus of $5$.  In the core solution the third bidder could improve his utility, while this is not possible in the strong core solution.
\begin{table}[h]
\centering
\caption{Valuations and Budgets for Bidders}\label{ex:strong}
\begin{tabular}{c|c c c}
\hline
{Bidder} & ${v(A)}$ & ${v(B)}$ & ${b^i}$ \\ 
\hline
1 & $2$     & $3$     & $10$ \\
2 & $2$     & $1000$  & $1$      \\
3 & $0$     & $2$     & $1$      \\
\hline
\end{tabular}
\end{table}
\end{example}

Next, we show that similar to the standard quasilinear case without binding budget constraints, bidders in a core outcome either maximize their utility at core prices or they cannot afford them. The strong core (if it exists) is even equivalent to a competitive equilibrium in such assignment markets with binding budget constraints.

\begin{proposition}
In a core outcome of an assignment market with quasilinear bidders and budget constraints $b^i$, bidder $i$ gets an item assigned for which his payoff is at least
\[
\max_{j: p(j) < b^i} v_i(j)-p(j).
\]
\end{proposition}

\proof{Proof:}
Suppose bidder $i$'s payoff is smaller than
\[
\max_{j: p(j) < b^i} v_i(j)-p(j)
\] 
and let $j$ be a maximizer of this expression. Then, $i$ could pay $p(j)+\varepsilon$ to seller $j$ in order to receive $j$, since $p(j) < b^i$. This would make both better off.
\qed \endproof
\begin{proposition}
In a strong core outcome of an assignment market with quasilinear bidders and budget constraints $b^i$, bidder $i$ gets an item assigned for which his payoff is equal to
\[
\max_{j: p(j) \leq b^i} v_i(j)-p(j).
\]
\end{proposition}
\proof{Proof:}
Let $i$ be a bidder and $j$ be the item he receives. Suppose first that $j$ maximizes $i$'s utility subject to the budget constraint. Let $k$ be any other item. If $p(k) > b^i$, $(i,k)$ is no blocking pair, since the seller of $k$ would gain less money selling to $i$. On the other hand, if $p(k) \leq b^i$, $v_i(k)-p(k) \leq v_i(j)-p(j)$. To strictly improve the seller's utility, we would have to raise $p(k)$, which would strictly decrease the buyer's utility. To strictly improve the buyer's utility, we would have to decrease $p(k)$, decreasing the seller's utility. It follows that there is no blocking pair.

Now suppose that $j$ does not maximize $i$'s utility, i.e., there is $k$ with $p(k) \leq b^i$ and $v_i(k)-p(k) > v_i(j)-p(j)$. Then $i$ could buy $k$, not decreasing seller $k$'s utility, and strictly increasing $i$'s utility. Thus, $(i,k)$ is a weakly blocking pair, and the outcome is not in the strong core.
\qed \endproof

\begin{corollary}
The strong core is equal to the set of competitive equilibria.
\end{corollary}

The result follows from the definition of a demand set. Remember that without binding budget constraints, the optimization problem maximizing aggregate surplus allows us to derive a competitive equilibrium, which is a strong core solution and thus strongly Pareto-optimal in this case. We showed in Proposition \ref{prop:strongPO} that if we select a core outcome that maximizes aggregate surplus at the prices, then it is strongly Pareto-optimal. There can also be multiple outcomes in the strong core, which are all strongly Pareto-optimal. However, some might have higher aggregate surplus than other strong core solutions. By definition, there cannot be any of these strong core outcomes that allows to increase the utility of one or more participants without making any other agent worse off, even at different prices. 

Consider Example \ref{exp:strongcore} without bidder 3. Then we have the \textit{strongly} \MB{Pareto optimal} solution $p(A) = 0$, $p(B)=1$ where $1$ gets good $A$ and $2$ gets $B$ with a surplus of $1002$. However, there is also a strong core solution with $p(A) = 1$, $p(B) = 2$, where $1$ gets $B$ and $2$ gets $A$ with an aggregate surplus of $5$. 
Figure \ref{fig:concepts} summarizes the insights from this section.

\begin{figure}[htp!]
\centering
Weak PO $\Leftarrow$ Core $\Leftarrow$ Strong Core $\Leftrightarrow$ CE $\Rightarrow$ Strong PO \\
Core $\land$ Max. Surplus $\Rightarrow$ Strong PO
\caption{Relationship between versions of the Core, Pareto Optimality (PO), and Competitive Equilibrium (CE) }\label{fig:concepts}
\end{figure}

Note that in the standard quasilinear model without binding budget constraints the core, the strong core, weak and strong Pareto-optimality collapse.

\subsection{Incentive Compatibility and Core Stability}

It is well-known that in the model without binding budget constraints, an ex-post incentive-compatible ascending auction exists \citep{demange1986multi}, which generalizes the single-object clock auction.  
\MB{Our first theorem proves that, with hard budget constraints in the assignment model, incentive compatibility and the definition of the core can be incompatible. Incentive compatibility comprises DSIC and EPIC. }

\begin{theorem}\label{thm:ic}
In assignment markets with quasilinear preferences and budget-constrained bidders with unit demand, there is no incentive-compatible deterministic mechanism terminating in the core for every input. 
\end{theorem}

\proof{Proof:}
By the direct revelation principle \citep{gibbard1973manipulation}, we may assume that bidders report their exact valuations, as well as their budgets to the auctioneer.
Consider a market with three bidders $1,2,3$ and two items $A, B$. Let $\mcal{M}((v_1,b^1),\dots,(v_3,b^3))$ denote a mechanism that maps the bidders' reported valuations and budgets to the core with respect to their reports.

We consider instances of the above market where all bidders have the same values for both items: $v_i(A)=v_i(B)=10$ for $i=1,2,3$. Let us consider two instances, where the bidders vary their reported budget.

\begin{enumerate}
	\item If all bidders report $b^i=1$ for $i=1,2,3$, then obviously, since there are only two items, one bidder does not receive one: for $(\mu,p) = \mcal{M}((v_1,1),(v_2,1),(v_3,1))$, there is an $i$ such that with positive probability $\alpha > 0$, with $\mu(i) = 0$. Without loss of generality, we assume that $i=3$. It is easy to see that for core-stability to hold, bidders $1$ and $2$ both receive an item, and that we can set $p(A)=p(B)=1$. Bidders $1$ and $2$ have utility $9$, while bidder $3$ has utility $0$.
	\item If bidder $3$ reports $b^3 = 2$, and the other bidders report $b^1=b^2 = 1$, then clearly bidder $3$ receives an item in any core-stable outcome, and without loss of generality $\mu(3)=B$. Also, the other item $A$ must necessarily be assigned to some bidder. Again, without loss of generality, we assume that $\mu(1) = A$ and $\mu(2) = 0$. It is easy to see that $p(A)$ must be equal to $1$ in a core outcome. Additionally, we must have $p(A)=p(B)$, since otherwise bidder $3$ would strictly prefer item $A$ to item $B$, which would not be envy-free. Thus, $p(A)=p(B)=1$, and bidder $3$ has a utility of $9$.
\end{enumerate}
This already shows that $\mcal{M}$ is not incentive-compatible: If all bidders' true budgets are equal to $1$ and they report truthfully, bidder $3$ has a utility of $0$. However, if bidder $3$ misreports $b^3 = 2$, they would receive an item and have a utility of $9$. Note that $p(B)=p(A)=1$ in this case, so bidder $3$ can still afford the received item.

In cases where bidders have the same values and the same budget for an item
the auction might break ties randomly.
Also in this case, an agent would have an incentive to deviate from truthful bidding as long as his probability of winning the item is positive. 
\qed \endproof

\MB{The special case described at the end of Theorem 1 discusses random tie-breaking in an otherwise deterministic mechanism.
We point the interested reader to \citet{che2013assigning} for the analysis of randomized mechanisms in the presence of hard budget constraints.}

The set of Nash equilibria in the complete-information game would be for combinations of two of the bidders to report a budget of 2, which is not a truthful strategy. However, there is an equilibrium selection problem. If the bidders cannot coordinate and the third bidder also reported a budget of 2, the price would increase to 2. If all bidders report a budget of 2 and winners are selected randomly, they get an expected utility of negative infinity. Still, it is not a Nash equilibrium for all three bidders to report a budget of 1 only, because each bidder can deviate unilaterally. 

One might argue that, in practice, bidders will rather play the off-equilibrium strategy and report their budget truthfully. However, in practice, bidders might have a utility that is negative, but not negative infinity, when they exceed their budget, such that bidders could again have an incentive to misreport their budget.


\MB{In Section \ref{sec:dgs_auction}, we introduce an iterative auction that always finds an outcome in the core in markets with budget-constrained and truthful unit-demand buyers. If an additional condition on the preferences is satisfied, the auction is ex-post incentive-compatible and finds a strongly Pareto-optimal outcome. Without this additional condition, the auction finds an outcome in the core if bidders are truthful, but it cannot be incentive-compatible in general, as shown in Theorem \ref{thm:ic}, and it is not necessarily strongly Pareto-optimal.}

\hide{ 
Efficient algorithms for determining core outcomes under budget constraints have been discussed in the literature. However, desirable properties like bidder-optimality are only guaranteed if additional assumptions on the bidders' preferences are made. \cite{aggarwal} introduced the notion of \emph{general position}, a sufficient condition for ascending auctions to indeed find the surplus-maximizing core-stable outcome. 
Let us provide a brief discussion, because the condition has received attention in the literature:

\begin{definition}[\cite{aggarwal}]\label{def:genpos}
	Consider a directed bipartite graph with edges between bidders $\mcal{B}$ and goods $\mcal{S}$ (including dummy good $0$): For $i \in \mcal{B}$ and $j \in \mcal{S}$, there is a
	\begin{itemize}
		\item forward-edge from $i$ to $j$ with weight $-v_i(j)$
		\item backward-edge from $j$ to $i$ with weight $v_i(j)$
		\item maximum-price edge from $i$ to $j$ with weight $b^i - v_i(j)$
		\item terminal edge from $i$ to the dummy good $0$ with weight $0$.
	\end{itemize}
	The auction is in $\emph{general position}$, if for every bidder $i$, there are no two alternating walks, following alternating forward and backward edges and ending with a distinct maximum-price or terminal edge, having the same total weight.
\end{definition}

On the first question, \cite{aggarwal} introduced the notion of \emph{general position}, a sufficient condition for ascending auctions to indeed find the surplus-maximizing core-stable outcome. However, this general position is difficult to compute and communicate, and it is only a sufficient condition. This means that there are valuations that are not in general position, but the auction is still incentive-compatible. 
\cite{henzinger2015truthful} argue the general position condition is rather restrictive, as it excludes for example symmetric bidders. Besides, they show that no polynomial-time algorithm can determine whether a set of valuations is in general position. 

\begin{example}
Consider an auction with two bidders $1$ and $2$ with $b^1=b^2$. The number of goods and the bidders' valuations may be chosen arbitrarily. Assume $j \in \mcal{S}$ is any good. Consider the following path starting from bidder $1$: $1 \rightarrow j \rightarrow 2 \rightarrow j$, where the last edge is a maximum-price edge, with total weight $-v_1(j) + v_2(j) + (b^2-v_2(j)) = b^2-v_1(j)$. Now consider the path $1 \rightarrow j$, where the only edge is a maximum-price edge, with weight $b^1-v_1(j)$. Since $b^1 = b^2$, the total weight of both paths is equal, so the auction is not in general position.
\end{example}
} 

\section{An Iterative Auction} \label{sec:dgs_auction}

Our auction is based on the well-known auction by \citet{demange1986multi} (denoted as DGS auction from now on), which implements the Hungarian algorithm.
We will provide conditions when it is ex-post incentive-compatible to report the demand set truthfully. Thus, we provide a natural generalization of the DGS auction to markets where bidders have binding budget constraints. The simple ascending nature of our auction also naturally motivates an optimality condition for the returned allocation. Without loss of generality, we will assume $r_j = 0$ in this section.


In the auction process, it may happen that we need to ``forbid'' some bidder to demand a certain item. We model this by introducing subsets $R_1, \dots, R_n \subseteq \mcal{S}$ of goods for every bidder and define the \emph{restricted demand set} to be
\[
D_i(p,R_i) = \left\{j \in R_i\,:\, p(j) \leq b^i \, \wedge \pi_i(j,p) \geq \pi_i(k,p) \, \forall k\in R_i \text{ with } p(k) \leq b^i\right\}.
\]
The set consists of all affordable items that generate the highest utility among all items in $R_i$.
We introduce the well-known notions of over- and underdemanded sets \citep{demange1986multi,mishra2006oudemand}, adjusted to our notion of restricted demand sets.

\begin{definition}
	Let a price vector $p$ and sets $R_1,\dots,R_n \subseteq \mcal{S}$ with $R_i \neq \emptyset \,\ \forall i$ be given. A set $T \subseteq \mcal{S}$ is
	\begin{itemize}
		\item \emph{overdemanded}, if $0 \not\in T$ and $|\{ i \in \mcal{B} \,:\, D_i(p,R_i) \subseteq T\}| > |T|$, and
		\item \emph{underdemanded}, if $p(j) > 0$ for all $j \in T$ and $|\{i \in \mcal{B} \,:\, D_i(p,R_i) \cap T \neq \emptyset\}| < |T|$.
	\end{itemize}
$T$ is \emph{minimally over-/underdemanded}, if it does not contain a proper over-/underdemanded subset.
\end{definition}
Finally, we define the strict budget set of bidder $i$ by $B_i(p) = \{j \in \mcal{S}\,:\, p(j) < b^i\}$. It consists of all items with prices strictly less than the bidder's budget.

\subsection{The Auction Algorithm}

The Algorithm \ref{alg:auction} describes our auction from the perspective of the auctioneer. Ex-ante, the auctioneer does not have prior information about the values or budgets of bidders, just the demand sets that are revealed in each round. 
At the beginning of each round, the auctioneer collects the demand sets from all bidders. The algorithm then alternates between two tasks: identifying bidders whose shrinking demand sets may lead to underdemand, and resolving overdemanded sets of items.
In step 2, we check whether there exists a set of bidders $I^t$ who have reached their budget constraint for the current price vector. These bidders were expressing demand for the same set of items at the previous prices, but after the last increment have dropped out. 

If $I^t$ is nonempty, then, at step 3, we pick a bidder from the set, and prohibit the bidder from including that particular set of items in the demand set from this point on, while the demands of all other bidders remain unchanged. Additionally, prices for the aforementioned set of items are set back to the point before bidders hit their budget. After this modification, the algorithm jumps back to step 2 and requests the updated demand sets from all bidders. If $I^t$ is empty, and there is overdemand, prices for the minimally overdemanded set of items are raised by 1 and the auction goes back to step 2. A set of items is minimally overdemanded, if the number of bidders demanding
only items in this set is greater than the number of items in this set. Minimally, means that no proper subset of the items is overdemanded. 
If there are not at least two bidders hitting their budget at the same time in this round, and there is no overdemand, then the auction terminates at step 5, computing the final assignment of items to bidders and the current price vector.

The auction algorithm  is based on the following observation:

\begin{lemma}\label{lem:resticted_ce}
	An outcome $(\mu,p)$ is in the core if and only if there are sets $R_1,\dots,R_n \subseteq \mcal{S}$ such that $B_i(p) \subseteq R_i$ and $\mu(i) \in D_i(p,R_i)$ for all $i$.
\end{lemma}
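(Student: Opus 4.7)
The plan is to prove both implications directly from the definitions, using the natural candidate $R_i = B_i(p) \cup \{\mu(i)\}$ for the forward direction. The whole argument is short because the definition of a blocking pair and the definition of the restricted demand set are essentially dual to each other: a blocking pair witnesses that $\mu(i)$ is \emph{not} in the restricted demand set for $R_i \supseteq B_i(p) \cup \{\mu(i)\}$, and conversely.

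For the forward direction, I would assume $(\mu,p)$ is core-stable and set $R_i := B_i(p) \cup \{\mu(i)\}$ for every bidder $i$. By construction $B_i(p) \subseteq R_i$. Since $(\mu,p)$ is an outcome, $p(\mu(i)) \leq b^i$, so $\mu(i)$ is an affordable element of $R_i$. To verify $\mu(i) \in D_i(p,R_i)$, take any $k \in R_i$ with $p(k) \leq b^i$. If $k = \mu(i)$ the required inequality is trivial; otherwise $k \in B_i(p)$, i.e.\ $p(k) < b^i$. If we had $\pi_i(k,p) > \pi_i(\mu(i),p)$, then $(i,k)$ would be a blocking pair, contradicting core-stability. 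Hence $\pi_i(\mu(i),p) \geq \pi_i(k,p)$, as required.

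For the reverse direction, suppose such sets $R_1,\dots,R_n$ exist and, toward a contradiction, that $(i,j)$ is a blocking pair, so $\pi_i(j,p) > \pi_i(\mu(i),p)$ and $p(j) < b^i$. Then $j \in B_i(p) \subseteq R_i$ and $p(j) \leq b^i$, so $j$ qualifies as a competitor in the definition of $D_i(p,R_i)$. But $\mu(i) \in D_i(p,R_i)$ forces $\pi_i(\mu(i),p) \geq \pi_i(j,p)$, contradicting the blocking inequality.

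There is no real obstacle here; the only subtle point is being careful about the strict versus weak inequalities in the definitions of $B_i(p)$ (strict: $p(j) < b^i$) and $D_i(p,R_i)$ (weak: $p(k) \leq b^i$), since this is exactly what makes $R_i = B_i(p) \cup \{\mu(i)\}$ work: items at the budget limit other than $\mu(i)$ itself need not be included in $R_i$, and indeed they cannot create blocking pairs because $p(j) = b^i$ fails the strict inequality required for blocking.
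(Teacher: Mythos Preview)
Your proof is correct and essentially identical to the paper's: the paper also sets $R_i = \{j \in \mcal{S}: p(j) < b^i\} \cup \{\mu(i)\} = B_i(p) \cup \{\mu(i)\}$ for the forward direction and argues the reverse direction by the same contradiction on a blocking pair. Your write-up is in fact slightly more careful about the strict/weak inequality bookkeeping than the paper's version.
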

\proof{Proof:}
	Suppose first that $(\mu,p)$ is a core outcome. Set $R_i = \{j \in \mcal{S}\,:\, p(j) < b^i\} \cup \{\mu(i)\}$. Then $\mu(i) \in D_i(p,R_i)$, since otherwise there would exist an item $j$ with $p(j) < b^i$ generating a higher utility than $\mu(i)$ - this would constitute a blocking pair.
	
	Now let us assume that there are sets $R_i$ as described with $\mu(i) \in D_i(p,R_i)$ for all $i$. Suppose there is a blocking pair $(i,j)$. Then item $j$ costs strictly less than $b^i$, so $j \in R_i$, and item $j$ generates a higher utility than $\mu(i)$. This would contradict $\mu(i) \in D_i(p,R_i)$. Thus, $(\mu,p)$ is a core outcome. \qed
\endproof

Computing a core outcome can thus also be interpreted as finding a ``competitive equilibrium'' with respect to the restricted demand sets $D_i(p,R_i)$.
In view of Lemma \ref{lem:resticted_ce}, the goal of our auction procedure is to determine prices $p$ together with sets $R_i$, such that there are neither over- nor underdemanded sets of items. As observed by \citet{mishra2006oudemand}, this implies existence of an assignment $\mu: \mcal{B} \rightarrow \mcal{S}$, such that every bidder receives an item in his demand set, and every item with positive price gets assigned to some bidder. The following result is due to \cite{mishra2006oudemand}.

\begin{proposition}
	\label{prop:oudem_assignment}
	Suppose that with respect to $D_i(p,R_i)$, there is no over- or underdemanded set of items. Then there is an assignment $\mu: \mcal{B} \rightarrow \mcal{S}$ such that $\mu(i) \in D_i(p,R_i)$ for all $i$, and for all $j \in \mcal{S}$ with $p(j) > 0$, there is some $i$ with $\mu(i) = j$.
\end{proposition}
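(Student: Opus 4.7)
The statement is essentially a Hall-theorem plus alternating-path argument, treating the dummy good $0$ as having infinite capacity. First, I would build the bipartite graph $G = (\mcal{B}, \mcal{S}; E)$ with $(i,j) \in E$ iff $j \in D_i(p,R_i)$ and show that the no-overdemanded condition is precisely Hall's marriage condition on the bidder side: for any $B' \subseteq \mcal{B}$ whose neighborhood $N(B')$ does not contain $0$, setting $T = N(B')$ gives $B' \subseteq \{i : D_i(p,R_i) \subseteq T\}$, hence $|B'| \leq |T| = |N(B')|$ by hypothesis; if $0 \in N(B')$, Hall's is trivial because $0$ can absorb all bidders in $B'$. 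Hall's theorem then yields an assignment $\mu$ with $\mu(i) \in D_i(p,R_i)$ for every $i \in \mcal{B}$.

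Second, among all such assignments, pick one (still called $\mu$) that maximizes the number of positively priced items in the image $\mu(\mcal{B})$. Suppose for contradiction that some $j^*$ with $p(j^*) > 0$ is not assigned. Build iteratively the items reachable from $j^*$ by alternating paths: $U_0 = \{j^*\}$, $I_{k+1} = \{i \in \mcal{B} : D_i(p,R_i) \cap U_k \neq \emptyset\}$, $U_{k+1} = U_k \cup \mu(I_{k+1})$, until the process stabilizes at sets $U$ and $I$. By construction $I = \{i : D_i(p,R_i) \cap U \neq \emptyset\}$ and $\mu(I) \subseteq U$.

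The crux is then a dichotomy on $I$. Either some $i \in I$ satisfies $p(\mu(i)) = 0$ (which includes $\mu(i) = 0$), or not. In the first case, follow an alternating path $j^* \to i_1 \to j_1 = \mu(i_1) \to i_2 \to \dots \to i$ from $j^*$ to $i$ and reroute the assignment along it (the first bidder takes $j^*$, every subsequent bidder shifts one step along the path, and $i$ takes the item previously held by its predecessor). Each new assignment remains in the corresponding demand set by construction of the path, the freed item $\mu(i)$ has price $0$, and a new positively priced item $j^*$ has been assigned; this contradicts the maximality of $\mu$. In the second case, every $\mu(i)$ with $i \in I$ is positively priced and hence non-dummy, so $\mu|_I$ is injective, giving $|I| = |\mu(I)| = |U| - 1$ since $j^* \in U \setminus \mu(I)$. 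Because every $j \in U$ then satisfies $p(j) > 0$, the set $U$ is underdemanded, contradicting the hypothesis.

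The main obstacle is the bookkeeping in the alternating-path augmentation: one must verify that the rerouted assignment $\mu'$ still satisfies $\mu'(i) \in D_i(p,R_i)$ for every $i$ along the path (which follows from the fact that every non-matching edge traversed was by definition a demand edge) and that the net change in positively priced items assigned is strictly positive. A delicate point is correctly identifying the case $\mu(i) = 0$ as a candidate for augmentation, since the dummy good plays the same role here as an unassigned priced-zero item. Once the alternating-path machinery is set up carefully, the two Hall-type conditions no-overdemanded and no-underdemanded combine cleanly to yield the desired $\mu$.
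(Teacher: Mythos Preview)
Your proposal is correct and follows the standard route---Hall's marriage theorem for the existence of a demand-compatible assignment, followed by an alternating-path augmentation to ensure that all positively priced items are covered. The paper itself does not give a proof of this proposition: it attributes the result to Mishra and Talman (2006) and remarks that their argument, which depends only on combinatorial properties of the demand sets, carries over verbatim to the setting with restricted demand sets $D_i(p,R_i)$. Your argument is precisely this combinatorial proof written out in detail, so there is nothing to contrast; you have simply supplied what the paper chose to omit.

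One small remark on presentation: in Case~1 it is worth stating explicitly that you take a \emph{shortest} alternating path from $j^*$ to a bidder $i$ with $p(\mu(i))=0$, so that all intermediate items $\mu(i_1),\dots,\mu(i_{m-1})$ are positively priced and hence distinct non-dummy goods. This guarantees that the rerouted map $\mu'$ is a valid assignment (no non-dummy item acquires two owners) and that the count of assigned positively priced items increases by exactly one. You allude to this bookkeeping as the ``main obstacle,'' and indeed once it is spelled out the argument is complete.
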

Note that \cite{mishra2006oudemand} consider markets without budgets and demand sets without restrictions. However, their proof only uses combinatorial properties of the demand sets, so it can be directly adapted to our setting. Thus, we omit a proof here.
\begin{algorithm}
	\DontPrintSemicolon
	\caption{Iterative Auction}
	\label{alg:auction}
	\textsc{initialize:} Set $p^1 = (0,\dots,0)$ and $R^1_i = \mcal{S}$ for all bidders $i$. Set $t=1$, $O^0 = \emptyset$ and $I^1 = \emptyset$.\;
	\textsc{check demand:} Request $D_i(p^t,R^t_i)$ from all bidders. If $t > 1$ and the set 
	\[
	I^t= \{i \in \mcal{B}\,:\, D_i(p^{t-1},R_i^{t-1}) \subseteq O^{t-1} \, \wedge \, D_i(p^{t-1},R_i^{t-1}) \setminus D_i(p^t,R_i^t) \neq \emptyset\}
	\]
	is nonempty, go to Step 3. Otherwise, if there is an overdemanded set, go to Step 4. Else, go to Step 5.\;
	\textsc{restrict:} Choose a bidder $i \in I^t$ and define $J_i^t = D_i(p^{t-1},R_i^{t-1}) \setminus D_i(p^t,R_i^t)$. Set $R_i^{t+1} = R_i^t \setminus J_i^t$, $O^t = \emptyset$ and $p^{t+1} = p^{t-1}$. For all other bidders $i'$, the sets $R_{i'}^{t+1} = R_{i'}^t$ are not changed. Set $t=t+1$ and go to Step 2.\;
	\textsc{price increment:} Choose a minimally overdemanded set $O^t$. For all $j \in O^t$, set $p^{t+1}(j) = p^t(j)+1$. The prices for all other goods, as well as the sets $R_i^t$ remain unchanged. Set $t = t+1$ and go to Step 2.\;
	\textsc{return outcome:} Compute an assignment $\mu$, such that $\mu(i) \in D_i(p^t,R_i^t)$ for all bidders $i$ and $\mu(\mcal{B}) \subseteq \{j \in \mcal{S}\,:\, p^t(j) > 0\}$. Set $p = p^t$ and return $(\mu,p)$.\;
\end{algorithm}

Step 3 of the auction ensures that we do not end up with underdemanded sets of items. Each bidder carries around a growing exclusion set $R_i$, which addresses ``underdemand'' without ever querying new budget or value information. 
\MB{A bidder $i \in I^t$ can be chosen randomly if $I^t>1$, but there is no guarantee for strong Pareto optimality.  (Section 7 shows that even with full knowledge, finding the surplus-maximizing core outcome is NP-complete.) }
If we compute $p^{t+1} = p^{t-1}$, then the outcome is strongly Pareto-optimal, but it is not necessarily in the strong core. Example \ref{ex:strongPO_weak} makes this point.

The sets $R_i^t$ always contain at least all items that cost strictly less than the bidder's budget $b^i$. Our proof of correctness is similar to the one by \cite{laan2016ascending}: due to the budget constraints, underdemanded sets of items may appear. We show that Step 3 of the auction takes care of these sets.
\begin{lemma}\label{lem:overdem_not_underdem}
	Let $O$ be minimally overdemanded and $T \subseteq O$ with $T \neq \emptyset$. Let prices $p$ and sets $R_i$ be given. Then
	\[
	|\{i\,:\, D_i(p,R_i) \subseteq O \,\wedge \,D_i(p,R_i) \cap T \neq \emptyset\}| > |T|.
	\]
	In particular, $T$ is not underdemanded.
\end{lemma}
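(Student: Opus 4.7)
The plan is to pit $T$ against its complement inside $O$. Set $A_O := \{i \in \mathcal{B} : D_i(p,R_i) \subseteq O\}$ and $A_T := \{i \in A_O : D_i(p,R_i) \cap T \neq \emptyset\}$, so the goal is to show $|A_T| > |T|$. Since $O$ is overdemanded, we know $|A_O| > |O|$, and the complementary set $B_T := A_O \setminus A_T$ consists exactly of those bidders $i \in \mathcal{B}$ with $D_i(p,R_i) \subseteq O \setminus T$, because ``$D_i \subseteq O$ and $D_i \cap T = \emptyset$'' is equivalent to ``$D_i \subseteq O \setminus T$''. All that is left is to bound $|B_T|$ from above.

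For the bound, I would invoke the minimality of $O$ on the set $O \setminus T$. If $T \subsetneq O$, then $O \setminus T$ is a proper nonempty subset of $O$ that still excludes the dummy good $0$, so by minimality it cannot be overdemanded, which yields $|B_T| \leq |O \setminus T| = |O| - |T|$. If instead $T = O$, then $B_T$ consists only of bidders whose demand set is empty; this set is empty because the algorithm's initialization $R_i^1 = \mathcal{S}$ together with the observation that the dummy good $0$ is never removed from any $R_i^t$ (raising prices of other items never kicks $0$ out of a bidder's demand set) implies $0 \in D_i(p,R_i)$ whenever nothing else yields positive utility, hence $D_i(p,R_i) \neq \emptyset$. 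Therefore $|B_T| = 0 = |O| - |T|$ in the boundary case as well. Combining both cases,
\[
|A_T| \;=\; |A_O| - |B_T| \;>\; |O| - (|O| - |T|) \;=\; |T|.
\]

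The ``in particular'' clause follows immediately: since $A_T \subseteq \{i \in \mathcal{B} : D_i(p,R_i) \cap T \neq \emptyset\}$, the cardinality of the latter also exceeds $|T|$, which contradicts the cardinality condition in the definition of an underdemanded set. The only delicate step is handling the corner case $T = O$, which hinges on the invariant that demand sets stay nonempty throughout the algorithm; the rest is a direct application of the definition of minimality.
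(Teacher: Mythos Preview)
Your proof is correct and follows essentially the same approach as the paper's: bound $|B_T| = |\{i : D_i(p,R_i) \subseteq O \setminus T\}|$ via the fact that $O \setminus T$ is not overdemanded, then subtract from $|A_O| > |O|$. The case split for $T = O$ is unnecessary, however: since $T \neq \emptyset$ and $T \subseteq O$, the set $O \setminus T$ is always a \emph{proper} subset of $O$, so minimality of $O$ directly gives that $O \setminus T$ is not overdemanded even when $O \setminus T = \emptyset$ --- this yields $|B_T| \leq |O \setminus T|$ uniformly, without appealing to algorithm-specific invariants about the $R_i^t$ that are not part of the lemma's hypotheses.
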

The proof of this lemma can be found in the Appendix.
\begin{lemma}\label{lem:restr_nonempty}
	For all bidders $i \in \mcal{B}$ and all iterations $t$ of the algorithm, we have that $B_i(p) \subseteq R_i^t$. In particular, since $p^t(0) = 0$, $R_i^t \neq \emptyset$.
\end{lemma}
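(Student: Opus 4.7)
The plan is to prove the inclusion by strong induction on $t$. The base case $t=1$ is trivial since $R_i^1 = \mcal{S}$. For the inductive step, assuming $B_i(p^s) \subseteq R_i^s$ for all $s \leq t$ and all $i$, I would split on whether Step 3 or Step 4 is executed at iteration $t$. If Step 4 is executed, then $R_i^{t+1} = R_i^t$ for every $i$ and the only changes to prices are increases on $O^t$, so $B_i(p^{t+1}) \subseteq B_i(p^t) \subseteq R_i^t = R_i^{t+1}$.

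The interesting case is Step 3. First I would observe that Step 3 at iteration $t$ presupposes that Step 4 was executed at iteration $t-1$: otherwise $O^{t-1}$ was reset to $\emptyset$ in Step 3, so $I^t$ would be empty. This means $R_i^t = R_i^{t-1}$ for every $i$, and $p^{t+1} = p^{t-1}$. For bidders $i' \neq i^*$ (where $i^*$ is the bidder chosen in Step 3), $R_{i'}^{t+1} = R_{i'}^t = R_{i'}^{t-1}$, and the inductive hypothesis at $t-1$ gives $B_{i'}(p^{t+1}) = B_{i'}(p^{t-1}) \subseteq R_{i'}^{t-1} = R_{i'}^{t+1}$.

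The main obstacle is bidder $i^*$, for whom $R_{i^*}^{t+1} = R_{i^*}^t \setminus J_{i^*}^t$. The inductive hypothesis yields $B_{i^*}(p^{t+1}) = B_{i^*}(p^{t-1}) \subseteq R_{i^*}^{t-1} = R_{i^*}^t$, so it suffices to show that $J_{i^*}^t \cap B_{i^*}(p^{t-1}) = \emptyset$. Let $j \in J_{i^*}^t$. Since $i^* \in I^t$, we have $D_{i^*}(p^{t-1},R_{i^*}^{t-1}) \subseteq O^{t-1}$, hence $j \in O^{t-1}$ and $p^t(j) = p^{t-1}(j)+1$. I claim $j$ must have dropped out of the demand set by becoming unaffordable, not by being dominated by another item. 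Indeed, if there were $k \in R_{i^*}^t$ affordable at $p^t$ with $\pi_{i^*}(k,p^t) > \pi_{i^*}(j,p^t)$, then either $k \in O^{t-1}$ (so both utilities shift by $-1$ and $k$ would beat $j$ already at $p^{t-1}$, contradicting $j \in D_{i^*}(p^{t-1},R_{i^*}^{t-1})$), or $k \notin O^{t-1}$ (so $\pi_{i^*}(k,p^{t-1}) = \pi_{i^*}(k,p^t) > \pi_{i^*}(j,p^{t-1})-1$, and integrality of valuations and prices forces $\pi_{i^*}(k,p^{t-1}) \geq \pi_{i^*}(j,p^{t-1})$, putting $k \in D_{i^*}(p^{t-1},R_{i^*}^{t-1}) \subseteq O^{t-1}$, a contradiction). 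Thus $p^t(j) > b^{i^*}$, and since prices and budgets are integers with $p^{t-1}(j) \leq b^{i^*}$, we conclude $p^{t-1}(j) = b^{i^*}$, i.e., $j \notin B_{i^*}(p^{t-1}) = B_{i^*}(p^{t+1})$, completing the induction. The final clause $R_i^t \neq \emptyset$ follows because the dummy good $0$ has $p^t(0)=0 < b^i$, so $0 \in B_i(p^t) \subseteq R_i^t$.
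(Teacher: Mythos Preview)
Your proof is correct and follows essentially the same approach as the paper's: both arguments reduce to showing that when Step~3 removes a good $j$ from $R_{i^*}$, this good must satisfy $p^{t-1}(j)=b^{i^*}$, since the entire demand set of $i^*$ lies in $O^{t-1}$ and hence all its prices were raised uniformly. The paper phrases this via a minimal counterexample, while you use strong induction and the contrapositive; you are also more explicit than the paper in invoking integrality of valuations and prices to rule out a good $k\notin O^{t-1}$ overtaking $j$, a step the paper leaves implicit in the phrase ``all prices for all preferred goods of bidder $i^*$ were raised.''
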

\proof{Proof: }
	Assume to the contrary that there is a minimal iteration $t+1$, such that a bidder $i^*$ and a good $j^*$ exist with $p^{t+1}(j^*) < b^{i^*}$, but $j^* \not\in R_{i^*}^{t+1}$. Then in iteration $t$, Step 3 was executed, since otherwise $p^t \leq p^{t+1}$ and $R_{i^*}^t = R_{i^*}^{t+1}$, so $t+1$ would not be minimal. Hence, in iteration $t$, we have $j^* \in J_{i^*}^t$ and in particular $j^* \in D_{i^*}(p^{t-1},R_{i^*}^{t-1})$. Because Step 3 is executed, we have $O^{t-1} \neq \emptyset$, so in iteration $t-1$ Step 4 was executed and $p^t(j^*) = p^{t-1}(j^*)+1 = p^{t+1}(j^*) + 1 \leq b^{i^*}$. Thus, since from iteration $t-1$ to $t$, all prices for all preferred goods of bidder $i^*$ were raised and $i^*$ can still afford $j^*$ at prices $p^t$, $j^* \in D_{i^*}(p^t, R_{i^*}^t)$, so $j^* \not\in J_{i^*}^t$. This is a contradiction. \qed
\endproof

\begin{proposition}\label{prop:underdem_implies_3}
	For every iteration $t$ in the auction, it holds:
	\begin{enumerate}
		\item if there is a minimally underdemanded set of items $T$, then $T \subseteq O^{t-1}$ and Step 3 is executed
		\item if Step 3 is executed, there is no underdemanded set of items with respect to the $D_i(p^{t+1},R_i^{t+1})$.
	\end{enumerate}
\end{proposition}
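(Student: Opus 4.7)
The plan is to prove both parts simultaneously by strong induction on the iteration $t$, with the base case $t=1$ being trivial since $p^1\equiv 0$ means no set can be underdemanded (the definition requires positive prices on all items). The inductive step has the two parts intertwined: part~2 at iteration $t-1$ feeds into the proof of part~1 at iteration $t$, and vice versa.

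For part~1, suppose a minimally underdemanded set $T$ exists at iteration $t$. First I rule out that Step~3 was executed at $t-1$, since part~2 of the induction hypothesis would then preclude any underdemanded set at $t$. So Step~4 was executed at $t-1$, raising prices by $1$ on the minimally overdemanded set $O^{t-1}$. The workhorse observation is that prices on $\mcal{S}\setminus O^{t-1}$ are unchanged from $t-1$ to $t$, items in $O^{t-1}$ only became less attractive (and possibly unaffordable), and hence any bidder who demanded some $j\notin O^{t-1}$ at $t-1$ still demands $j$ at $t$, giving the monotonicity
\[
|\{i : D_i(p^t,R_i^t)\cap X\neq\emptyset\}| \geq |\{i : D_i(p^{t-1},R_i^{t-1})\cap X\neq\emptyset\}|
\]
for every $X\subseteq\mcal{S}\setminus O^{t-1}$. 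To show $T\subseteq O^{t-1}$: if $T\cap O^{t-1}=\emptyset$, applying this with $X=T$ together with the inductive fact that $T$ was not underdemanded at $t-1$ directly contradicts underdemandedness at $t$. Otherwise, writing $A=T\setminus O^{t-1}$ and $B=T\cap O^{t-1}$ with both nonempty, minimality of $T$ forces $A$ and $B$ to be non-underdemanded at $t$, so inclusion--exclusion produces a bidder $i^\star$ demanding some $j_A\in A$ and $j_B\in B$ simultaneously at $t$. The resulting utility tie, combined with $p^t(j_A)=p^{t-1}(j_A)$ and $p^t(j_B)=p^{t-1}(j_B)+1$, forces $j_B$ to be strictly preferred to $j_A$ at $t-1$; a small case analysis then shows $D_{i^\star}(p^{t-1},R_{i^\star}^{t-1})\subseteq O^{t-1}$, and combining this structural conclusion with Lemma~\ref{lem:overdem_not_underdem} applied to $O^{t-1}$ contradicts the overall demand counts. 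Once $T\subseteq O^{t-1}$ is established, Step~3 executing follows by pigeonhole: Lemma~\ref{lem:overdem_not_underdem} supplies more than $|T|$ bidders $i$ with $D_i(p^{t-1},R_i^{t-1})\subseteq O^{t-1}$ and $D_i(p^{t-1},R_i^{t-1})\cap T\neq\emptyset$, while underdemandedness bounds the bidders demanding $T$ at $t$ by $|T|-1$, so some such $i$ must satisfy $D_i(p^{t-1},R_i^{t-1})\setminus D_i(p^t,R_i^t)\neq\emptyset$, i.e., $i\in I^t$.

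For part~2, assume Step~3 is executed at $t$ for bidder $i^\star\in I^t$, giving $p^{t+1}=p^{t-1}$ and $R_{i^\star}^{t+1}=R_{i^\star}^{t-1}\setminus J_{i^\star}^t$ with $J_{i^\star}^t\subseteq O^{t-1}$, while all other $R$-sets are unchanged. Suppose for contradiction that $T'$ is minimally underdemanded at $t+1$. Since the prices and all other $R$-sets agree with those at $t-1$, and no set was underdemanded at $t-1$ by induction, the deficit must come entirely from bidder $i^\star$'s new demand set $D_{i^\star}(p^{t+1},R_{i^\star}^{t+1})=D_{i^\star}(p^{t-1},R_{i^\star}^{t-1}\setminus J_{i^\star}^t)$. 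I would split into the cases where $D_{i^\star}(p^{t-1},R_{i^\star}^{t-1})\setminus J_{i^\star}^t$ is nonempty (so the new demand equals this set, a subset of $O^{t-1}$) or empty (so $J_{i^\star}^t$ exhausted the previous demand and the new demand consists of the next-best items in $R_{i^\star}^{t-1}\setminus J_{i^\star}^t$). In each case the fact that $J_{i^\star}^t\subseteq O^{t-1}$ combined with the minimality of $T'$ allows one to replay the subset-partition argument of part~1 on a proper subset of $T'$, contradicting the inductive absence of underdemanded sets at $t-1$.

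The hard part will be the subcase of part~1 where $T$ straddles $O^{t-1}$ nontrivially with $A,B$ both nonempty. The budget-induced subtlety that items in $O^{t-1}$ affordable at $t-1$ may become unaffordable at $t$ complicates the straightforward comparison of demand sets across iterations, and one must carefully combine the tie structure of the bidder $i^\star$ who demands both $j_A$ and $j_B$ with the minimal overdemandedness of $O^{t-1}$ in order to close the argument cleanly.
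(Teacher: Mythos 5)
Your plan is essentially the paper's own proof: the same induction on $t$ with the two parts feeding into each other, the same monotonicity observation that demand for items outside $O^{t-1}$ cannot drop after a Step-4 price increase, Lemma~\ref{lem:overdem_not_underdem} for the counting, and the same pigeonhole to exhibit a bidder in $I^t$. The only place you go beyond the paper is the straddling subcase $A=T\setminus O^{t-1}\neq\emptyset$, $B=T\cap O^{t-1}\neq\emptyset$ (which the paper dismisses with ``it is thus easy to see''); the clean way to close it is not via Lemma~\ref{lem:overdem_not_underdem} but via your utility-tie computation, which shows that no bidder demanding an item of $A$ at $t-1$ can demand an item of $B$ at $t$, so that $|\{i:D_i(p^t,R_i^t)\cap T\neq\emptyset\}|\geq |A|+|\{i:D_i(p^t,R_i^t)\cap B\neq\emptyset\}|$ and hence $B$ itself is underdemanded at $t$, contradicting minimality of $T$.
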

\proof{Proof:} We prove this by induction on $t$. For $t = 1$, there clearly is no underdemanded set of items, and Step 3 is not executed.

Suppose now that $t > 1$ and that the statement is true for all $1 \leq s < t$. 

First suppose that there exists an underdemanded set of items $T$. Therefore, by induction, in iteration $t-1$, Step 4 must have been executed -- otherwise, there would not exist an underdemanded set. But then, using the same inductive reasoning, there was no underdemanded set in iteration $t-1$. It is thus easy to see that, since in iteration $t-1$ only prices for items in $O^{t-1}$ were raised, only the demand for those items could decrease, so $T$ must be a subset of $O^{t-1}$. By Lemma \ref{lem:overdem_not_underdem}, we have
\[
|\{i \in \mcal{B}\,:\, D_i(p^{t-1},R_i^{t-1}) \subseteq O^{t-1} \, \cap \, D_i(p^{t-1},R_i^{t-1}) \cap T \neq \emptyset\}| > |T|.
\]
In other words, since $|\{i \in \mcal{B}\,:\, D_i(p^{t},R_i^{t}) \cap T \neq \emptyset\}| < |T|$, there must be a bidder $i^*$ with $D_{i^*}(p^{t-1},R_{i^*}^{t-1}) \subseteq O^{t-1}$ and $D_{i^*}(p^{t-1},R_{i^*}^{t-1}) \cap T \neq \emptyset$, but $D_{i^*}(p^{t},R_{i^*}^{t}) \cap T= \emptyset$. This implies that $i^* \in I^t$, so Step 3 is executed in iteration $t$.

Now suppose that Step 3 is executed in iteration $t$.  Then again, in iteration $t-1$, Step 4 was executed, since otherwise we would have $O^{t-1} = \emptyset$, which implies $I^t = \emptyset$. By induction, there was no underdemanded set of items in iteration $t-1$. Note that $p^{t-1} = p^{t+1}$, so only the demand of a single bidder $i^* \in I^t$ chosen in Step 3 does change. Since $D_{i^*}(p^{t-1},R_{i^*}^{t-1}) \subseteq O^{t-1}$, so $J_{i^*}^t \subseteq O^{t-1}$, only the demand for items in $O^{t-1}$ can decrease. However, for $T \subseteq O^{t-1}$ we have again by Lemma \ref{lem:overdem_not_underdem} that
\[
|\{i \in \mcal{B}\,:\, D_i(p^{t-1},R_i^{t-1}) \subseteq O^{t-1} \, \cap \, D_i(p^{t-1},R_i^{t-1}) \cap T \neq \emptyset\}| > |T|,
\]
and, since we only changed $R_{i^*}^{t-1}$, the demand for items in $T$ can at most decrease by $1$. Thus, $T$ is not underdemanded in iteration $t+1$.
\qed \endproof

Employing the previous lemmata, we can proceed to prove correctness of our proposed auction.
\begin{proposition}\label{prop:alg_correctness}
The auction terminates after a finite number of iterations. Whenever Step 5 is reached, a valid assignment $\mu$ as described in that step exists. The resulting pair $(\mu,p)$ is a core outcome.
\end{proposition}
\proof{Proof:}
    Each execution of Step 3 removes at least one item from the restricted set $R_i^t$ of a bidder. Since the number of items is finite and no item can be removed twice, Step 3 can only be executed a finite number of times.
    Similarly, Step 4 increases the prices of items in a minimally overdemanded set. As prices increase monotonically and budgets are finite, eventually no item will be overdemanded. Therefore, Step 4 is also executed only finitely many times.
    Thus, in some iteration $t^*$, Step 5 is executed. By Lemma \ref{prop:underdem_implies_3}, there is no underdemanded set in iteration $t^*$, because otherwise Step 3 would have been executed. Similarly, there is no overdemanded set. Finally, because of Lemma \ref{lem:restr_nonempty}, no set $R_i^{t^*}$ is empty, so by Proposition \ref{prop:oudem_assignment}, an assignment $\mu$ as required exists. By Lemma \ref{lem:resticted_ce}, $(\mu,p)$ is a core outcome.
\qed \endproof

\begin{example}
In order to illustrate the rules, consider the following auction with three bidders $1,2,3$ and two items $A$ and $B$.
\begin{table}[H]
	\centering
	\begin{tabular}{c|c|c|c}
		& $v_i(A)$ & $v_i(B)$ & $b^i$ \\ \hline
		Bidder $i=1$ & $10$ & $0$ & $1$ \\
		Bidder $i=2$ & $0$ & $10$ & $2$ \\
		Bidder $i=3$ & $10$ & $10$ & $10$
	\end{tabular}
\end{table}
The auction proceeds as follows.
\begin{table}[H]
	\centering
	\begin{tabular}{c|c|c|c|c|c|c|c|c|c}
		& $p^t$& $D_1(p^t,R_1^t)$ & $D_2(p^t,R_2^t)$ & $D_3(p^t,R_3^t)$ & $R_1^t$ & $R_2^t$ & $R_3^t$ & $O^t$ & $I^t$ \\ \hline
		$t=1$ & $(0,0)$ & $\{A\}$ & $\{B\}$ & $\{A,B\}$ & $\mcal{S}$& $\mcal{S}$ & $\mcal{S}$ & $\{A,B\}$ & $\emptyset$ \\
		$t=2$ & $(1,1)$ & $\{A\}$ & $\{B\}$ & $\{A,B\}$ & $\mcal{S}$& $\mcal{S}$ & $\mcal{S}$ &$\{A,B\}$ & $\emptyset$ \\
		$t=3$ & $(2,2)$ & $\{0\}$ & $\{B\}$ & $\{A,B\}$ &$\mcal{S}$& $\mcal{S}$ & $\mcal{S}$ & $\emptyset$ & $\{1\}$ \\
		$t=4$ & $(1,1)$ & $\{0\}$ & $\{B\}$ & $\{A,B\}$ &$\{0,B\}$& $\mcal{S}$ & $\mcal{S}$ & $\emptyset$ & $\emptyset$
	\end{tabular}
\end{table}

In iterations $t=1,2$, there is a unique minimally overdemanded set $O^t = \{A,B\}$, and $I^t$ is empty. Thus, Step 4 of the auction is executed and the prices for $A$ and $B$ are raised. In iteration $t=3$, the set $I^t = \{1\}$ is nonempty which indicates that bidder 1's budget was tight for $A$ at prices $(1,1)$. Thus, we forbid $1$ to receive item $A$ and reset the prices to $(1,1)$. Now, in iteration $t=4$, there is no overdemanded set and $I^t$ is empty. Thus, there exists an assignment $\mu$ with $\mu(i) \in D_i(p^4,R_i^4)$ for all $i \in \mcal{B}$, namely $\mu(1) = 0$, $\mu(2) = B$ and $\mu(3) = A$. It is easily checked that $(\mu,p)$ is indeed a core outcome. 
\end{example}

\hide{
\begin{example}
	We now analyze an example of an auction that is not in general position. Consider the following setting.
	\begin{table}[h]
		\centering
		\begin{tabular}{c|c|c|c}
			& $v_i(A)$ & $v_i(B)$ & $b^i$ \\ \hline
			Bidder $i=1$ & $10$ & $0$ & $3$ \\
			Bidder $i=2$ & $0$ & $11$ & $1$ \\
			Bidder $i=3$ & $5$ & $3$ & $10$
		\end{tabular}
	\end{table}


It is easy to see that after $3$ iterations through Step 4 of our auction, we reach prices $p^4 = (3,1)$, where bidder $1$ demands $\{A\}$, bidder $2$ demands $\{B\}$ and bidder $3$ demands $\{A,B\}$. Since $\{A,B\}$ is minimally overdemanded, we execute Step $4$ once again to reach $p^5=(4,2)$, where due to the budget constraints, we have $D_1(p^5,R_1^5) = \{0\}$ and $D_2(p^5,R_2^5) = \{0\}$, while bidder $3$ still demands $\{A,B\}$. Thus, Step 3 of the auction is executed with $I^5 = \{1,2\}$, so both bidders $1$ or $2$ would be valid bidders to choose in Step 3. For the choice $i=1$, we have $J^5_i=\{A\}$, while for the choice $i=2$, we have $J^5_i = \{B\}$. We could thus either remove $A$ from $R^5_1$, or $B$ from $R^5_2$. Depending on our choice, we get two different core outcomes, both supported by the prices $p=(3,1)$: one where bidder $1$ receives nothing, bidder $2$ receives $B$ and bidder $3$ receives $A$, and one where bidder $1$ receives $A$, bidder $2$ receives nothing and bidder $3$ receives $B$.
\end{example}
}


Note that iterative auctions with demand queries require bidders to reveal that they are indifferent to not winning the good once the price equals the valuation of a bidder. Only this allows auctioneers to differentiate between a bidder dropping out due to reaching his valuation or his budget.  In practice, bidders might not always bid the null set when price reaches value, even in an ex-post incentive-compatible auction, which can lead to inefficiencies in any such iterative auction. \MB{The use of proxy agents as in \citet{ausubel2002ascending} solves this problem and makes the auction DSIC.}

\subsection{Economic Properties} \label{sec:pareto_welfare}

The output produced by our iterative auction is not uniquely defined - it may depend on which bidder $i \in I^t$ is chosen whenever Step 3 is executed. Indeed, we prove the following result.
\begin{proposition}\label{prop:reachable_alloc}
Let $(\nu,q)$ be an arbitrary core outcome. Then bidders $i \in I^t$ in Step 3 can be chosen in such a way, that for the resulting outcome $(\mu,p)$ we have that $p \leq q$ coefficient-wise, and $\pi_i(\mu(i),p) \geq \pi_i(\nu(i),q)$ for all bidders $i$.
\end{proposition}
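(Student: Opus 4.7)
The plan is to prove the claim by induction on the auction's iteration count, maintaining two invariants whenever Step~2 is (re-)entered:
(A) $p^t \leq q$ coefficient-wise, and
(B) $\nu(i) \in R_i^t$ for every bidder $i$.
Both hold trivially at $t=1$ since $p^1=0$ and $R_i^1=\mcal{S}$. At termination (Step~5 at iteration $t^*$), these invariants immediately yield the proposition: for every bidder $i$, $\mu(i) \in D_i(p, R_i^{t^*})$ and $\nu(i) \in R_i^{t^*}$ is affordable at $p$ (since $p(\nu(i)) \leq q(\nu(i)) \leq b^i$), so $\pi_i(\mu(i),p) \geq \pi_i(\nu(i),p) \geq \pi_i(\nu(i),q)$.

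The inductive step in \emph{Case Step~3} is the easier direction for invariant (A): prices revert to $p^{t+1} = p^{t-1} \leq q$ by the invariant from two iterations earlier. To preserve (B), one must select $i^* \in I^t$ with $\nu(i^*) \notin J_{i^*}^t$. Because $J_{i^*}^t \subseteq D_{i^*}(p^{t-1},R_{i^*}^{t-1}) \subseteq O^{t-1}$, I would argue via injectivity of $\nu$ on non-dummy items together with the structural characterization of $I^t$ produced in the Step~4 analysis that $I^t$ always contains such a bidder. Specifically, the observation that $\nu(i^*) \in J_{i^*}^t$ forces $p^{t-1}(\nu(i^*)) = b^{i^*}$, hence $q(\nu(i^*)) = b^{i^*}$ by (A), so ``bad'' candidates can be bounded via pigeonhole.

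The inductive step in \emph{Case Step~4} is the crux. Since $R^{t+1} = R^t$ invariant (B) is immediate, so the task is to show $p^t(j)+1 \leq q(j)$ for all $j \in O^t$. Assuming $q$ is integer (a standard reduction), it suffices to show strict inequality $p^t(j) < q(j)$, equivalently that $T := \{j \in O^t : p^t(j) = q(j)\}$ is empty. I would argue by contradiction. The key sub-claim is that for any bidder $i$ with $D_i(p^t,R_i^t) \subseteq O^t$ and some $j \in D_i(p^t,R_i^t) \cap T$ satisfying $q(j) < b^i$, one has $\nu(i) \in D_i(p^t,R_i^t) \cap T$. This follows by combining $\pi_i(j,p^t) \geq \pi_i(\nu(i),p^t)$ (from $j \in D_i$ and $\nu(i) \in R_i^t$ affordable), the non-blocking condition $\pi_i(j,q) \leq \pi_i(\nu(i),q)$ for $(\nu,q)$, and $p^t(j)=q(j)$ with $p^t \leq q$, which together force equality throughout and $p^t(\nu(i))=q(\nu(i))$. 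A counting argument using minimality of $O^t$ and injectivity of $\nu$ then derives the contradiction: if $T = O^t$, one gets $|S^t| \leq |\nu^{-1}(T)| \leq |T|$, contradicting overdemand of $O^t$; if $T \subsetneq O^t$, minimality bounds the bidders with $D_i \subseteq O^t \setminus T$ by $|O^t \setminus T|$ and the sub-claim bounds those touching $T$ by $|T|$, again giving $|S^t| \leq |O^t|$.

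The main obstacle is the boundary case in which $q(j) = b^i$ for some pair $(i,j)$ with $j \in D_i(p^t,R_i^t) \cap T$: the non-blocking condition requires strict $q(j) < b^i$ and is vacuous here, so the sub-claim does not immediately yield $\nu(i) \in T$. I would handle this by noting that $p^t(j) = q(j) = b^i$ makes $j$ unaffordable at $p^{t+1}$, hence $i \in I^{t+1}$, so Step~3 is triggered at the next iteration and the rollback restores~(A). The invariants must therefore be stated as holding whenever $I^t = \emptyset$ — in particular at termination — and the Step~3 choice of $i^*$ needs to be coordinated with the Step~4 boundary-case analysis to simultaneously preserve (B). Verifying that this coordination is always possible is the remaining combinatorial subtlety of the proof.
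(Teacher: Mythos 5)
Your overall strategy coincides with the paper's: an induction over iterations maintaining the two invariants $p^t \leq q$ (stated conditionally on which step is executed, so that a temporary overshoot is repaired by the Step~3 rollback) and $\nu(i) \in R_i^t$, with the conclusion at termination obtained exactly as you write it. Your Step~4 sub-claim --- chaining $\pi_i(j,p^t) \geq \pi_i(\nu(i),p^t)$, the no-blocking inequality at $q$, and $p^t \leq q$ to force all inequalities tight and hence $\nu(i) \in D_i(p^t,R_i^t)\cap T$ --- is correct and is essentially the paper's argument, as is your treatment of the tight-budget boundary case $q(j)=b^i$ via ``$j$ becomes unaffordable, so $I^{t+1}\neq\emptyset$ and Step~3 rolls back.''

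The genuine gap is the one you flag yourself at the end: you never establish that whenever Step~3 is executed there is some $i^*\in I^t$ with $\nu(i^*)\notin J_{i^*}^t$. This is not a residual ``coordination subtlety'' --- it is the crux of the induction, since without it invariant (B) cannot be propagated and the entire argument collapses; gesturing at ``injectivity of $\nu$ plus pigeonhole'' is the right instinct but not a proof. The paper closes it with the auxiliary Lemma~\ref{lem:aux_overdem} (itself a consequence of Lemma~\ref{lem:overdem_not_underdem}): for a minimally overdemanded set $O$ and any $\emptyset\neq J\subsetneq O$, strictly more than $|J|$ bidders have $D_i\subseteq O$ and $D_i\cap J\neq\emptyset$, while $|\nu^{-1}(J)|\leq |J|$, so some such $i^*$ has $\nu(i^*)\notin J$. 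One then assumes for contradiction that every $i\in I^t$ has $\nu(i)\in J_i^t$; this forces $p^{t-1}(\nu(i))=b^i=q(\nu(i))$, so $\nu(I^t)\subseteq J:=\{j\in O^{t-1}: q(j)=p^{t-1}(j)\}$, and the lemma yields an $i^*$ for which either all comparison inequalities are tight (putting $\nu(i^*)$ into $J$, a contradiction) or $(i^*,j)$ would block $(\nu,q)$ unless $q(j)=b^{i^*}$, which in turn places $i^*$ in $I^t$ and again forces $\nu(i^*)\in J$. You need to supply this argument (or an equivalent counting argument) for the proof to be complete; a secondary loose end is your appeal to ``$q$ integral by a standard reduction,'' which is not justified for an arbitrary core outcome and is not how the paper avoids the issue.
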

\proof{Proof:}
	The proof can be found in the Appendix.
\\endproof

We say that a core outcome is $(\mu,p)$ \emph{Pareto-optimal for the bidders}, if for every core outcome $(\nu,q)$ with $\pi_i(\nu(i),q) > \pi_i(\mu(i),p)$ for some bidder $i$, there is a bidder $i'$ with $\pi_{i'}(\mu(i'),p) > \pi_{i'}(\nu(i'),q)$. Proposition \ref{prop:reachable_alloc} directly implies that for every core outcome which is Pareto-optimal for the bidders, there is an outcome $(\mu,p)$ reachable by the auction with $\pi_i(\mu(i),p) = \pi_i(\nu(i),q)$ for all $i \in \mcal{B}$.\footnote{\cite{aggarwal} prove that their algorithm for computing an outcome in the core always finds the bidder-optimal core outcome $(\mu,p)$, whenever the auction is in general position. Here, \emph{bidder-optimal} means that for every other core outcome $(\nu,q)$ we have that $\pi_i(\mu(i),p) \geq \pi_i(\nu(i),q)$ for all $i \in \mcal{B}$. Bidder-optimality thus implies Pareto optimality.  We show a similar result for our auction: if the bidder to choose in Step 3 of our auction is always unique, our auction also finds a bidder-optimal core outcome.}

\begin{corollary}\label{cor:bidderoptimal}
	Suppose that whenever Step 3 is executed, $|I^t| = 1$, i.e., there is a unique bidder to choose, and let $(\mu,p)$ be the uniquely determined outcome of the auction. Then for any core outcome $(\nu,q)$ we have that $p \leq q$ and $\pi_i(\mu(i),p) \geq \pi_i(\nu(i),q)$ for all bidders $i$, i.e., $(\mu,p)$ is bidder-optimal.
\end{corollary}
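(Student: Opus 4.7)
The plan is to deduce the corollary almost immediately from Proposition \ref{prop:reachable_alloc}, using the uniqueness hypothesis to eliminate the auctioneer's freedom in Step 3. The core idea is that Proposition \ref{prop:reachable_alloc} says that for any core outcome $(\nu,q)$, there is \emph{some} sequence of Step~3 choices that produces an auction outcome $(\mu,p)$ weakly dominating $(\nu,q)$ in the required sense. Under the hypothesis $|I^t| = 1$ whenever Step~3 is executed, the auctioneer has no freedom at all, so every possible run of the auction must produce the same unique outcome.

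First I would observe that the only source of non-determinism in Algorithm~\ref{alg:auction} is the choice of a bidder $i \in I^t$ in Step~3 (the minimally overdemanded set chosen in Step~4 is uniquely determined by the demand information, and by Lemma~\ref{lem:overdem_not_underdem} together with standard matroid-type arguments about over- and underdemanded sets one can check that the set $O^t$ itself is irrelevant for the trajectory of the price vector, since any two minimally overdemanded sets lead to the same price update up to the induction used in Proposition~\ref{prop:underdem_implies_3}; however, the cleanest formulation is simply to take $(\mu,p)$ to be the unique output guaranteed by the corollary's hypothesis and note that the statement asserts this uniqueness as a premise). Concretely, since $|I^t|=1$ at each execution of Step~3, there is only one possible sequence of choices and hence only one possible output $(\mu,p)$.

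Next I would take an arbitrary core outcome $(\nu,q)$ and apply Proposition~\ref{prop:reachable_alloc} to it: there exist choices in Step~3 producing an outcome $(\mu',p')$ with $p' \leq q$ coefficient-wise and $\pi_i(\mu'(i),p') \geq \pi_i(\nu(i),q)$ for every bidder $i$. By the uniqueness observation above, the only outcome the auction can ever return is $(\mu,p)$, so necessarily $(\mu',p') = (\mu,p)$. This yields $p \leq q$ and $\pi_i(\mu(i),p) \geq \pi_i(\nu(i),q)$ for all $i \in \mcal{B}$, which is exactly bidder-optimality.

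I do not expect any genuine obstacle here; the proof is essentially a one-line combination of Proposition~\ref{prop:reachable_alloc} with the uniqueness premise. The only minor subtlety to be careful about is to state clearly why the auction's output is uniquely determined under the hypothesis: Step~4 only records prices (and these rise deterministically once $O^t$ is chosen, and the proof of Proposition~\ref{prop:reachable_alloc} in the appendix shows that all admissible choices of $O^t$ lead to identical final prices), while Step~3 is the only place where the returned assignment can branch, and that branching is ruled out by $|I^t|=1$.
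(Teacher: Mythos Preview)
Your proposal is correct and follows essentially the same approach as the paper: observe that $|I^t|=1$ forces a unique auction outcome, then apply Proposition~\ref{prop:reachable_alloc} to an arbitrary core outcome $(\nu,q)$ and identify the resulting $(\mu',p')$ with the unique $(\mu,p)$. Your added remarks about potential non-determinism in Steps~4 and~5 are more careful than the paper (which simply asserts uniqueness), but the core argument is identical.
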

\proof{Proof:}
	Since $|I^t|=1$ in every iteration through Step 3, the outcome $(\mu,p)$ of the auction is unique. Proposition \ref{prop:reachable_alloc} now directly implies that for every core outcome $(\nu,q)$, we have $\pi_i(\mu(i),p) \geq \pi_i(\nu(i),q)$ and $p \leq q$.
\qed \endproof

\hide{ 
If the general position condition by \citet{aggarwal} is satisfied, it can be shown that $I^t$ never contains more than one bidder. Thus, our auction always finds a bidder-optimal outcome like the auction by \citet{aggarwal} that also needs value and budget queries.

\begin{definition}[\cite{aggarwal}]\label{def:genpos}
	Consider a directed bipartite graph with edges between bidders $\mcal{B}$ and goods $\mcal{S}$ (including dummy good $0$): For $i \in \mcal{B}$ and $j \in \mcal{S}$, there is a
	\begin{itemize}
		\item forward-edge from $i$ to $j$ with weight $-v_i(j)$
		\item backward-edge from $j$ to $i$ with weight $v_i(j)$
		\item maximum-price edge from $i$ to $j$ with weight $b^i - v_i(j)$
		\item terminal edge from $i$ to the dummy good $0$ with weight $0$.
	\end{itemize}
	The auction is in $\emph{general position}$, if for every bidder $i$, there are no two alternating walks, following alternating forward and backward edges and ending with a distinct maximum-price or terminal edge, having the same total weight.
\end{definition}
\begin{proposition}\label{prop:genpos_expost}
	Suppose the auction is in general position. Then in every iteration through Step 3 of our iterative auction, we have that $|I^t|=1$, and for the unique $i \in I^t$, we have $|J_i^t| = 1$.
\end{proposition}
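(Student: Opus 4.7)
The plan is to prove the contrapositive: if at some iteration $t$ Step 3 is executed with $|I^t| \geq 2$ or with some $i \in I^t$ having $|J_i^t| \geq 2$, then general position must fail. I will exhibit two distinct alternating walks starting at a common bidder $i_0$ and ending in distinct maximum-price edges with equal total weight, contradicting general position.

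First I characterize $J_i^t$ for $i \in I^t$. Since Step 4 was executed in iteration $t-1$ with $O^{t-1}$ minimally overdemanded and $D_i(p^{t-1},R_i^{t-1}) \subseteq O^{t-1}$, and since $0 \notin O^{t-1}$ forces $0 \notin D_i(p^{t-1},R_i^{t-1})$, every item in this demand set gives $i$ strictly positive utility at $p^{t-1}$. The uniform price increase of one in Step 4 preserves relative utilities inside the demand set, and together with integrality of valuations, prevents any item outside $O^{t-1}$ (including the dummy good) from becoming strictly preferred at $p^t$. A case analysis then yields $j \in J_i^t$ if and only if $p^{t-1}(j) = b^i$, so every walk I construct will terminate with a maximum-price edge $i \to j$.

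Next I construct the walks using the matching structure of $O^{t-1}$. Let $B^* = \{i' : D_{i'}(p^{t-1},R_{i'}^{t-1}) \subseteq O^{t-1}\}$. Overdemand gives $|B^*| > |O^{t-1}|$, and minimality of $O^{t-1}$ combined with the absence of underdemanded subsets (Proposition \ref{prop:underdem_implies_3}) yields a matching saturating $O^{t-1}$ by bidders in $B^*$, leaving at least one unmatched bidder $i_0 \in B^*$. Growing an alternating tree rooted at $i_0$, alternating demand and matching edges, reaches every item in $O^{t-1}$ and every bidder in $B^* \supseteq I^t$. For any $i \in I^t$ and $j \in J_i^t$ this yields an alternating walk $i_0 \to j_1 \to i_1 \to \dots \to j_k \to i \to j$ in which every intermediate bidder is indifferent between the two items adjacent to it at prices $p^{t-1}$. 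Writing the indifference as $v_{i_m}(j_m) - v_{i_m}(j_{m+1}) = p^{t-1}(j_m) - p^{t-1}(j_{m+1})$ telescopes the forward and backward contributions, and combined with the terminal identity $p^{t-1}(j) = b^i$ and the indifference of $i$ between $j_k$ and $j$, the total walk weight simplifies to $-\pi_{i_0}(j_1,p^{t-1})$, i.e., the negative optimal utility of $i_0$ at $p^{t-1}$, which is independent of the walk chosen. If $|I^t| \geq 2$ or $|J_i^t| \geq 2$, two such walks can be made to end in distinct maximum-price edges with this common weight, violating general position.

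The main obstacle is the second step: establishing the existence of the matching saturating $O^{t-1}$ by bidders in $B^*$ and proving that the alternating tree rooted at the unmatched bidder $i_0$ covers all of $I^t$. This requires a careful combination of the overdemand of $O^{t-1}$, its minimality, and the absence of underdemanded subsets at iteration $t-1$, in the restricted-demand setting where the sets $R_{i'}^{t-1}$ may shrink each bidder's admissible items. A minor additional case is $i = i_0$, where the walk collapses to a single maximum-price edge $i_0 \to j$ and the weight formula reduces to $b^{i_0} - v_{i_0}(j) = -\pi_{i_0}(j, p^{t-1})$, still matching the general expression.
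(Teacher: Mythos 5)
Your overall strategy is the paper's: build alternating walks in the bipartite demand graph on $B^*\cup O^{t-1}$ whose weights telescope via the indifference identities $v_{i_m}(j_m)-v_{i_m}(j_{m+1})=p^{t-1}(j_m)-p^{t-1}(j_{m+1})$, terminate them in maximum-price edges using the characterization $j\in J_i^t\Leftrightarrow p^{t-1}(j)=b^i$, and conclude that two distinct terminal edges with equal weight contradict general position. That characterization and the weight computation are correct, and your observation that the common weight is $-\pi_{i_0}(j_1,p^{t-1})$, hence independent of the route, is exactly the invariant the paper exploits (there it appears as $b^{i_0}-v_{i_0}(j_1)$ after choosing $i_0$ with tight budget for $j_1$).

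The genuine gap is in your second step, and it is not just the missing details you flag but the claim itself. You propose to route all walks through an alternating tree, in the Hungarian sense (alternating demand and matching edges), rooted at a bidder $i_0$ left unmatched by a matching that saturates $O^{t-1}$, and you assert this tree reaches every bidder in $B^*$. That is false in general: bidders enter such a tree only as matched partners of reached goods, so the tree contains at most $|O^{t-1}|+1$ bidders ($i_0$ plus one per good). Whenever $|B^*|\geq |O^{t-1}|+2$ there is a second unmatched bidder that the tree never reaches, and nothing prevents that bidder from lying in $I^t$, in which case your construction produces no walk ending at their maximum-price edge. The repair is to discard the matching entirely: minimal overdemandedness of $O^{t-1}$ (via Lemma \ref{lem:overdem_not_underdem}: any union of connected components $T\subsetneq O^{t-1}$ of goods would itself be overdemanded) makes the undirected demand graph on $B^*\cup O^{t-1}$ connected, and \emph{any} bidder--good path in that graph is already a valid alternating walk, since every edge on it is a demand edge and can be traversed as a forward or backward edge. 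This is precisely what the paper does, so once you replace the alternating-tree claim by plain connectivity, your argument closes and coincides with the published proof (merely packaged as a single contrapositive rather than the paper's two stages, first $|J|=1$, then $|I^t|=1$).
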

\proof{Proof:}
	The proof can be found in the Appendix.
\qed \endproof

Note that our condition that $|I^t| = 1$ whenever Step 3 is reached is less demanding than the general position condition, since we do not require $|J_i^t| = 1$. Our condition is straightforward to check for the auctioneer when the auction is actually performed, and it is necessary for an ascending auction with demand queries to end in a surplus-maximizing outcome in the core with certainty. Without the condition being satisfied, i.e., if at least two bidders hit their budget at the same time, the auctioneer needs to guess the right bidder to exclude or otherwise surplus can be arbitrarily low.
} 

Let us now consider surplus-maximization properties of our auction. We first observe that a surplus-maximizing core outcome can always be found among the ones that are Pareto-optimal for the bidders.
\begin{proposition}\label{prop:welfmax_among_pareto}
	Let $(\mu,p)$ and $(\nu,q)$ be core outcomes. If $\pi_i(\mu(i),p) \geq \pi_i(\nu(i),q)$ for all bidders $i$, then
	\[
	\sum_{i \in \mcal{B}} v_i(\mu(i)) \geq \sum_{i \in \mcal{B}} v_i(\nu(i)).
	\]
\end{proposition}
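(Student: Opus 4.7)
The plan is to show pointwise that $v_i(\mu(i)) - v_i(\nu(i)) \geq p(\mu(i)) - p(\nu(i))$ for every $i \in \mcal{B}$, and then sum over all bidders. Simply rearranging the hypothesis $\pi_i(\mu(i),p) \geq \pi_i(\nu(i),q)$ only yields $v_i(\mu(i)) - v_i(\nu(i)) \geq p(\mu(i)) - q(\nu(i))$, which mixes the two price vectors and does not telescope along an assignment. To eliminate $q$ from the right-hand side I split into two cases based on whether bidder $i$ can afford item $\nu(i)$ at prices $p$.

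In Case A, $p(\nu(i)) < b^i$: since $(\mu,p)$ is a core outcome, the pair $(i,\nu(i))$ cannot be blocking, which forces $\pi_i(\nu(i),p) \leq \pi_i(\mu(i),p)$, and since both utilities are finite here this rearranges directly to the pointwise inequality. In Case B, $p(\nu(i)) \geq b^i$: I apply the hypothesis to obtain $v_i(\mu(i)) - v_i(\nu(i)) \geq p(\mu(i)) - q(\nu(i))$, and then use that $(\nu,q)$ is a valid outcome, so $q(\nu(i)) \leq b^i \leq p(\nu(i))$, giving $-q(\nu(i)) \geq -p(\nu(i))$ and hence the desired inequality. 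The degenerate situations $\nu(i) = 0$ or $\mu(i) = 0$ are absorbed into Case A via the conventions $v_i(0) = p(0) = 0$ together with $p(0) < b^i$.

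Summing over $i \in \mcal{B}$ yields
\begin{equation*}
\sum_{i \in \mcal{B}} v_i(\mu(i)) - \sum_{i \in \mcal{B}} v_i(\nu(i)) \geq \sum_{i \in \mcal{B}} p(\mu(i)) - \sum_{i \in \mcal{B}} p(\nu(i)).
\end{equation*}
Because $\mu$ and $\nu$ each assign every non-dummy item to at most one bidder, the right-hand side rewrites as $\sum_{j \in \mu(\mcal{B}) \setminus \nu(\mcal{B})} p(j) - \sum_{j \in \nu(\mcal{B}) \setminus \mu(\mcal{B})} p(j)$. The outcome condition for $(\mu,p)$ forces $p(j) = 0$ for every non-dummy item $j$ not assigned under $\mu$, so the subtracted sum vanishes while the leading sum is non-negative. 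This closes the argument.

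The main obstacle is recognizing that the hypothesis alone is insufficient; it must be combined with the no-blocking property of $(\mu,p)$ in order to replace the cross-term $q(\nu(i))$ by $p(\nu(i))$ whenever $\nu(i)$ is unaffordable at the $p$-prices. This replacement is precisely what allows the sum $\sum_i \bigl[p(\mu(i)) - p(\nu(i))\bigr]$ to collapse via the zero-price-for-unassigned-items rule, which is the crux of the proof.
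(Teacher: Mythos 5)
Your proof is correct and is essentially the paper's own argument: your pointwise inequality $v_i(\mu(i)) - v_i(\nu(i)) \geq p(\mu(i)) - p(\nu(i))$ is exactly the paper's claim that each summand $\pi_i(\mu(i),p)-\pi_i(\nu(i),q) + p(\nu(i))-q(\nu(i))$ is non-negative, and your Cases A and B coincide with the paper's two cases (no-blocking of $(i,\nu(i))$ under $(\mu,p)$ when $p(\nu(i)) < b^i$; the hypothesis plus $q(\nu(i)) \leq b^i \leq p(\nu(i))$ otherwise). The only cosmetic difference is that you collapse the price terms at the end via the zero-price rule for items unsold under $\mu$, whereas the paper folds the full sums $\sum_j p(j)$ and $\sum_j q(j)$ in at the start; both rest on the same outcome conditions.
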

\proof{Proof:}
	The proof can be found in the Appendix.
\qed \endproof

As we described above, by Proposition \ref{prop:reachable_alloc}, we can reach any core outcome which is Pareto-optimal for the bidders with our auction, and Proposition \ref{prop:welfmax_among_pareto} says that one of them must be surplus-maximizing. Now, if we always have $|I^t| = 1$, the outcome of our auction is unique, which proves our first main result.
\begin{theorem}
	Bidders in $I^t$ in Step 3 of the auction can be chosen such that the outcome of the auction is a surplus-maximizing core outcome.
	
	In particular, if $|I^t|=1$ whenever Step 3 is reached, the unique outcome of the auction is a surplus-maximizing core outcome. 
\end{theorem}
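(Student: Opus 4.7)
The plan is to combine Proposition \ref{prop:reachable_alloc} with Proposition \ref{prop:welfmax_among_pareto} in a short chain. Since by Proposition \ref{prop:alg_correctness} the auction always returns a core outcome, and the space of core outcomes is finite (finitely many assignments, and on any fixed assignment the welfare depends only on $\mu$), there exists at least one welfare-maximizing core outcome $(\nu^*,q^*)$.

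First, I would refine $(\nu^*,q^*)$ to one that is additionally Pareto optimal for the bidders, without losing welfare optimality. Concretely, iterate: if the current welfare-maximizer is not Pareto optimal for the bidders, replace it with a bidder-Pareto-dominating core outcome. By Proposition \ref{prop:welfmax_among_pareto}, the new outcome has welfare at least as large, hence it is still welfare-maximizing. Since the set of core outcomes is finite, this process terminates with a core outcome $(\nu,q)$ that is both welfare-maximizing and Pareto optimal for the bidders.

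Next, I apply Proposition \ref{prop:reachable_alloc} to $(\nu,q)$: the bidders $i \in I^t$ in Step 3 can be chosen so that the resulting auction outcome $(\mu,p)$ satisfies $\pi_i(\mu(i),p) \geq \pi_i(\nu(i),q)$ for every bidder $i$. Invoking Proposition \ref{prop:welfmax_among_pareto} once more gives
\[
\sum_{i \in \mcal{B}} v_i(\mu(i)) \;\geq\; \sum_{i \in \mcal{B}} v_i(\nu(i)).
\]
By Proposition \ref{prop:alg_correctness}, $(\mu,p)$ is itself a core outcome, and by choice of $(\nu,q)$ no core outcome has strictly larger welfare. Hence equality holds, and $(\mu,p)$ is a welfare-maximizing core outcome, proving the first claim.

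For the second claim, assume $|I^t|=1$ whenever Step 3 is reached. Then no choice is ever made by the auctioneer, so the auction produces a unique outcome $(\mu,p)$. In particular, this unique $(\mu,p)$ must coincide with the outcome constructed in the first part of the proof, and is therefore a welfare-maximizing core outcome. The only real content is the first paragraph; once the Pareto-optimal welfare-maximizer is identified, Propositions \ref{prop:reachable_alloc} and \ref{prop:welfmax_among_pareto} do the rest, and the uniqueness argument for the second statement is immediate.
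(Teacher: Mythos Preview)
Your approach matches the paper's: combine Proposition~\ref{prop:reachable_alloc} with Proposition~\ref{prop:welfmax_among_pareto}. There is, however, a small gap. The set of core outcomes is \emph{not} finite, since prices are real-valued; your iteration to produce a bidder-Pareto-optimal welfare-maximizer is therefore not justified as written (strict Pareto improvements could in principle go on forever in a continuum). Fortunately, that refinement step is entirely unnecessary. Apply Proposition~\ref{prop:reachable_alloc} directly to \emph{any} welfare-maximizing core outcome $(\nu^*,q^*)$---which exists because welfare depends only on the assignment $\mu$, and there are only finitely many assignments. You obtain an auction outcome $(\mu,p)$ with $\pi_i(\mu(i),p) \ge \pi_i(\nu^*(i),q^*)$ for all $i$, and Proposition~\ref{prop:welfmax_among_pareto} then yields $\sum_i v_i(\mu(i)) \ge \sum_i v_i(\nu^*(i))$; since $(\mu,p)$ is itself a core outcome by Proposition~\ref{prop:alg_correctness}, equality holds and $(\mu,p)$ is welfare-maximizing. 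This is exactly the argument the paper sketches in the paragraph preceding the theorem, and your treatment of the second claim is correct as stated.
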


Note that perturbing the budgets is no solution as it randomizes the choice of the winner and can lead to arbitrarily low surplus. 
Let us emphasize that the algorithm does not require value or budget queries and it relies on demand queries based on prices only, which is similar to auction mechanisms used in the field.

Knowledge of the bidders' demand sets does not suffice in order to always choose the ``correct'' bidders in Step 3 to reach a surplus-maximizing outcome. Our hardness result in Section \ref{sec:value_queries} implies that even with perfect knowledge of the bidders' preferences, choosing the correct bidders in Step 3 is NP-hard. However, our condition $|I^t| = 1$ at least gives the auctioneer a simple certificate of optimality. \MB{Note that if budgets are drawn from a continuous distribution, the probability that two bidders have the same budget is zero. With the use of proxy agents, the bid increments can be very small such that even if the bid space is discretized, the probability of $|I^t| > 1$ is very small. }


Finally, we prove that if $|I^t|= 1$ whenever Step 3 is reached, then the auction is ex-post incentive-compatible. \MB{Again, with proxy agents, the mechanism becomes DSIC. }
Remember, Proposition \ref{prop:alg_correctness} proves that the auction terminates in a core outcome, which is weakly Pareto-optimal. 
In Corollary \ref{cor:bidderoptimal}, we show that whenever Step 3 is executed, if $|I^t| = 1$, then the outcome is a bidder-optimal assignment.
Let us first show that every feasible assignment is either in the core or has a blocking bidder-seller pair that involves a bidder who is not better off in this assignment than in the bidder-optimal assignment. Versions of the following lemma appeared in \citet{gale1985some} where it is attributed to J. S. Hwang. 

\begin{lemma}[Hwang’s Lemma, private budgets]\label{lemma:hwang}
Assume that in every auction round $t$, and \(
\lvert I^t\rvert \le 1 \quad\text{for all }t.\)
Let $(\mu,p)$ be a feasible assignment for the assignment market $M=(\mathcal B,\mathcal S, V, B, R)$.  Let $(\mu^*,p^*)$ be the bidder‐optimal assignment (with payoffs $\pi_i^*$), and define $\mathcal B^+ \;=\;\{\,i\in\mathcal B \mid \pi_i>\pi_i^*\,\}.$ If $\mathcal B^+\neq\varnothing$, then there exists a blocking pair $(k,j)\in(\mathcal B\setminus\mathcal B^+)\times\mathcal S$.
\end{lemma}


The proof can be found in appendix \ref{app:hwang}.

\begin{proposition}\label{prop:ic_auction}
	If, each time Step 3 is reached during the auction, we have \MB{$|I^t| \leq 1$,} then it is a (weakly) dominant strategy for each bidder $i$ to submit her demand set truthfully, hence the auction is ex-post incentive-compatible.
\end{proposition}

\proof{Proof:}
Proposition \ref{prop:ic_auction} follows from Lemma \ref{lemma:hwang}. If an assignment is not in the bidder-optimal core, then some bidder could indeed prefer a different assignment as the bidder-optimal one. However, Lemma \ref{lemma:hwang} asserts that, in such a case, the assignment would have a pair $(k,j)$ with a truthful bidder $k \in (\mcal{B}\backslash \mcal{B}^+)$, that would block this assignment also in the auction where bidders $i \in \mcal{B}^+$ report false demand sets. Suppose there does not exist such a blocking pair $(k,j)$. This means that no agent can improve his utility by false reporting. \qed
\endproof

\MB{In the version of Hwang’s Lemma where each bidder’s budget $b^i$ is private and potentially misreported, the argument in the critical “tight‐price” case (when $p(j)=p^*(j)=b^i$ and $\pi_{i'}=\pi^*_{i'}$) relies on the auction’s dynamics being “local” enough that no more than one bidder ever hits her budget cap in a single round.  We assume \( \lvert I^t\rvert \le 1 \) so that a unilateral deviation cannot trigger a chain of simultaneous budget‐induced price jumps or reassignments.  Below is a version of the lemma that assumes that budgets $b^i$ are fixed and public information. Here, every blocking argument reduces to a simple static comparison of prices versus known caps.  No control over simultaneous budget hits is required, and the $\lvert I^t\rvert\le1$ assumption is dropped, shown in the proof in appendix \ref{app:hwang_public}.}

\begin{lemma}[Hwang’s Lemma, public budgets]\label{lemma:hwang_public}
Let $M=(\mathcal B,\mathcal S,V,B,R)$ be an assignment market in which each bidder $i$ has a hard budget $b^i$ that is public information and cannot be misreported.  Let $(\mu^*,p^*)$ be the outcome of the ascending auction (with bidder payoffs $\pi_i^*$) where bidders bid truthfully, and let $(\mu,p)$ be any other feasible assignment (with payoffs $\pi$) where bidder $i$ misreports, and 
$ \mathcal B^+ = \{\,i\in\mathcal B \mid \pi_i>\pi_i^*\,\}.$
If $\mathcal B^+\neq\varnothing$, then there exists a blocking pair $(k,j)$ with $k\in\mathcal B\setminus\mathcal B^+$ and $j\in\mathcal S$.
\end{lemma}
With this adaptation of Hwang's Lemma to the case with public budgets, we can state a corresponding corollary to Proposition \ref{prop:ic_auction}.
\MB{
\begin{corollary}\label{corollary:ic_auction}
  Suppose that all budgets \(b^i\) are common knowledge.  Then, in the ascending‐price auction, each bidder’s weakly dominant strategy is to report her true demand set, and hence the mechanism is ex‐post incentive-compatible.
\end{corollary}
}

The next subsection shows that even though the ascending auction with public budgets but without $|I^t|\leq 1$ would is ex-post incentive-compatible, it does not necessarily find a surplus-maximizing core outcome.

\subsection{The Value of Value Queries}

If in some iteration of the above algorithm Step 3, is reached with $|I^t| > 1$, the ascending auction with only demand queries does not necessarily find the surplus-maximizing core outcome. If the auctioneer restricts items from the wrong bidder in the iterative auction, surplus can be arbitrarily low. A natural question to ask is whether one can avoid such losses in surplus, if the auctioneer had access to value queries during the auction and was able to elicit the true valuations of an item for the bidders who hit their budget constraint. For example, in Step 3 the auctioneer might always exclude the bidder with the lowest value for the item. Unfortunately, surplus of such an auction can still be arbitrarily low, as we show in example \ref{ex:lowrev}.



\begin{example}\label{ex:lowrev}
Suppose we have three bidders with the valuations for items $A$, $B$, and $C$ as described in Table \ref{tab:values}.
\begin{table}[h!]
	\centering 
\begin{tabular}{c|c|c|c|c}
	& $v_i(A)$ & $v_i(B)$ & $v_i(C)$ & $b^i$ \\ \hline
	Bidder $i=1$ & $6$ & $1$ & $0$ & $2$ \\
	Bidder $i=2$ & $5$ & $0$ & $2$ & $2$ \\
	Bidder $i=3$ & $0$ & $0$ & $M$ & $1$
\end{tabular}
\caption{Valuations of bidders in example \ref{ex:lowrev}}
\label{tab:values}
\end{table}

The first steps of the auction are as described in Table \ref{tab:firstrounds}.
\begin{table}[h!]
	\centering
\begin{tabular}{c|c|c|c|c|c|c|c|c|c}
	& $p^t$& $D_1(p^t,R_1^t)$ & $D_2(p^t,R_2^t)$ & $D_3(p^t,R_3^t)$ & $R_1^t$ & $R_2^t$ & $R_3^t$ & $O^t$ & $I^t$ \\ \hline
	$t=1$ & $(0,0,0)$ & $\{A\}$ & $\{A\}$ & $\{C\}$ & $\mcal{S}$& $\mcal{S}$ & $\mcal{S}$ & $\{A\}$ & $\emptyset$ \\
	$t=2$ & $(1,0,0)$ & $\{A\}$ & $\{A\}$ & $\{C\}$ & $\mcal{S}$& $\mcal{S}$ & $\mcal{S}$ & $\{A\}$ & $\emptyset$ \\
	$t=3$ & $(2,0,0)$ & $\{A\}$ & $\{A\}$ & $\{C\}$ & $\mcal{S}$& $\mcal{S}$ & $\mcal{S}$ & $\{A\}$ & $\emptyset$ \\
	$t=4$ & $(3,0,0)$ & $\{B\}$ & $\{C\}$ & $\{C\}$ & $\mcal{S}$& $\mcal{S}$ & $\mcal{S}$ & $\emptyset$ & $\{1,2\}$
\end{tabular}
\caption{First four rounds of the iterative auction in example \ref{ex:lowrev}.}
\label{tab:firstrounds}
\end{table}

Now we have to remove item $A$ either from $R_1^t$ or $R_2^t$. According to the greedy strategy (bidder 1 has a higher value for $A$ than bidder 2), we remove bidder 2 from $R_2^t$ and the auction proceeds as follows (see Table \ref{tab:lastrounds}).
\begin{table}[h!]
	\centering
\begin{tabular}{c|c|c|c|c|c|c|c|c|c}
	& $p^t$& $D_1(p^t,R_1^t)$ & $D_2(p^t,R_2^t)$ & $D_3(p^t,R_3^t)$ & $R_1^t$ & $R_2^t$ & $R_3^t$ & $O^t$ & $I^t$ \\ \hline
	$t=5$ & $(2,0,0)$ & $\{A\}$ & $\{C\}$ & $\{C\}$ & $\mcal{S}$& $\mathcal{S}\setminus \{A\}$ & $\mcal{S}$ & $\{C\}$ & $\emptyset$ \\
	$t=6$ & $(2,0,1)$ & $\{A\}$ & $\{C\}$ & $\{C\}$ & $\mcal{S}$& $\mathcal{S}\setminus \{A\}$ & $\mcal{S}$ & $\{C\}$ & $\emptyset$ \\
	$t=7$ & $(2,0,2)$ & $\{A\}$ & $\{0,C\}$ & $\{0,B\}$ & $\mcal{S}$& $\mathcal{S}\setminus \{A\}$ & $\mcal{S}$ & $\emptyset$ & $\{3\}$ \\
	$t=8$ & $(2,0,1)$ & $\{A\}$ & $\{C\}$ & $\{0,B\}$ & $\mcal{S}$& $\mathcal{S}\setminus \{A\}$ & $\mcal{S}\setminus \{C\}$ & $\emptyset$ & $\emptyset$
\end{tabular}
\caption{Last rounds of the iterative auction in example \ref{ex:lowrev}, if the bidder with the lowest value is excluded.}
\label{tab:lastrounds}
\end{table}

And the auction terminates with the assignment $\mu(1) = A$, $\mu(2) = C$, $\mu(3) = B$ at prices $p=(2,0,1)$ with surplus $6+2+0=8$.
On the other hand, if the auctioneer removes item $A$ from $R_1^t$ instead of $R_2^t$ in step $t=4$, the auction proceeds as in Table \ref{tab:altround}.

\begin{table}[h!]
	\centering
\begin{tabular}{c|c|c|c|c|c|c|c|c|c}
	& $p^t$& $D_1(p^t,R_1^t)$ & $D_2(p^t,R_2^t)$ & $D_3(p^t,R_3^t)$ & $R_1^t$ & $R_2^t$ & $R_3^t$ & $O^t$ & $I^t$ \\ \hline
	$t=5$ & $(2,0,0)$ & $\{B\}$ & $\{A\}$ & $\{C\}$ & $\mcal{S}\setminus\{A\}$& $\mathcal{S}$ & $\mcal{S}$ & $\emptyset$ & $\emptyset$
\end{tabular}
\caption{Last round of the iterative auction in example \ref{ex:lowrev}, if the bidder with the highest value is excluded.}
\label{tab:altround}
\end{table}

Now, the auction terminates with the assignment $\mu(1) = B$, $\mu(2) = A$, $\mu(3) = C$ with surplus $1+5+M$. Since $M$ is arbitrary, the loss of surplus can be arbitrarily high.
\end{example}


\section{Computational Complexity}\label{sec:value_queries}

We now analyze the complete-information case, where the auctioneer has access to all values and budgets and aims to find a surplus-maximizing core allocation with unit-demand bidders. This paper studies the computational complexity of the decision version of this problem, determining whether such an allocation exists. 
If the answer is negative, we cannot expect simple (polynomial-time) iterative auction mechanisms as used in practice to find this solution. 

The complexity class NP (nondeterministic polynomial time) consists of decision problems for which a given solution can be verified in polynomial time. The notion of NP-hardness is defined in terms of the ability to reduce any problem in NP to the problem in question. Since NP is defined around decision problems, NP-hardness proofs naturally start with decision versions.\footnote{Once the decision version's hardness is established, extending the result to the corresponding optimization problem usually follows naturally, since solving the optimization problem implies solving the decision problem multiple times to narrow down the optimal solution. We refer the interested reader to \citet{arora2009computational}.}

\begin{framed}
\noindent \textbf{Maximum Surplus Budget Constrained Stable Bipartite Matching (\textsc{MBSBM})}

\textbf{Input:} Two disjoint sets $\mcal{S}$ (sellers) and $\mcal{B}$ (buyers) of $n$ agents each, a budget $b^i$ for each agent $i \in \mcal{B}$ and a reserve value $r_j$ for each seller $j \in \mcal{S}$, a value $v_i(j)$ for each pair of agents $i \in \mcal{B}$ and $j \in \mcal{S}$, and a non-negative integer $k$

\textbf{Output:} Boolean value

\textbf{Question:} Does there exist a core outcome $\mu$ with total surplus $\sum_{i \in \mcal{B}} (v_i( \mu(i)) - r_{\mu(i)} )\geq k$?
\end{framed}

\begin{theorem}
	\label{thm:mbsbm-compl}
	\textsc{MBSBM} is NP-complete. 
\end{theorem}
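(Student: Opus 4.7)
\begin{proof1}[Proof Plan]
The plan is to reduce from the NP-hard problem \textsc{MIS} on cubic graphs using the gadget construction already introduced. NP-membership is immediate from the MILP formulation (q-BC) together with the known fact that MILP feasibility lies in NP \citep{del2017mixed}, so only NP-hardness requires work. Given an instance $(G,k)$ of \textsc{MIS} with $G$ cubic, the first step is to build the bipartite market $G' = \bigcup_{u \in V} G'_u \cup \bigcup_{e \in E} G'_e$ from the vertex and edge gadgets of Figures \ref{fig:vertex_gadg} and \ref{fig:edge_gadg}, connected via the shared agents $\epsilon_u$. I would then fix an appropriate threshold of the form $K = |E|\cdot(6|E|^2 + 27) + |V|\cdot(|V|+26) + k$ and claim that $G$ has an independent set of size $\geq k$ if and only if $(G',K)$ is a yes-instance of \textsc{MBSBM}.

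For the forward direction, given an IS $S \subseteq V$ with $|S|\geq k$, I would explicitly construct a matching $\mu$ on $G'$: in every edge gadget $G'_e$ pick the solid matching of Figure \ref{fig:edge_gadg} if at least one endpoint of $e$ is in $S$, and the dashed one otherwise; in every vertex gadget $G'_u$ pick the solid (high-welfare) matching if $u \in S$ and the dashed one if $u \notin S$. Using Lemma \ref{lem:generality}, prices can be set equal to the relevant buyer budgets, and one verifies case by case that no blocking pair crosses a gadget boundary precisely because for every edge $e=(u,u')$, at most one of $u,u'$ lies in $S$, so at most one of $\epsilon_u,\epsilon_{u'}$ is ``demanding'' its $\alpha$-gate in $G'_e$. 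The resulting welfare equals exactly $K$ by the accounting implicit in the gadget tables.

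For the reverse direction, suppose $\mu$ is a core-stable outcome of welfare $\geq K$. By Lemma \ref{lem:match-welfare}, no stable assignment can ever match $\epsilon_u$ to $\alpha^u_e$ and $\epsilon_{u'}$ to $\alpha^{u'}_e$ in the same edge gadget $G'_e$ for an edge $e=(u,u')$, since doing so would cost more than $|V|$ in welfare inside $G'_e$ and cannot be recovered by the vertex gadgets (which differ in welfare by only $1$ between their two stable configurations). Hence in every edge gadget, at most one endpoint ``reaches out'' to its vertex gadget. Interpreting ``$u \in S$'' as ``vertex gadget $G'_u$ is in its high-welfare (solid) configuration'', I would argue that adjacent vertices cannot both be high-welfare, so $S$ is independent. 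Finally, the welfare decomposes as $|E|\cdot(6|E|^2+27) + |V|\cdot(|V|+26) + |S|$, so $|S| \geq k$ follows from $\mathrm{SW}(\mu) \geq K$.

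The main obstacle will be the third paragraph: carefully verifying that the only way to achieve welfare at least $K$ is through the two ``canonical'' stable configurations in each gadget, and that the coupling through the shared agents $\epsilon_u$ indeed enforces independence. This requires a clean case analysis of the blocking-pair conditions across the gadget interface using the valuations in Tables \ref{tab:edge-g} and \ref{tab:vertex-g}, and appealing to Lemma \ref{lem:generality} to reduce attention to prices equal to buyer budgets so that the discrete payoff comparisons become finite and checkable.
\end{proof1}
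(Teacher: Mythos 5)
Your overall strategy is the same as the paper's: NP-membership via the MILP, NP-hardness via a reduction from \textsc{MIS} on cubic graphs using the vertex and edge gadgets, the same welfare threshold (up to a constant the paper itself states inconsistently), and the same two-direction argument with Lemmata \ref{lem:match-welfare} and \ref{lem:generality}. However, your forward-direction construction contains a concrete error. You select the \emph{solid} matching in an edge gadget $G'_e$, $e=(u,u')$, whenever \emph{at least one} endpoint of $e$ lies in $S$. The two edge-gadget matchings are not interchangeable: the solid matching leaves $\alpha^{u'}_e$ unsatisfied (it earns $b^{\gamma_e}-r_{\alpha^{u'}_e}=6$ from $\gamma_e$ but would earn $7$ from $\epsilon_{u'}$), while the dashed matching leaves $\alpha^u_e$ unsatisfied (it earns $2$ from $\gamma_e$ but would earn $3$ from $\epsilon_u$). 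Stability therefore requires that the gate left unsatisfied be the one whose $\epsilon$-partner does \emph{not} prefer it, i.e.\ whose vertex is \emph{not} in $S$ (so that $\epsilon$ is matched to $\lambda$, its top choice, rather than to $\xi$ with payoff $|V|+5 < |V|+8$). Under your rule, when $u \notin S$ but $u' \in S$ you pick the solid matching, and then $(\epsilon_{u'}, \alpha^{u'}_e)$ is a blocking pair: $\epsilon_{u'}$ is matched to $\xi_{u'}$ with payoff $|V|+5$ and strictly prefers $\alpha^{u'}_e$ at an affordable price, while $\alpha^{u'}_e$ strictly prefers $\epsilon_{u'}$. The correct rule, as in the paper, is asymmetric: solid if and only if $u \in S$ (independence of $S$ then guarantees $u' \notin S$, so the unsatisfied gate $\alpha^{u'}_e$ finds no willing partner), and dashed otherwise. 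Your stated justification --- ``at most one of $\epsilon_u,\epsilon_{u'}$ is demanding its gate'' --- is the right invariant but your rule does not align the edge-gadget choice with \emph{which} endpoint is doing the demanding.

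A second, smaller imprecision is in the reverse direction: Lemma \ref{lem:match-welfare} penalizes edge gadgets in which both $\epsilon_u$ and $\epsilon_{u'}$ strictly \emph{prefer} their gates to their current matches (which, by stability, forces both gates to be satisfied inside $G'_e$ at a welfare loss exceeding $|V|$), not edge gadgets in which both $\epsilon$'s are actually \emph{matched} to their gates as you write. The conclusion you draw (adjacent vertices cannot both be in the high-welfare configuration) is correct, but the case analysis must be run on the preference condition, exactly as the paper does when it defines $IS(G)=\{u : \mu(\epsilon_u)=\xi_u\}$ and observes that $u,u' \in IS(G)$ would trigger the hypothesis of Lemma \ref{lem:match-welfare}.
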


The proof can be found in the appendix. If a decision problem is in NP and is NP-hard, then it is NP-complete. 
First, we show that the problem is in NP. In Appendix \ref{sec:milp},  we model the problem as a mixed-integer program. Once a problem is modeled as such and it is not exponential in the size of the problem description, there is a polynomial-time nondeterministic algorithm, where we guess the values of integer variables and solve the resulting linear program (LP) in polynomial time. The observation is important, because with more complex valuations such as in combinatorial exchanges with hard budget constraints of buyers, the problem was shown not to be in NP anymore and substantially harder to solve \citep{bichler2021or}. 
On the other hand, we know that for the case when all budgets are non-binding, $b^i \geq v_i(j)$ for all $i \in \mcal{B}$ and $j \in \mcal{S}$, the problem is equivalent to the maximum-weight bipartite matching, which admits a polynomial time solution via the Hungarian algorithm \citep{kuhn1955hungarian}. Core prices can be derived from the duals of the corresponding linear program \citep{shapley1971assignment}. Therefore, the case with unit-demand bidders with binding budgets is particularly interesting. 

Second, in Appendix \ref{sec:complexity}, we show that the problem is also NP-hard, i.e., it is at least as hard as the hardest problems in NP. The standard way to prove this is via a polynomial-time reduction from a known NP-hard problem to another one. Reductions transform one problem into another in such a manner that the transformation process (reduction) can be carried out in polynomial time, and the original problem has a solution if and only if the transformed problem has a solution.
The problem we reduce from is the Maximum Independent Set (\textsc{MIS}) problem, which is well-known to be both NP-hard and APX-hard, even under significant restrictions on the vertex degree of the graph \citep{garey}. The class APX consists of NP optimization problems that admit polynomial-time approximation algorithms capable of achieving a solution within a constant factor of the optimal. Specifically, APX contains problems for which an efficient algorithm can approximate the optimal solution to within a fixed multiplicative ratio. A problem is characterised as APX-hard, whenever there exists an $\epsilon > 0$, such that no $(1-\epsilon)$-approximation of the solution can be computed in polynomial time, unless P = NP. The dual classification of \textsc{MIS} as both NP-hard and APX-hard signifies that while you can approximate the solution within a constant factor, you cannot get arbitrarily close to the optimal solution in polynomial time. Through the polynomial-time reduction, these complexity characteristics are transferred to our problem, underscoring its inherent computational difficulty. 

\begin{framed}
\noindent\textbf{Maximum Independent Set (\textsc{MIS})}

\textbf{Input:} A graph $G = (V,E)$, with vertices $V$ and edges $E$, and a non-negative integer $k$

\textbf{Output:} Boolean value 

\textbf{Question:} Does there exist an Independent Set (IS) of size at least $k$, where as IS we define a set of vertices no two of which are adjacent?
\end{framed}

Our polynomial-time reduction uses a specific construction in which we introduce an individual vertex and an edge gadget for each vertex and edge of the maximum independent set (MIS) problem. In complexity theory, when performing a reduction from computational problem A to problem B, the term gadget refers to a subset of a problem instance of problem B that simulates the behavior of certain features of problem A. Drawing from graph theory, the vertex and edge gadgets are bipartite graphs where each edge gadget is connected to two vertex gadgets, corresponding to the two endpoints of the original edge. Each vertex in the gadget has a degree of three and thus each vertex gadget is connected to three edge gadgets. The edge gadget allows for two matchings between buyer and seller nodes, which all lead to the same surplus. The vertex gadgets also allow for two feasible stable matchings, where the surplus differs by one. The edges between an edge and a vertex gadget are such that it is not possible to select the surplus-maximizing matching in two consecutive vertex gadgets of two neighboring vertices, because it would generate pairs of blocking agents in the edge gadget of the connecting edge, i.e., a matching in the edge gadget would not be stable. Similar to the original MIS problem, where there cannot be two adjacent vertices in an independent set, in our construction, there cannot be two adjacent vertex gadgets with a high surplus matching. While in the MIS problem, we need to find the independent set that is maximal, in the MBSBM problem we need to determine the stable matching of the overall bipartite graph that maximizes surplus. Note that both the vertex and edge gadgets contain a pair of buyers that admit the same budget constraint, thus leading to a violation of the general position condition.


\section{Conclusions}

In his seminal paper, \citet{vickrey1961counterspeculation} showed that single-item auctions can be designed such that it is a dominant strategy for participants to reveal their preferences truthfully. The principle can be generalized \citep{varian1994generalized} to allow for general valuations and only poses a seemingly innocuous assumption of quasilinear and transferable utility. For single-item auctions, the Vickrey auction and the ascending (aka progressive or English) auction are incentive-compatible and implement the same outcome \citet{vickrey1961counterspeculation}. For multi-item auctions where bidders have unit demand, there also exists an ascending auction that implements the Vickrey outcome and is ex-post incentive-compatible \citep{demange1986multi}. Ascending auctions with general valuations are not ex-post incentive compatible unless we make strong additional assumptions \citep{de2007ascending}.

Budget constraints are widespread and they violate the transferable utility assumption in this literature on quasilinear auction design. It was already shown that in multi-unit auctions with quasilinear utilities and hard budget constraints, no incentive-compatible and efficient mechanism exists \citep{Dobzinski2008}. 
We show that in an assignment market with quasilinear and unit-demand preferences and hard budget constraints, there is no incentive-compatible mechanism terminating in a core solution for every input. Our ascending auction is strongly Pareto-optimal if a simple condition is satisfied, and in the core without this condition. If the condition holds, our auction is ex-post incentive-compatible. If the condition doesn't hold, the problem of finding a surplus-maximizing core outcome is NP-hard, and a simple polynomial-time auction (e.g., an ascending auction) cannot always find a strongly Pareto-optimal outcome. 



\bibliographystyle{ormsv080} 
\bibliography{literature} 

\newpage
\begin{APPENDICES}

\section{Proofs}

\subsection{Proofs from Section \ref{sec:dgs_auction}}
\proof{Proof of Lemma \ref{lem:overdem_not_underdem}.}
	Since $O \setminus T \subseteq O$ and $O$ is minimally overdemanded, $O \setminus T$ is not overdemanded, so
	\[
	|O|-|T| = |O \setminus T| \geq  |\{i\,:\, D_i(p,R_i) \subseteq O\setminus T \}|.
	\]
	Now
	\[
	\{i\,:\, D_i(p,R_i) \subseteq O\setminus T \} = \{i\,:\, D_i(p,R_i) \subseteq O\} \setminus \{i\,:\, D_i(p,R_i) \subseteq O \, \wedge \, D_i(p,R_i) \cap T \neq \emptyset\},
	\]
	so
	\[
	|O|-|T| \geq |\{i\,:\, D_i(p,R_i) \subseteq O\}| - |\{i\,:\, D_i(p,R_i) \subseteq O \, \wedge \, D_i(p,R_i) \cap T \neq \emptyset\}|.
	\]
	By rearranging terms we get
	\[
	|\{i\,:\, D_i(p,R_i) \subseteq O \, \wedge \, D_i(p,R_i) \cap T \neq \emptyset\}| \geq |\{i\,:\, D_i(p,R_i) \subseteq O\}| - |O| + |T|,
	\]
	and since $O$ is overdemanded, this implies
	\[
	|\{i\,:\, D_i(p,R_i) \subseteq O \, \wedge \, D_i(p,R_i) \cap T \neq \emptyset\}| > |T|.
	\]
\qed \endproof

\vspace{0.3cm}
In the following, we need a simple auxiliary lemma.

\begin{lemma}\label{lem:aux_overdem}
Let $(\nu,q)$ be an arbitrary outcome, and let $O \subseteq \mcal{S}$ be a minimally overdemanded set with respect to the demand sets $D_i(q,R_i)$. Let $\emptyset \neq J \subsetneq O$. Then there exists a bidder $i^*$ with $D_{i^*}(q,R_{i^*}) \subset O$, $D_{i^*}(q,R_{i^*}) \cap J \neq \emptyset$ and $\nu(i^*) \not\in J$.
\end{lemma}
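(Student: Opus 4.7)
The plan is a direct pigeonhole argument using Lemma \ref{lem:overdem_not_underdem} together with the fact that $\nu$ is an assignment.

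First, I would apply Lemma \ref{lem:overdem_not_underdem} to the minimally overdemanded set $O$ with the subset $T := J \subseteq O$ (note $J \neq \emptyset$ as required). This yields
\[
\bigl|\{i \in \mcal{B}\,:\, D_i(q,R_i) \subseteq O \,\wedge\, D_i(q,R_i) \cap J \neq \emptyset\}\bigr| > |J|.
\]
Call this set of bidders $\mcal{B}_J$.

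Next, I would bound the number of bidders assigned to items in $J$ under $\nu$. Since $O$ is overdemanded, we have $0 \notin O$, hence $0 \notin J$. Because $\nu$ is an assignment, each non-dummy item has at most one preimage, so
\[
|\{i \in \mcal{B}\,:\, \nu(i) \in J\}| \leq |J|.
\]
Combining the two bounds, $|\mcal{B}_J| > |J| \geq |\{i\,:\,\nu(i) \in J\}|$, so there must exist some $i^* \in \mcal{B}_J$ with $\nu(i^*) \notin J$. This bidder $i^*$ satisfies all three required properties: $D_{i^*}(q,R_{i^*}) \subseteq O$, $D_{i^*}(q,R_{i^*}) \cap J \neq \emptyset$, and $\nu(i^*) \notin J$.

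There is no real obstacle here; the only subtlety is verifying that $0 \notin J$ so the assignment constraint $|\nu^{-1}(\{j\})| \leq 1$ for $j \neq 0$ can be invoked to bound $|\{i : \nu(i) \in J\}|$ by $|J|$. The lemma is essentially a compact restatement of the observation that a minimally overdemanded set has strictly more ``interested'' bidders than items, so not all of them can be matched within any strict subset of $O$.
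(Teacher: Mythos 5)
Your proof is correct: the paper states this as ``a simple auxiliary lemma'' without giving a proof, and the argument it evidently intends is exactly yours --- invoke Lemma~\ref{lem:overdem_not_underdem} with $T=J$ to get strictly more than $|J|$ bidders demanding into $J$ from within $O$, then use $0\notin O$ and the assignment property $|\nu^{-1}(\{j\})|\leq 1$ for $j\neq 0$ to conclude by pigeonhole that one of them has $\nu(i^*)\notin J$. The only cosmetic point is that the ``$\subset$'' in the statement should be read as ``$\subseteq$'', consistent with the hypothesis of Lemma~\ref{lem:overdem_not_underdem}, which is what your argument delivers.
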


\proof{Proof of Proposition \ref{prop:reachable_alloc}.}
We prove the following statements by induction on the iteration $t$, which imply Proposition \ref{prop:reachable_alloc}.
\begin{itemize}
	\item If in iteration $t$ Step 4 or 5 is executed, then $p^t \leq q$.
	\item If in iteration $t$ Step 3 is executed, then $p^{t+1} \leq q$ and we can choose bidder $i \in I^t$ such that $\nu(i) \not\in J^t_i$, and consequently, by choosing this bidder, we have $\nu(i) \in R_i^{t+1}$ for all $i$.
\end{itemize}

\noindent For iteration $t=1$ this is obviously true, since only Step 4 or Step 5 can be executed, and $p^1 = (0,\dots,0) \leq q$.

\noindent Now suppose that $t > 1$. First assume that in iteration $t$ Step 4 or 5 is executed. If in iteration $t-1$, Step 3 was executed, we have that $p^t = p^{t-2}$, and since Step 3 cannot be executed twice in a row, we have by induction $p^{t} = p^{t-2} \leq q$. Now assume that Step 4 was executed in iteration $t-1$. Then by induction we have $p^{t-1} \leq q$. Towards a contradiction, assume that the set $J = \{j \in O^{t-1} \,:\, p^t(j) > q(j)\}$ is not empty. Note that for all $j \in J$ we must have $q(j) = p^{t-1}(j)$. By Lemma \ref{lem:aux_overdem} there exists a bidder $i^*$ with $D_{i^*}(p^{t-1},R^{t-1}_{i^*}) \subset O^{t-1}$, $D_{i^*}(p^{t-1},R^{t-1}_{i^*}) \cap J \neq \emptyset$ and $\nu(i^*) \not\in J$. Thus, we have $\nu(i^*) \notin O^{t-1}$ or $p^t(j) \leq q(j)$. Since $\nu(i^*) \in R_{i^*}^{t-1}$, we have for every $j \in D_{i^*}(p^{t-1},R^{t-1}_{i^*}) \cap J$ that $\pi_{i^*}(\nu(i^*),q) < \pi_{i^*}(j,q)$. Consequently, since $(\nu,q)$ is a core outcome, so $(i^*,j)$ is no blocking pair, we must have that $q(j) = b^{i^*}$. But since $q(j) = p^{t-1}(j)$ and $j \in O^{t-1}$, it would follow that $j \notin D_{i^*}(p^t,R_i^t)$, so $I^t \neq \emptyset$ and Step 3 is executed in iteration $t$. This contradicts our assumption that Step 4 is executed.

Now consider the case where Step 3 is executed in iteration $t$. Since Step 3 cannot be executed twice in a row, we have by induction that $p^{t+1} = p^{t-1} \leq q$. It remains to show that there is some bidder $i \in I^t$ with $\nu(i) \notin J^t_i$. Again, towards a contradiction, assume that for all $i \in I^t$ we have $\nu(i) \in J_i^t$. Hence, $p^{t-1}(\nu(i)) = b^i$ for all $i \in I^t$, and since $q \geq p^{t-1}$ by induction, we have $q(\nu(i)) = b^i$ for all $i \in I^t$. Our argument is now very similar to the one above: Consider the set $J = \{j \in O^{t-1}\,:\,q(j) = p^{t-1}(j)\}$. Then $\nu(I^t) \subseteq J$, and in particular $J^t$ is nonempty. Hence there is a bidder $i^*$ with $D_{i^*}(q,R_{i^*}) \subset O$, $D_{i^*}(q,R_{i^*}) \cap J \neq \emptyset$ and $\nu(i^*) \not\in J$. Again, we have $\nu(i^*) \not\in O^{t-1}$, or $q(\nu(i^*)) > p^{t-1}(\nu(i^*))$. In both cases, $i^*$ would prefer any good $j$ in the intersection $D_{i^*}(q,R_{i^*}) \cap J \neq \emptyset$ to $\nu(i^*)$ at prices $q$. But since $i^* \not\in I^t$, $b^{i^*} > q(j)$, so $(i^*,j)$ would form a blocking pair.

\[
I = \{i \in \mcal{B}\,:\, D_i(p^{t-1},O^{t-1}) \subseteq O^{t-1} \, \wedge \, \nu(i) \in O^{t-1} \, \wedge \, p^{t-1}(\nu(i)) = q^{t-1}(\nu(i)) \}.
\]
Then $I^t \subseteq I$, so $I$ is not empty. By Lemma \ref{lem:aux_overdem}, there is a bidder $i^* \notin I$ with $D_{i^*}(p^{t-1},R_{i^*}^{t-1}) \subseteq O^{t-1}$ and $\nu(I)~\cap~D_{i^*}(p^{t-1},R_{i^*}^{t-1}) \neq \emptyset$. The $\nu(i^*) \notin O^{t-1}$, or $q(\nu(i^*)) > p^{t-1}(\nu(i^*))$. Since by induction we have $\nu(i^*) \in R_{i^*}^{t-1}$, we have that $\pi_{i^*}(\nu(i^*),q) < \pi_{i^*}(j,q)$ for all $j \in \nu(I) \cap D_{i^*}(p^{t-1},R_{i^*}^{t-1})$. But since $(\nu,q)$ is a core outcome, for no such $j$, $(i^*,j)$ can be a blocking pair -- implying that $b^{i^*} = q(j) = p^{t-1}(j)$. This is a contradiction, since it would follow that $i^* \in I$.

\qed \endproof

\proof{Proof of Proposition \ref{prop:welfmax_among_pareto}.}
	We have that
	\[
	\sum_{i \in \mcal{B}} v_i(\mu(i))  = \sum_{i \in \mcal{B}} \pi_i(\mu(i),p) + \sum_{j \in \mcal{S}} p(j) \geq \sum_{i \in \mcal{B}} \pi_i(\mu(i),p) + p(\nu(i))
	\]
	and
	\[
	\sum_{i \in \mcal{B}} v_i(\nu(i)) = \sum_{i \in \mcal{B}} \pi_i(\nu(i),q) + \sum_{j \in \mcal{S}} q(j) = \sum_{i \in \mcal{B}} \pi_i(\nu(i),q) + q(\nu(i)),
	\]
	since each good $j$ with $q(j) > 0$ is assigned to some bidder by $\nu$. Thus, it suffices to show that
	\[
	\sum_{i \in \mcal{B}} \pi_i(\mu(i),p)-\pi_i(\nu(i),q) + p(\nu(i))-q(\nu(i)) \geq 0.
	\]
	We show that each summand is non-negative by distinguishing two cases:
	
	Case 1: $p(\nu(i)) = b^i$. Then, since bidder $i$ receives $\nu(i)$ in the outcome $(\nu,q)$, $q(\nu(i)) \leq b^i = p(\nu(i))$. Since by assumption $\pi_i(\mu(i),p) \geq \pi_i(\nu(i),q)$, it follows that $\pi_i(\mu(i),p)-\pi_i(\nu(i),q) + p(\nu(i))-q(\nu(i)) \geq 0$.
	
	Case 2: $p(\nu(i)) < b^i$. Then $\pi_i(\mu(i),p) \geq \pi_i(\nu(i),p)$ - otherwise, $(i,\nu(i))$ would be a blocking pair with respect to prices $p$. Since $\pi_i(\nu(i),p) = \pi_i(\nu(i),q) + q(\nu(i)) - p(\nu(i))$, we again get that $\pi_i(\mu(i),p)-\pi_i(\nu(i),q) + p(\nu(i))-q(\nu(i)) \geq 0$.
\qed \endproof

\vspace{0.3cm}

\subsection{Proof of Hwang's Lemma with Private Budgets}\label{app:hwang}

The proof of Hwang's lemma follows those in \citet{demange1985some} and \citet{aggarwal}, but is adapted to our auction.

\proof{Proof of Lemma \ref{lemma:hwang}.}
Let us denote by $\mu(\mcal{B}^+)$, the set of items matched to bidders in $\mcal{B}^+$ in the assignment $\mu$, and by $\mu^*(\mcal{B}^+)$ the items matched to $\mcal{B}^+$ in the bidder-optimal matching with the truthful reports of all bidders $i \in \mcal{B}$. We consider two cases: $\mu(\mcal{B}^+) \ne \mu^*(\mcal{B}^+)$ (Case I) and  $\mu(\mcal{B}^+)=\mu^*(\mcal{B}^+)=\mcal{S}^+$ (Case II). We note that $\mcal{S}^+$ is defined analogously to $\mcal{B}^+$ and corresponds to the set of sellers that prefer the alternative assignment.

\vspace{0.3cm}

\textbf{Case I: $\mu(\mcal{B}^+) \ne \mu^*(\mcal{B}^+)$.}\\
For any $i \in \mcal{B}^+$ we have $\pi_i > \pi_i^* \geq 0$ and hence each bidder in $\mcal{B}^+$ is matched in $\mu$ to some item. By assumption, there exists an item $j \in \mu(\mcal{B}^+)$, $j \notin \mu^*(\mcal{B}^+)$. We define $i = \mu(j)$. Since $i \in \mu(\mcal{B}^+)$, $\pi_i > \pi_i^*$. 

By feasibility of the assignment, $p(j) \in [r_j, b^i]$ and $\pi_i + p(j) = v_i(j)$. We get three cases:
\begin{itemize}
	\item If $p^*(j) < b^i$ then $\pi_i^*+p^*(j) \geq v_i(j)$ because if $i$ wins $j$ then $\pi_i^*+p^*(j) = v_i(j)$. If $i$ wins some other item $j'$, then he has a higher utility for $j'$. Therefore, $p^*(j) \geq v_i(j)-\pi_i^* > v_i(j)-\pi_i = p(j) \implies \mathbf{p(j) < p^*(j)}$. 
	\item If $p^*(j) > b^i$ then $i$ would not win $j$ in $(\mu^*,p^*)$, but does so in $(\mu,p)$ hence $p(j) < b^i$, so we get $\mathbf{p(j) < p^*(j)}$.
  	\item If $p^*(j) = b^i$ and $p(j) \leq b^i$, then it can be that $\implies \mathbf{p(j) < p^*(j)}$, or $\mathbf{p^*(j)=p(j)}$. 
\end{itemize}
\vspace{0.3cm}

\textbf{For cases with $p(j)<p^*(j)$:} \\

Item $j$ is matched in $\mu^*$ to some $i'$, and by the choice of $j, i' \notin \mcal{B}^+$. Thus, $\pi_{i'} \leq \pi_{i'}^*$. 
By feasibility of $(\mu^*, p^*)$, $p(j)^* \in [r_{j}, b^{i'}]$ and $\pi^*_{i'} + p^*(j) = v_{i'}(j)$. 
Now, one can see that $(i',j)$ is blocking for $\mu$, because

\begin{align*}
& p(j) < p^*(j)\leq b^i\\
& \pi_{i'} \leq \pi^*_{i'} = v_{i'}(j) - p^*(j) \\ 
& \pi_{i'} + p(j) <  \pi^*_{i'} + p^*(j) = v_{i'}(j)\\
\end{align*}

Thus, $i'$ can pay more than $p(j)$ in the matching $(\mu,p)$ and thus pay more than what bidder $i$ can pay when he wins with a false report, and $i'$ can increase his utility $\pi_{i'}$ at the same time. This is a blocking pair in $(\mu,p)$.

\textbf{For case $p(j)=p^*(j)$:}\\
By assumption, $i$ and $i'$ cannot reach their budget at the same time in the truthful auction, i.e., $|I^t|\leq 1$, so $b^{i'} > b^i$, because $i'$ wins $j$ even though $i$ wants $j$.  So, if $\pi_{i'} < \pi_{i'}^*$, then $(i',j)$ is a blocking pair with respect to $(\mu,p)$.
\begin{align*}
& p(j) = p^*(j) = b^i\\
& \pi_{i'} < \pi^*_{i'} = v_{i'}(j) - p^*(j) \\ 
& \pi_{i'} + p(j) <  \pi^*_{i'} + p^*(j) = v_{i'}(j)\\
\end{align*}
Again, $(i',j)$ is a blocking pair for $(\mu,p)$.

The only remaining case: $p(j) = p^*(j) = b^i < b^{i'}$ and $\pi_{i'} = \pi_{i'}^*$.
\begin{align*}
& p(j) = p^*(j) = b^i\\
& \pi_{i'} = \pi^*_{i'} = v_{i'}(j) - p^*(j) \\ 
& \pi_{i'} + p(j) =  \pi^*_{i'} + p^*(j) = v_{i'}(j)\\
\end{align*}

\noindent Two possibilities arise:

\noindent 1. Can $i$ manipulate $b^i$? Lowering $\tilde b^{i} < b^i$ cannot increase the chances of winning $j$. Increasing $\tilde b^i > b^i$ might lead to winning, but at $p(j) > b^i$, so $\pi_i = -\infty < \pi_{i}^*$.

\noindent 2. Suppose $i$ reports $b^i$ truthfully, but reports valuations $\tilde v_i$ in order to win $j$. Consider for example 
\begin{align*}
\tilde v_i(k) &= 0 \quad \forall k \neq j
\\
\tilde v_i(j) &= \infty
\end{align*}

We will show that a losing bidder cannot become winning in this case unilaterally. 
If bidder $i$ wants to win item 
\MB{$j$}, he needs to demand this item up to his budget $b^i$. The only manipulation possible is to reduce demand on other items that he demands in the truthful auction. 
If the price of an item decreases due to the fact that bidder $i$ does not demand this item anymore, then the truthful bidders $i' \in (\mcal{B} - \mcal{B}^+)$ would still demand the item they win in the truthful auction and not demand additional items. As a consequence, the allocation would not change. So, this manipulation would not make bidder $i$ win item 
\MB{$j$}.

\vspace{0.3cm}
\textbf{Case II: $\mu(\mcal{B}^+)=\mu^*(\mcal{B}^+)=\mcal{S}^+$.}
We have that $0 \leq p(j) < p^*(j)$ for all $j \in \mathcal S^+$, since all bidders in $\mathcal B^+$ have a higher payoff than in the truthful auction $(\mu^*,p^*)$.
At the end of the truthful auction with outcome $(\mu^*,p^*)$, the prices of all items in $\mathcal S^+$ are strictly positive. If no bidder $i \not\in \mcal B^+$ was interested in an item $j \in \mcal S^+$, at least one item in $\mcal B^+$ would have a price of $0$. As long as the prices are less than $p^*(j)$, there has to be a bidder $i \not\in \MB{\mcal B^+}$ strictly preferring an item $j \in \mathcal S^+$. This bidder is blocking in the manipulated auction $(\mu,p)$.\qed
\endproof

\MB{\subsection{Proof of Hwang's Lemma with Public Budgets}\label{app:hwang_public} }
\proof{Proof of Lemma \ref{lemma:hwang_public}.}
Since each $i\in\mathcal B^+$ strictly prefers $(\mu,p)$ to the bidder‐optimal $(\mu^*,p^*)$, we have
$$
\pi_i > \pi_i^* \ge 0,
$$
so in $(\mu,p)$ every $i\in\mathcal B^+$ must be matched to some item.  Let $ S = \mu(\mathcal B^+)\setminus\mu^*(\mathcal B^+)$ be the set of items assigned in $(\mu,p)$ to bidders in $\mathcal B^+$ but not assigned to them in $\mu^*$.  
Every bidder in $\mathcal B^+$ has a match under both $\mu$ and $\mu^*$, so the fact that these matchings differ implies $\mu(\mathcal B^+) \ne \mu^*(\mathcal B^+)$, hence $S \ne \emptyset$. Pick any $j\in S$, and let $i=\mu(j)\in\mathcal B^+$.

By feasibility of $(\mu,p)$,
$$
p(j)\in[r_j,\,b^i], 
\quad
\pi_i + p(j) = v_i(j).
$$
Note that
\[
\pi_i + p(j) = v_i(j),
\quad
\pi_i^* + p^*(j) \ge v_i(j)
\quad\Longrightarrow\quad
p(j) \le p^*(j).
\]

We consider two cases:

\noindent\textbf{Case A:} $p(j) < p^*(j)$. Let $i'=\mu^*(j)$ be the bidder assigned to $j$ under $\mu^*$. 
Then
$$
\pi_{i'} + p(j)
= \bigl(v_{i'}(j)-p^*(j)\bigr) + p(j)
< v_{i'}(j),
$$
so bidder $i'$ strictly prefers item $j$ at price $p(j)$, and since $p(j) < p^*(j) \le b^{i'}$ he can afford it.  Thus $(i',j)$ is a blocking pair.

\noindent\textbf{Case B:} $p(j) = p^*(j)$ and $b^i<b^{i'}$. 
Feasibility of $(\mu,p)$ says that no winner can pay more than her budget, so $p(j)\leq b^i$. In the truthful auction $(\mu^*, p^*)$, bidder $i$ was unable to outbid the actual winner $i'$. Since a bidder with budget strictly above the final price could have outbid, we must have $b^i \leq p^*(j)$. Assume that ties are resolved in favor of the bidder with the larger budget such that $p(j)=p^*(j)=b^i<b^{i'}$.  Otherwise, bidder $i$ could have outbid $i'$ in $\mu^*$. 
Now:
\begin{itemize}
  \item If $\pi_{i'} < \pi^*_{i'}$, then
  $$
  \pi_{i'} + p(j)
  < \pi^*_{i'} + p^*(j)
  = v_{i'}(j),
  $$
  so $(i',j)$ blocks.
  \item If $\pi_{i'} = \pi^*_{i'}$, then since $b^{i'} > b^i$, bidder $i'$ could raise the price slightly to some $p' < b^{i'}$ (and $i'$ can afford it) with
  $$
    b^i < p' < b^{i'},
    \quad
    \pi_{i'} + p' = v_{i'}(j),
  $$
  improving his utility.  Again $(i',j)$ blocks.
\end{itemize}
In either case, we find a blocking pair $(i',j)$ with $i'\notin\mathcal B^+$.  

\noindent\textbf{Case C:} $p(j) = p^*(j)$ and $b^i=b^{i'}$. This case could not happen under the condition $|I^t|\leq 1$, but it can happen without the condition. 
If $i$ wins $j$ with $\pi_i>\pi_i^*$ and $i'$ wins some item $k$ with $\pi_{i'}>\pi_{i'}^*$, such an outcome would not be in the core. But the ascending auction always finds an outcome in the core such that this outcome cannot happen. 
If $i$ wins $j$ with $\pi_i>\pi_i^*$ and $i'$ wins some item $k$ with $\pi_{i'}=\pi_{i'}^*$. But in the ascending auction the bidders never reveal $v_i$ or $v_{i'}$. So, he cannot misreport to the auctioneer that $\tilde{v}_i(j)=\infty$. At the same time, bidding as if $\tilde{v}_i(k)=0$ for any other item $k$ cannot make bidder $i$ win $j$ either. If the price of an item decreases due to the fact that bidder $i$ does not demand this item anymore, then the truthful bidders $i' \in (\mcal{B} - \mcal{B}^+)$ would still demand the item they win in the truthful auction and not demand additional items.
Therefore, if in the truthful auction the auctioneer assigned item $j$ to $i'$ he will do so in the manipulated auction as well, because he has no other information. 


\endproof

\subsection{Proofs from Section \ref{sec:value_queries}}\label{app:proofs5}


\subsubsection{A MILP Formulation}\label{sec:milp}


First, we show that the problem of computing a surplus-maximizing (weak) core outcome belongs to the complexity class NP by modeling it as a Mixed Integer Linear Program (MILP).
Bilinear terms present in the quadratic formulation (weak-q-BC), namely products of continuous prices $p(j)$ and binary variables can easily be linearized via standard modeling tricks to obtain the resulting MILP. Note that the number of integer variables is not exponential in the size of the problem, and that after solving the LP, we can check that all constraints are satisfied in polynomial time such that the problem is in NP. 

\begin{align}
	\tag{weak-q-BC}
	\begin{array}{@{}l@{\quad}l@{\qquad}l@{}r@{}}
		\textrm{maximize} & \sum\limits_{i \in \mcal{B}} \pi_i + \sum\limits_{j \in \mcal{S}} \pi_j\\
		\textrm{subject to} & \pi_i = \sum\limits_{j \in \mcal{S}} (v_i (j) - p(j)) m_i(j) & \forall i \in \mcal{B}&\ (1)\\
		& \pi_j = \sum\limits_{i \in \mcal{B}} (p(j) - r_j ) m_i(j) & \forall j \in \mcal{S} &\ (2)\\
		& \sum\limits_{j \in \mcal{S}} m_i(j) \leq 1 & \forall i \in \mcal{B} &\ (3)\\
		& \sum\limits_{i \in \mcal{B}} m_i(j) \leq 1 & \forall j \in \mcal{S} &\ (4)\\
		& \pi_i \geq \big( v_i (j)  - p(j) \big) \alpha_i(j) & \forall i \in \mcal{B}, j \in \mcal{S} &\ (5)\\
		& \pi_j \geq \min(v_i (j),b^i)(1-y_i (j)) & \forall i \in \mcal{B}, j \in \mcal{S} &\ (6)\\
		& b^i \geq  p(j)  (1-\beta_i(j)) & \forall  i \in \mcal{B}, j \in \mcal{S} &\ (7)\\
		& p(j) \geq  b^i \beta_i(j) & \forall  i \in \mcal{B}, j \in \mcal{S} &\ (8)\\
		& (1-\alpha_i(j)) + (1-\beta_i(j)) -2 \leq 2(1- y_i (j)) + \epsilon y_i(j) & \forall i \in \mcal{B}, j \in \mcal{S} &\ (9)\\
		& r_j m_i(j) \leq p(j) \leq \min(v_i (j),b^i) m_i(j) + M(1-m_i(j)) &  \forall i \in \mcal{B}, j \in \mcal{S} &\ (10)\\
		& m_i(j) \in \{0,1\} &  \forall i \in \mcal{B}, j \in \mcal{S} &\ (11)\\
		& y_i (j) \in \{0,1\} & \forall i \in \mcal{B}, j \in \mcal{S} &\ (12)\\
		& \alpha_i(j) \in \{0,1\} &  \forall i \in \mcal{B}, j \in \mcal{S} &\ (13)\\
		& \beta_i(j) \in \{0,1\} &  \forall i \in \mcal{B}, j \in \mcal{S} &\ (14)\\
		& p(j) \geq 0 & \forall j \in \mcal{S} &\ (15)\\
	\end{array}\nonumber
	\label{weak-q-BC}
\end{align}

In this section, an assignment of buyer $i$ to seller $j$ is denoted as a binary variable $m_i (j)$. If the resulting assignment includes a pair $(i,j)$, then $m_i (j) = 1$, and for all other buyers except $i$, $m_{-i} (j) = 0$. The equivalence to the previous definitions is $m_i(j) = 1 \Leftrightarrow \mu(i) = j$. In order to check for the existence of deviating coalitions, two additional binary variables are introduced. Setting $\alpha_i(j) = 0$ represents the case where bidder $i$ has a benefit from deviating by trading with seller $j$, and $\alpha_i(j) = 1$ means that the bidder is best satisfied under the current assignment. The second auxiliary binary variable is set to $\beta_i(j) = 0$ if $i$ possesses a sufficient amount of money to purchase $j$, and set to $\beta_i(j) = 1$ if the budget of bidder $i$ is insufficient to acquire item of seller $j$, namely the set price of item $j$ exceeds $i$'s budget constraint. Variable $y_i(j) = 0$ reflects the case where bidder $i$ prefers to trade with seller $j$ and has sufficient budget, and the variable is set to 1 if one of the two necessary conditions does not hold. 

The utilities of buyers and sellers are defined as previously explained in Section 3. With $r_j$, we describe the reserve value or ask price of seller $j$. Constraints (1) and (2) represent the utilities of buyers and sellers, respectively. Constraints (5) and (6) guarantee (weak) core solutions. We examine all possible deviating combinations of buyer-seller pairs for a given outcome. Constraint (5) examines whether the corresponding payoff $\pi_i$ received by buyer $i$ in the selected assignment $m$ is higher or equal to the alternative assignment $(i,j)$ in question. In particular, this constraint checks whether an assignment $(i,j)$ yields a higher payoff for buyer $i$, in which case $\alpha_i(j) = 0$. Constraint (6) tests whether a seller $j$'s payoff $\pi_j$ on the optimal matching $m$ is higher or equal to the minimum value between any buyer $i$'s budget constraint $b^i$ and $i$'s valuation for the item $v_i(j)$, which represents the maximum possible payment seller $j$ could receive from any buyer. One or both of these conditions need to be true. Put differently, if both buyer and seller had a higher payoff under an alternative assignment $(i,j)$, outcome $m$ is not in the core. For an outcome to lie in the core, no buyer and seller pair must be able to both profit from an alternative allocation. These simultaneous constraints are captured by the logical \textit{and}. Since no alternative $(i,j)$ pair should increase their utilities, this translates to the logical \textit{or} of two negations: no bidder $i$ or no bidder $j$ can achieve a higher payoff in a different allocation. 

Constraints (7) and (8) examine whether bidder $i$ has a sufficient budget to obtain item $j$ under price $p(j)$. Constraint (9) is responsible for handling the value of $y_i(j)$ in an appropriate manner, to reflect whether a deviating coalition of $(i,j)$ is indeed profitable and budget-feasible, for any positive value $\epsilon<1$. The value of $y_i(j)$ depends on binary values $\alpha_i(j), \beta_i(j)$, and we verify our claim by examining the inequalities formed by the different value combinations (the tuple on the left side represents values $\alpha_i(j), \beta_i(j),y_i(j)$):
$$(0,0,0) \Rightarrow 1 + 1 - 2 \leq 2 \cdot 1 - \epsilon \cdot 0 \ \ (1^*)$$
$$(0,0,1) \Rightarrow 1 + 1 - 2 \leq 2 \cdot 0 - \epsilon \cdot 1 \ \ (2^*)$$
$$(0,1,1) \Rightarrow 1 + 0 - 2 \leq 2 \cdot 0 - \epsilon \cdot 1 \ \ (3^*)$$
$$(1,0,1) \Rightarrow 0 + 1 - 2 \leq 2 \cdot 0 - \epsilon \cdot 1 \ \ (4^*)$$
$$(1,1,1) \Rightarrow 0 + 0 - 2 \leq 2 \cdot 0 - \epsilon \cdot 1 \ \ (5^*)$$

In cases $( 1^*)$ and $(2^*)$, agent $i$ has a sufficient budget and can profit from deviating. However, inequality $(2^*)$ is infeasible, therefore the value of $y_i(j)$ cannot be set to 1 for this combination of $\alpha_i(j), \beta_i(j)$ and is forced to 0. For the remaining cases, either $i$ does not have sufficient budget ($\beta_i(j) = 1$), or has no profit from trading with $j$ ($\alpha_i(j) = 1$), or both conditions hold. In all the aforementioned cases, $y_i(j) = 1$, and thus reflects the case where no deviation is preferable from the buyer's side.

Constraint (10) then makes sure that if an item $j$ is assigned to buyer $i$, then the price is less than the minimum of the budget of this buyer or his value, and it is higher than the reserve price of the seller. 
We can conclude that the above formulation always results in the surplus-maximizing core outcome for assignment markets with budget constraints.

\subsubsection{Polynomial-Time Reduction from Maximum Independent Set}\label{sec:complexity}

Assume an instance of \textsc{MIS} defined on a cubic graph $G = (V, E)$, where $k$ represents the size of the set. A cubic graph is a graph in which all vertices have degree three. We define the transformed instance as a bipartite graph $G' = (V', E')$, with $V' = \mathcal{B} \cup \mathcal{S}$, and define functions $\pi_i:  \mathcal{S} \rightarrow \mathbb{R}_{\geq 0}$ for each buyer $i \in \mathcal{S}$, and $\pi_j: \mathcal{B} \rightarrow \mathbb{R}_{\geq 0}$ for each seller $j \in \mcal{S}$ that represent agents' payoffs.

\begin{itemize}
	\item $V'$ represents the total set of agents
	\item $\mathcal{B}$ and $\mathcal{S}$ denote the sets of buyers and sellers respectively
	\item $E'$ represents the potential transactions between buyers and sellers
	\item $\pi_i (j) := v_i(j) - b^i$ is the payoff of buyer $i \in \mathcal{B}$ from being matched to seller $j$
	\item $\pi_j (i) := b^i - r_j$ is the payoff of seller $j \in \mathcal{S}$ from being matched to buyer $i$ 
\end{itemize}

Since we assume that all assigned buyers pay prices for items equal to their budgets, we omit prices from the payoff formulae $\pi_i, \pi_j$. In Lemma \ref{lem:generality} we show that this restriction of the prices to be equal to the budgets is without loss of generality for our construction. Observe that in our specific construction, multiple buyers share the same budget constraint.

An \textit{assignment} $\mu: \mcal{B} \rightarrow \mcal{S}$ in $G'$ assigns each edge $(i,j)\in E'$ according to the condition $\sum_{j \in \mathcal{S}} \mathds{1}_{\{ \mu(i) = j \}} \leq 1$ for any agent $i \in \mathcal{B}$, where $\mathds{1}_{\{ \mu(i) = j \}}$ is the indicator function that equals 1 if $\mu(i) = j$ and 0 otherwise. The \textit{utility} of agent $i \in \mathcal{B}$ under assignment $\mu$ is defined as $\pi_i (\mu) := \pi_i(\mu(i))$ and similarly for agent $j \in \mathcal{S}$ it holds that $\pi_j (\mu) := \pi_j(\mu^{-1}(j))$. Given an assignment $\mu$, an edge $(i,j) \in E'$ is a \textit{blocking pair/ edge} if $\pi_i (\mu(i)) < \pi_i(j)$ and $\pi_j (\mu^{-1}(j)) < \pi_j(i)$. An assignment $\mu$ is \textit{stable} if it does not contain any blocking pair of agents. \\

\noindent\textbf{Edge Gadget.} Starting from an edge $e \in E$ of the original graph $G$, we construct the edge gadget $G'_e = \mathcal{B}_e \cup \mathcal{S}_e$ as a bipartite graph, with agent utilities for the subgraph defined as mentioned above with $\pi_i: \mathcal{S}_e \rightarrow \mathbb{R}_{\geq 0}$ and $\pi_j: \mathcal{B}_e \rightarrow \mathbb{R}_{\geq 0}$. Each vertex represents an \textit{agent}.

For each edge $e = (u, u') \in E$, we proceed to the following construction: 
\begin{itemize}
	\item Add three agents $\beta_e, \gamma_e, \delta_e \in \mathcal{B}_e$ 
	\item Add three agents $\eta_e, \alpha^u_e, \alpha^{u'}_e \in \mathcal{S}_e$
	\item Add two extra agents $\epsilon_u, \epsilon_{u'} \in \mathcal{B}_e$, if not present already, that will be part of the vertex gadget
\end{itemize}

Vertices $\alpha^u_e$ and $\alpha^{u'}_e$ act as gates to the vertex gadgets, which are connected to the vertices $\epsilon_u, \epsilon_{u'}$ of the vertex gadgets of $u, u'$ of the original graph. 

For each edge $e \in E$ of the original graph $G$, the corresponding edge gadget consists of the subgraph that agents $\{ \beta_e, \gamma_e, \delta_e \} \cup \{\eta_e, \alpha^u_e, \alpha^{u'}_e\}$ induce. 

If node $u \in V$ belongs to the independent set $IS(G)$, then we integrally match the pairs corresponding to the solid edges in Figure \ref{fig:edge_gadg}, otherwise, the dashed edges are matched. Any edge not present in the figure is assigned a value of zero. 

\begin{figure}[H]	
	\includegraphics[scale=0.32]{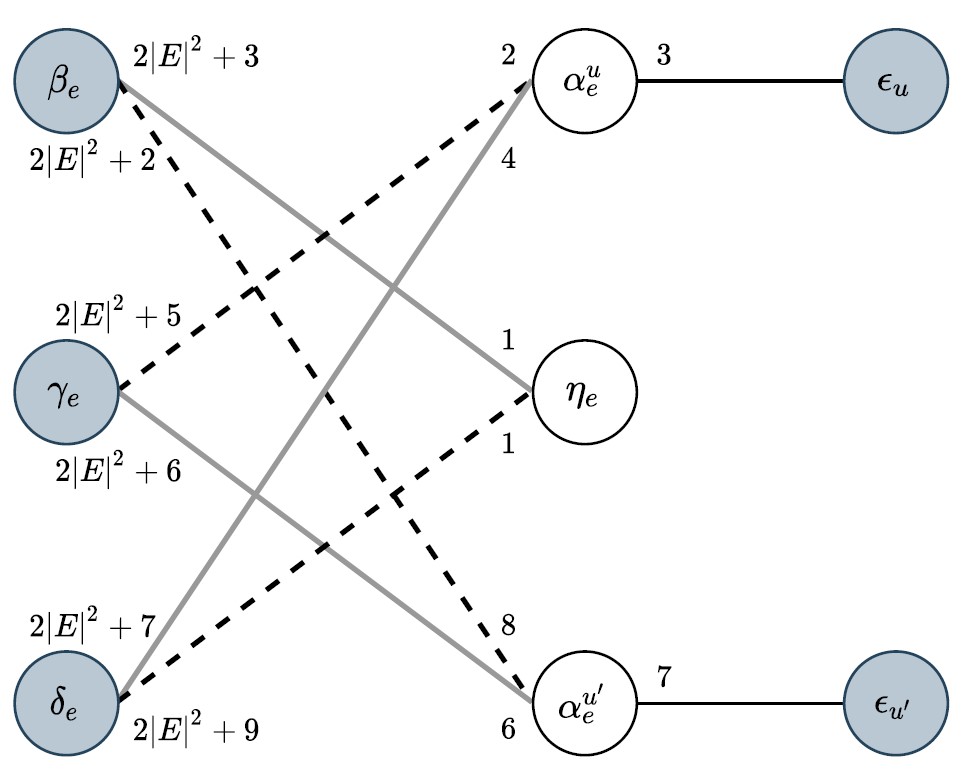}
    \caption{The edge gadget.}
	\label{fig:edge_gadg}
\end{figure}

An important observation is that any edge gadget $G'_e$ should contain two feasible stable assignments, with an equal total surplus. In every stable assignment $\mu$, at least one of $\{\alpha^u_e, \alpha^{u'}_e\}$ is unsatisfied under $\mu$, namely agents $\alpha^u_e$ or $\alpha^{u'}_e$ have a preference towards $\epsilon_u$ or $\epsilon_{u'}$ respectively. 

The valuations $v_i(j)$ of agents within the edge gadget are represented in Table \ref{tab:edge_values}, with $i \in \{\beta_e, \gamma_e, \delta_e, \epsilon_u, \epsilon_{u'}\}$ and $j \in \{\alpha^u_e, \eta_e, \alpha^{u'}_e\}$. Moreover $b^i$ corresponds to the budget of each buyer $i \in \mathcal{B}_e$ and $r_j$ to the reserve value of each seller $j \in \mathcal{S}_e$. The weights on each side of an edge, as captured in Figure \ref{fig:edge_gadg}, correspond to the buyer and seller utilities $\pi_i, \pi_j$.

\begin{table}[h!]
	\centering
	\begin{tabular}{l|llll}
		& $\alpha^u_e$   	& $\eta_e$   		& \multicolumn{1}{l|}{$\alpha^{u'}_e$}  	& $b^i$ \\ \hline
		$\beta_e$ 	& 0           	& $2|E|^2 + 12$ 	& \multicolumn{1}{l|}{$2|E|^2 + 11$}	& 9     \\
		$\gamma_e$ 	& $2|E|^2 + 12$ 	& 0          	& \multicolumn{1}{l|}{$2|E|^2 + 13$} 	& 7     \\
		$\delta_e$ 	& $2|E|^2 + 16$ 	& $2|E|^2 + 18$ 	& \multicolumn{1}{l|}{0}           	& 9     \\ 
		$\epsilon_u$	& 3				& 0				& \multicolumn{1}{l|}{0}			& 8		\\	
		$\epsilon_{u'}$& 0				& 0				& \multicolumn{1}{l|}{7}			& 8		\\	 \cline{1-4}
		$r_j$ & 5           & 8           & 1                                &      
	\end{tabular}
	\caption{Valuation table for the edge gadget.}
	\label{tab:edge_values}
\end{table}

\noindent\textbf{Vertex Gadget.}
We construct the vertex gadget corresponding to vertex $u \in V$ analogously. 

In Figure \ref{fig:vertex_gadg}, if a vertex $u$ belongs to the independent set $IS(G)$, we integrally match the pairs corresponding to the solid edges in the gadget $G'_u$, otherwise, the dashed edges are matched. The corresponding buyer and seller utilities are depicted on the weight of each side of the edge.

\begin{figure}[H]
	\includegraphics[scale=0.32]{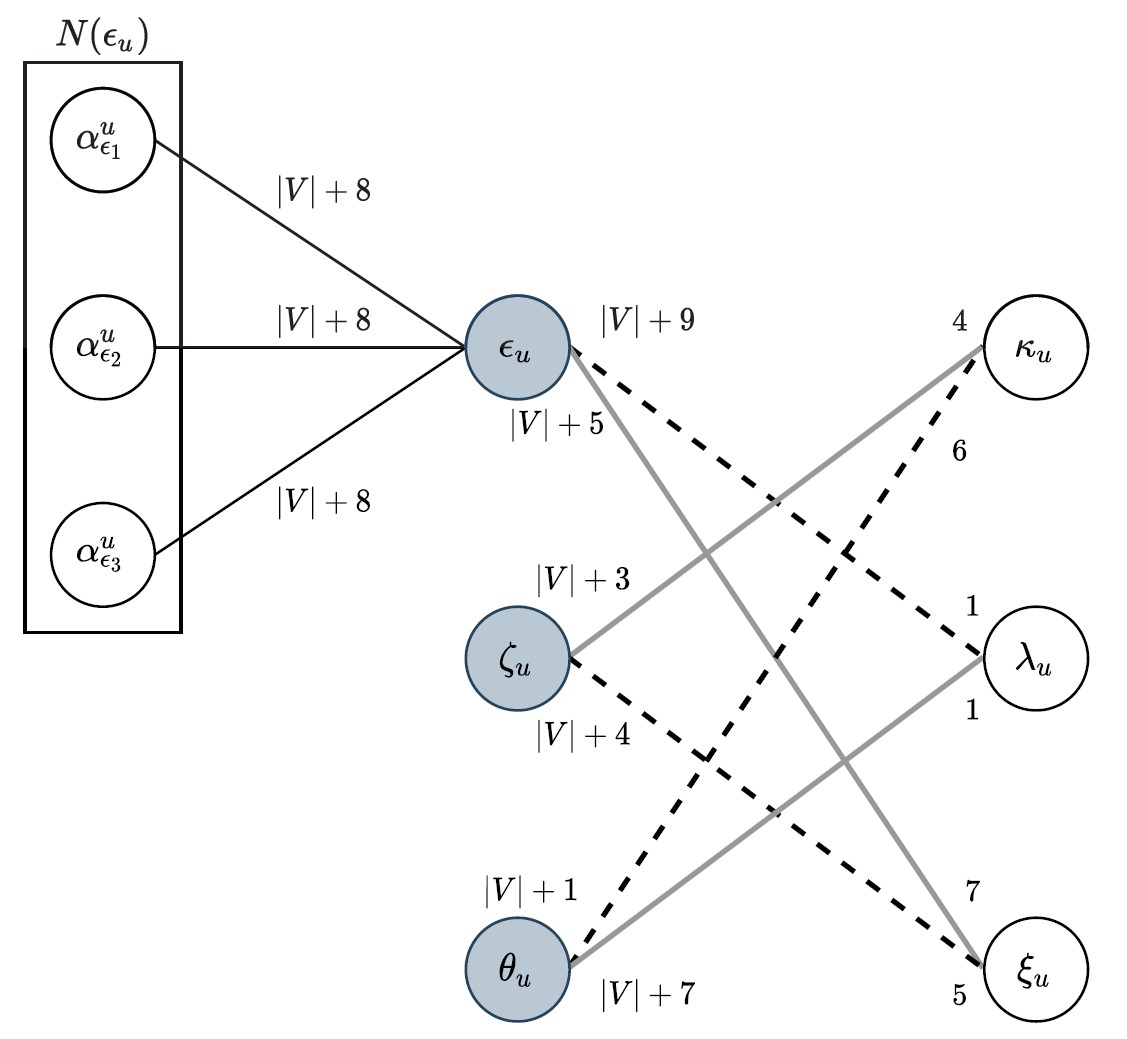}
	\caption{\small The vertex gadget.}
	\label{fig:vertex_gadg}
\end{figure}	

Table \ref{tab:vertex_values} represents the valuations of the vertices $v_i(j)$ of the vertex gadget, where $i \in \{\epsilon_u, \zeta_u, \theta_u\}$ and $j \in \{\kappa_u, \lambda_u, \xi_u, N(\epsilon_u)\}$. $N(\epsilon_u)$ consists of the neighboring vertices of $\epsilon_u$ in the edge gadgets of $\alpha^u_{\epsilon_1},\alpha^u_{\epsilon_2},\alpha^u_{\epsilon_3}$, since each vertex $u$ has a degree of three. 

\begin{table}[H]
	\centering
	\begin{tabular}{l|lllll}
		& $\kappa_u$ & $\lambda_u$    & $\xi_u$   & \multicolumn{1}{l|}{$N(\epsilon_u)$} & $b^i$ \\ \hline
		$\epsilon_u$ & 0     & $|V| + 17$ & $|V|+ 13$ & \multicolumn{1}{l|}{$n+16$}   & 8     \\
		$\zeta_u$ & $|V|+9$ & 0        & $|V|+10$  & \multicolumn{1}{l|}{0}        & 6     \\
		$\theta_u$ & $|V|+9$ & $|V|+15$   & 0       & \multicolumn{1}{l|}{0}        & 8     \\ \cline{1-5}
		$r_j$ & 2     & 7        & 1       &  5/1                          &      
	\end{tabular}
	\caption{Valuation table for the vertex gadget.}
	\label{tab:vertex_values}
\end{table}

Agents colored in gray in $G'_u$ belong to the set $\mcal{B}_u$, and agents in white belong to $\mcal{S}_u$, defined in an analogous manner to $\mcal{B}_e, \mcal{S}_e$. Two feasible stable outcomes exist in $G'_u$, resulting in surplus that differs by one: if the assignment indicated by the solid edges is chosen, the induced surplus equals $|V|+27$, while for the assignment of the dashed edges, surplus is equal to  $|V|+26$. The set $N(\epsilon_u)$ represents the neighborhood of $\epsilon$, and consists of three vertices belonging to three distinct edge gadgets.

Building towards the main result of Theorem \ref{thm:mbsbm-compl}, we show intermediate results captured by Lemmata \ref{lem:match-surplus} and \ref{lem:generality}. The first auxiliary lemma provides a useful bound for the aggregate surplus of an optimal assignment in an edge gadget.

\begin{lemma}
	\label{lem:match-surplus}
	Let $\mu$ be a stable assignment for edge gadget $G'_e$. The total surplus ($SW$) achieved by $\mu$ is at most $|E| \cdot ( 6|E|^2 +27 )$. If there is an edge $(u,u') \in E$, where it holds that $\pi_{\epsilon_u}(\mu (\epsilon_u)) < \pi_{\epsilon_u}(\alpha^u_e)$ and $\pi_{\epsilon_{u'}}(\mu (\epsilon_{u'})) < \pi_{\epsilon_{u'}} (\alpha^{u'}_e)$, then $SW < |E| \cdot (6|E|^2 +27) - |V|$.
\end{lemma}
\proof{Proof of Lemma \ref{lem:match-surplus}.}
	We examine the edge gadget $G'_e$ of edge $e = (u,u') \in E$. We define $\mcal{B}_e := \{\beta_e, \gamma_e, \delta_e\}$ as the set of buyers and $\mcal{S}_e := \{\alpha^u_e, \eta_e, \alpha^{u'}_e\}$ as the set of sellers. The total surplus that can be achieved by agents in $G'_e$ under assignment $\mu$ is $SW_e = \sum_{i \in \mcal{B}_e} \pi_i (\mu(i)) + \sum_{j \in \mcal{S}_e}\pi_j (\mu^{-1} (j))$. 
	
	For each individual edge gadget $G'_e$, we show that the maximum total surplus achieved is bounded by 
	$$SW_e \leq 3 \cdot 2|E|^2+27$$
	
	We prove the claim as follows:
	\begin{align*}
		SW_e &\ = \sum_{\substack{(i,j) \in \mcal{B}_e \times \mcal{S}_e \\ (i,j) \in E(G_e)}} \Big(\mathds{1}_{\{ \mu(i) = j \}} \cdot (\pi_i(j) + \pi_j (i)) \Big) +  \Big( \mathds{1}_{ \{ \mu(\epsilon_u) = \alpha^u_e \}} \cdot \pi_{\alpha^u_e} (\epsilon_u) \Big) +  \Big( \mathds{1}_{ \{ \mu(\epsilon_{u'}) = \alpha^{u'}_e \}} \cdot \pi_{\alpha^{u'}_e} (\epsilon_{u'}) \Big)\\
		&\ = \sum_{i \in \{\gamma_e, \delta_e\}} \Big( \mathds{1}_{ \{ \mu(i) = \alpha^u_e \}} \cdot (\pi_i(\alpha^u_e) + \pi_{\alpha^u_e} (i)) \Big) + \sum_{i \in \{\beta_e, \gamma_e\}} \Big( \mathds{1}_{ \{ \mu(i) = \alpha^{u'}_e \}} \cdot (\pi_i(\alpha^{u'}_e) + \pi_{\alpha^{u'}_e} (i)) \Big)\\
		&\ + \sum_{i \in \{\beta_e, \delta_e\}} \Big( \mathds{1}_{ \{ \mu(i) = \eta_e \}} \cdot (\pi_i(\eta_e) + \pi_{\eta_e} (i)) \Big) + \Big( \mathds{1}_{ \{ \mu(\epsilon_u) = \alpha^u_e \}} \cdot \pi_{\alpha^u_e} (\epsilon_u) \Big) +  \Big( \mathds{1}_{ \{ \mu(\epsilon_{u'}) = \alpha^{u'}_e \}} \cdot \pi_{\alpha^{u'}_e} (\epsilon_{u'}) \Big)\\
		&\ \leq 3 \cdot 2|E|^2 + 27
	\end{align*}
	
	We observe that each seller $\eta_e$ is connected to two buyer vertices inside the edge gadget $G'_e$, therefore, under assignment $\mu$, strictly one of $\mathds{1}_{\{ \mu(\beta_e) = \eta_e \}}$ and $\mathds{1}_{\{ \mu(\delta_e) = \eta_e \}}$ equals 1, while the other is 0. Sellers $\alpha^u_e$ and $\alpha^{u'}_e$ can be potentially matched with one of three different buyers: two of them inside $G'_e$, and one belonging to the vertex gadget, $u$ or $u'$ respectively. We compute the upper bound on the total sum of the valuations of sellers and buyers, on any possible assignment $\mu$, maintaining that, every time, one pair is matched, and the remaining conditions output 0 for all other edges connected to the pair. 
	The resulting social surplus $SW_e$ is equal to $3 \cdot 2|E|^2+27$ only when $\mathds{1}_{\{ \mu(\epsilon_u) = \alpha^u_e \}} = 0$ and $\mathds{1}_{\{ \mu(\epsilon_{u'}) = \alpha^{u'}_e \}} = 0$.

	We now proceed to bound the total surplus achieved under the condition that $\pi_{\epsilon_u} (\mu(\epsilon_u)) < \pi_{\epsilon_u}(\alpha^u_e)$ and $\pi_{\epsilon_{u'}} (\mu(\epsilon_{u'})) < \pi_{\epsilon_{u'}}(\alpha^{u'}_e)$ hold simultaneously. Since $\mu$ is a stable outcome, from the first condition, we derive that 
	\begin{equation}
		\pi_{\alpha^u_e} (\mu^{-1} (\alpha^u_e)) \geq \pi_{\alpha^u_e} (\epsilon_u) = 3
	\end{equation}
	Similarly, from the second condition, we get that
	\begin{equation}
		\pi_{\alpha^{u'}_e} (\mu^{-1} (\alpha^{u'}_e)) \geq \pi_{\alpha^{u'}_e} (\epsilon_{u'}) = 7
	\end{equation}
	
	In order to ensure that Equation (3) holds, assignment $\mu$ must force node $\alpha^u_e$ to be matched with $\delta_e$, as $\pi_{\alpha^u_e} (\delta_e) = 4 > 3$ in this case. If the pair $\{\gamma_e, \alpha^u_e \}$ was matched instead, it would hold that $\pi_{\alpha^u_e} (\gamma_e)  = 2 < 3$, thus contradicting condition (3).
	
	In a similar manner, $\mu$ matches pair $\{\beta_e, \alpha^{u'}_e\}$, achieving utility $\pi_{\alpha^{u'}_e} (\beta_e)  = 8 > 7$.
	
	The total surplus of assignment $\mu$ is therefore:
	\begin{align*}
		SW_e &\ = \sum_{\substack{(i,j) \in \mcal{B}_e \times \mcal{S}_e \\ (i,j) \in E(G_e)}} \Big( \mathds{1}_{ \{ \mu(i) = j \}} \cdot (\pi_i(j) + \pi_j (i)) \Big) +  \Big( \mathds{1}_{ \{ \mu(\epsilon_u) = \alpha^u_e \}} \cdot \pi_{\alpha^u_e} (\epsilon_u) \Big) +  \Big( \mathds{1}_{ \{ \mu(\epsilon_{u'}) = \alpha^{u'}_e \}} \cdot \pi_{\alpha^{u'}_e} (\epsilon_{u'}) \Big)\\
		&\ = \mathds{1}_{ \{ \mu (\delta_e) = \alpha^u_e \}} \cdot (\pi_{\delta_e} (\alpha^u_e) + \pi_{\alpha^u_e} (\delta_e)) +  \mathds{1}_{ \{ \mu (\beta_e) = \alpha^{u'}_e \}} \cdot (\pi_{\beta_e} (\alpha^{u'}_e) + \pi_{\alpha^{u'}_e} (\beta_e)) \\
		&\ \leq 2 |E|^2 + 11 + 2 |E|^2 + 10\\
		&\ \leq 4|E|^2 + 22 \\
		&\ < 6|E|^2 + 27 - |V|
	\end{align*}
	since, without loss of generality, we can assume that $2|E|^2 - |V| + 5 > 0$, as we are referring to cubic graphs, where $3 |V| = 2 |E|$ and as a  result the claim trivially holds.	 
\qed \endproof

This lemma implies that, in any surplus-maximizing assignment, edges between $\{\alpha^u_e, \epsilon_u\}$ and $\{\alpha^{u'}_e, \epsilon_{u'}\}$ are never chosen. Furthermore, at most one of the conditions $\pi_{\epsilon_u}(\mu (\epsilon_u)) < \pi_{\epsilon_u}(\alpha^u_e)$ and $\pi_{\epsilon_{u'}}(\mu (\epsilon_{u'})) < \pi_{\epsilon_{u'}} (\alpha^{u'}_e)$ must hold, otherwise assignment $\mu$ does not maximize surplus.

In our construction, we fix the price for each matching to the budget constraint of the buyer to whom a seller is matched. 

This restriction is without loss of generality. We show that, in each bipartite graph, regardless of whether it belongs to the vertex or edge gadget, the current scheme yields all possible surplus-maximizing outcomes in the core.

\begin{lemma}
	\label{lem:generality}
	Assuming that all buyers pay a price equal to their budget for their assigned items does not impact generality, namely, there does not exist a surplus-maximizing outcome in the core that is not reachable through this pricing scheme.
\end{lemma}

\vspace{0.3cm}
In order to prove Lemma \ref{lem:generality}, we provide an intermediate result regarding the existence of blocking pairs under specific assignments.

\begin{lemma}
	\label{lem:match-conditions}
	Consider edge $e=(u,u')$ of $G$, and let $\mu$ be an assignment for edge gadget $G'_e$. The following two statements hold:
	\begin{enumerate}
		\item If assignment $\mu$ satisfies $\mu(\beta_e)= \eta_e$, $\mu(\gamma_e)= \alpha^{u'}_e$ and $\mu(\delta_e) = \alpha^u_e $ (solid edges in Figure \ref{fig:edge_gadg}), then no blocking pair of $\mu$ involves any agent from $\{\beta_e, \gamma_e, \delta_e, \alpha^u_e, \eta_e\}$.
		\item If assignment $\mu$ satisfies $\mu(\beta_e) = \alpha^{u'}_e$, $\mu(\gamma_e) = \alpha^u_e$ and $\mu(\delta_e) = \eta_e$ (dashed edges in Figure \ref{fig:edge_gadg}), then no blocking pair of $\mu$ involves any agent from $\{\beta_e, \gamma_e, \delta_e, \alpha^{u'}_e, \eta_e\}$.
	\end{enumerate}
\end{lemma}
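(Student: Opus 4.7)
The plan is to prove both statements by exhaustive case analysis. Since the edge gadget contains only a constant number of agents (the three buyers $\beta_e, \gamma_e, \delta_e$, the three sellers $\alpha^u_e, \eta_e, \alpha^{u'}_e$, and the two boundary buyers $\epsilon_u, \epsilon_{u'}$), there are only a bounded number of candidate pairs $(i,j)$ with at least one endpoint in the specified set. For each such candidate, I would compute the current payoff $\pi_i(\mu(i))$ and the payoff $\pi_i(j)$ it would receive under the deviation (using $\pi_i(j) = v_i(j) - b^i$ and $\pi_j(i) = b^i - r_j$ from Table~\ref{tab:edge-g}), and verify that at least one of the two sides fails to strictly improve.

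For the solid-edges case, I would first tabulate the matched payoffs: $\pi_{\beta_e}(\eta_e) = 2|E|^2+3$, $\pi_{\gamma_e}(\alpha^{u'}_e) = 2|E|^2+6$, $\pi_{\delta_e}(\alpha^u_e) = 2|E|^2+7$, together with $\pi_{\eta_e}(\beta_e) = 1$, $\pi_{\alpha^u_e}(\delta_e) = 4$, and $\pi_{\alpha^{u'}_e}(\gamma_e) = 6$. Then I enumerate alternative partners. For $\beta_e$, both $\alpha^u_e$ (payoff $-9$) and $\alpha^{u'}_e$ (payoff $2|E|^2+2$) are non-improving. For $\gamma_e$, both $\alpha^u_e$ (payoff $2|E|^2+5$) and $\eta_e$ (payoff $-7$) are non-improving. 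For $\delta_e$, the pair $(\delta_e,\alpha^{u'}_e)$ fails on the buyer's side, while the critical pair $(\delta_e, \eta_e)$ gives the buyer $2|E|^2+9 > 2|E|^2+7$ but the seller's payoff becomes $\pi_{\eta_e}(\delta_e) = 9-8 = 1$, exactly equal to $\pi_{\eta_e}(\beta_e)$, so $\eta_e$ does not strictly prefer $\delta_e$. The boundary pair $(\epsilon_u, \alpha^u_e)$ fails because $\pi_{\alpha^u_e}(\epsilon_u) = 3 < 4$; and $\eta_e$ has no incident edge to either $\epsilon_u$ or $\epsilon_{u'}$ since $v_{\epsilon_u}(\eta_e) = v_{\epsilon_{u'}}(\eta_e) = 0$. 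This exhausts the candidates.

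The dashed-edges case proceeds analogously. After updating the payoffs to $\pi_{\beta_e}(\alpha^{u'}_e) = 2|E|^2+2$, $\pi_{\gamma_e}(\alpha^u_e) = 2|E|^2+5$, $\pi_{\delta_e}(\eta_e) = 2|E|^2+9$, $\pi_{\alpha^{u'}_e}(\beta_e) = 8$, $\pi_{\alpha^u_e}(\gamma_e) = 2$, and $\pi_{\eta_e}(\delta_e) = 1$, the two critical deviations are $(\beta_e,\eta_e)$ and $(\gamma_e, \alpha^{u'}_e)$. The first gives the buyer $2|E|^2+3 > 2|E|^2+2$, but the seller's payoff is again $\pi_{\eta_e}(\beta_e) = 1$, matching the current. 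The second gives the buyer $2|E|^2+6 > 2|E|^2+5$, but $\pi_{\alpha^{u'}_e}(\gamma_e) = 6 < 8 = \pi_{\alpha^{u'}_e}(\beta_e)$. All other internal deviations lose on the buyer's side, and the boundary pair $(\epsilon_{u'},\alpha^{u'}_e)$ fails because $\pi_{\alpha^{u'}_e}(\epsilon_{u'}) = 7 < 8$.

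Conceptually there is no difficulty here; the lemma is a bookkeeping exercise designed to work by the numerical choices in Table~\ref{tab:edge-g}. The one nontrivial design point worth highlighting in the write-up is that $b^{\beta_e} = b^{\delta_e} = 9$ forces $\eta_e$ to be exactly indifferent between $\beta_e$ and $\delta_e$; this tie is precisely what neutralizes the only dangerous pairs $(\delta_e,\eta_e)$ in Case~1 and $(\beta_e,\eta_e)$ in Case~2, and it is the reason the construction deliberately violates general position.
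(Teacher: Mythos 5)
Your proof is correct and follows essentially the same route as the paper's: both compute the matched payoffs from Table~\ref{tab:edge-g}, check every alternative partner, and observe that the only deviations improving the buyer's side ($(\delta_e,\eta_e)$ in Case~1, $(\beta_e,\eta_e)$ and $(\gamma_e,\alpha^{u'}_e)$ in Case~2) fail because $\eta_e$ is exactly indifferent between $\beta_e$ and $\delta_e$ (or because $\alpha^{u'}_e$ strictly prefers $\beta_e$). Your enumeration is in fact more explicit than the paper's ``one can easily verify,'' and your closing remark about the tie $b^{\beta_e}=b^{\delta_e}=9$ being the deliberate violation of general position matches the paper's own discussion.
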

\proof{Proof of Lemma \ref{lem:match-conditions}.}
	For the proof of the first statement, we assume outcome $\mu$ sets values as suggested. Computing the utilities under $\mu$ for each agent, we get:
	$$\pi_{\beta_e}(\eta_e) = 2|E|^2 + 3,\ \pi_{\gamma_e} (\alpha^{u'}_e) = 2|E|^2 + 6,\ \pi_{\delta_e} (\alpha^u_e) = 2|E|^2 + 7$$
	$$\pi_{\alpha^u_e} (\delta_e) = 4,\ \pi_{\eta_e} (\beta_e) = 1$$
	One can easily verify that agents $\beta_e, \gamma_e, \alpha^u_e$ are maximizing their utility under $\mu$ by inspecting the valuation table in Table \ref{tab:edge_values}. Since agent $\eta_e$ is indifferent between agents $\delta_e$ and $\beta_e$, she has no incentive to deviate by forming a blocking pair with agent $\delta_e$. Therefore, there does not exist a blocking pair that includes agents $\{\beta_e, \gamma_e, \delta_e, \alpha^u_e, \eta_e\}$. The proof of the second statement follows similarly.
\qed \endproof 

\vspace{0.3cm}

 With the above lemma at hand, we can proceed to show Lemma \ref{lem:generality}.
\proof{Proof of Lemma \ref{lem:generality}.}
	We examine the edge and vertex gadgets separately, and argue that, in both cases, setting prices equal to the winning bidders' budgets yields all feasible surplus-maximizing outcomes in the core. Formally, the set of all surplus-maximizing assignments in the core coincides with the set of optimal assignments when prices are set at the budget limit.  
	
	In a two-sided matching, the surplus is defined as the gains from trade, the value of the buyers minus that of the sellers. This means that for each match between buyer $i$ and seller $j$, the corresponding surplus is computed as the sum of buyer $\pi_i(j) = v_i(j) - p(j)$ and seller payoff $\pi_j(i)= p(j) - r_j$, therefore prices are not included in the final sum, which is the result of the difference between assigned items' valuations and seller reserve values. 
	
	We begin by analyzing the edge gadget, as seen in Figure \ref{fig:edge_gadg}. The values depicted in Table \ref{tab:edge_values} represent the true valuation of each buyer among $\{\beta_e, \gamma_e, \delta_e\}$ for the item of each seller among $\{\alpha^u_e, \eta_e, \alpha^{u'}_e\}$.	One can trivially observe that there exist 6 feasible assignments between buyers and sellers in the bipartite graph. As stated above, prices do not participate in the surplus computation, and thus we can calculate the surplus-maximizing assignment based on the buyer valuation table. Since valuations $v_{\beta_e} (\alpha^u_e) = v_{\gamma_e} (\eta_e) = v_{\delta_e} (\alpha^{u'}_e)  = 0$, only 2 among 6 assignments are stable, and simultaneously surplus maximizing. The two assignments $\mu_1, \mu_2$ are $\mu_1(\beta_e) = \eta_e, \mu_1(\gamma_e) = \alpha^{u'}_e, \mu_1 (\delta_e) = \alpha^u_e$ and $\mu_2(\beta_e) = \alpha^{u'}_e, \mu_2(\gamma_e) = \alpha^u_e, \mu_2 (\delta_e) = \eta_e$. The total surplus admitted by assignments $\mu_1, \mu_2$ is equal to $6|E|^2 + 27$, while the remaining feasible assignments yield a strictly lower surplus. Lemma \ref{lem:match-conditions} states that assignments $\mu_1, \mu_2$, corresponding to the grey and dotted edges respectively, with seller payoffs set as $\pi_{\alpha^u_e} (\delta_e) = b_{\delta_e} - r_{\alpha^u_e} = 9 - 5 = 4$, $\pi_{\eta_e}(\beta_e) = b_{\beta_e} - r_{\eta_e} = 9 - 8 = 1$, $\pi_{\alpha^{u'}_e} (\gamma_e) = b_{\gamma_e} - r_{\alpha^{u'}_e} = 7 - 1 = 6$ under assignment $\mu_1$, and $\pi_{\alpha^u_e}(\gamma_e) = b_{\gamma_e} - r_{\alpha^u_e} = 7 - 5 = 2$, $\pi_{\eta_e}(\delta_e) = b_{\delta_e} - r_{\eta_e} = 9 - 8 = 1$, $\pi_{\alpha^{u'}_e}(\beta_e) = b_{\beta_e} - r_{\alpha^{u'}_e} = 9 - 1 = 8$ under assignment $\mu_2$, do not generate any blocking pairs. Thus, setting prices for matched buyers in the edge gadget equal to their budgets produces every feasible core-stable, surplus-maximizing assignment, and lowering prices cannot yield different stable assignments of higher or equal surplus. 
	
	In a similar analysis, we observe that, for the vertex gadget of Figure \ref{fig:vertex_gadg}, the surplus-maximizing assignment $\mu_3$, based on valuations described on Table \ref{tab:vertex_values}, is $\mu_3 (\epsilon_u) = N(\epsilon_u),\mu_3 (\zeta_u) = \xi_u, \mu_3 (\theta_u) = \lambda_u$. However, as suggested in Lemma \ref{lem:match-surplus}, assignment $\mu_0 (\epsilon_u)$ results in sub-optimal surplus for the overall subgraph including vertex and edge gadget. Therefore, the surplus-maximizing assignment $\mu_4$ is a result of an assignment between the vertices within the gadget. The aforementioned assignment is $\mu_4 (\epsilon_u) = \xi_u,\mu_3 (\zeta_u) = \kappa_u, \mu_3 (\theta_u) = \lambda_u$ (solid assignment in Figure \ref{fig:vertex_gadg}), admitting surplus equal to $3|V|+27$. In Lemma \ref{lem:match-conditions}, we have shown that assignment $\mu_4$, for winning buyer prices equal to their budgets is core-stable. Thus, the initial claim is true for each vertex gadget, concluding the proof of this lemma.
\qed \endproof

\vspace{0.3cm}

We are now ready to prove our main result.

\proof{Proof of Theorem \ref{thm:mbsbm-compl}.}
	Let $G = (V,E)$ be a cubic graph, with sizes of vertex and edge sets defined as $|V|$ and $|E|$ respectively. Since $G$ is assumed to be cubic, it does not possess any isolated vertices. An instance of \textsc{MIS} is defined by $G$ and an integer $k$. 
	
	A key property of cubic graphs is that all nodes must have a degree of 3. According to the \textit{Handshaking lemma}, it holds that $\sum_{u \in V} \text{deg}(u) = 2 |E|$, namely the sum of degrees of all vertices of the graph is twice as large as the size of the edge set. 
	
	We construct an instance $<G', \pi_i, \pi_j, SW>$ of \textsc{MBSBM}, where $G' = (V', E')$ is a bipartite graph, with $SW = |V| \cdot (3 |V| +26) + k + |E| \cdot (6|E|^2 + 27)$.  For each edge $e \in E$, the buyer and seller sets of the corresponding edge gadget are defined as $\mcal{B}_e, \mcal{S}_e$. For each vertex $u \in V$, the respective sets for the corresponding vertex gadget are defined as $\mcal{B}_u, \mcal{S}_u$. Unifying for all edge gadgets in $G'$, we define $\mcal{B}_E = \bigcup\limits_{e \in E} \mcal{B}_e$,  $\mcal{S}_E = \bigcup\limits_{e \in E} \mcal{S}_e$, and for all vertex gadgets, sets $\mcal{S}_V = \bigcup\limits_{u \in V} \mcal{S}_u$ and $\mcal{B}_V = \bigcup\limits_{u \in V} \mcal{B}_u$. The vertex set $V'$ consists of the union over all vertex and edge gadgets, and therefore $\mcal{B} = \mcal{B}_E\ \cup \mcal{B}_V $ and $\mcal{S} =   \mcal{S}_E \ \cup \  \mcal{S}_V$ represent the total number of buyers and sellers, respectively. Each edge gadget corresponding to edge $e \in E$ consists of vertices $\mcal{B}_e = \{\beta_e, \gamma_e, \delta_e \}$ and $\mcal{S}_e = \{\alpha^u_e, \eta_e, \alpha^{u'}_e \}$. Similarly, the vertex gadget corresponding to vertex $u \in V$ consists of two disjoint sets of vertices defined as $\mcal{B}_u = \{\epsilon_u, \zeta_u, \theta_u\}$ and $\mcal{S}_u = \{\kappa_u, \lambda_u, \xi_u\}$. The vertex and edge gadgets are defined for each vertex $u \in V$ and each edge $e \in E$ of the original graph $G$. The total size of each set is $|\mcal{B}| = |\mcal{S}| = 6|E| + 6 |V|$. 
	
	Formally, the following claim should be proven: $G$ has an independent set $IS(G)$ of size at least $k$ \textbf{if and only if} $G'$ admits a stable outcome with surplus at least $SW$.
	
	The transformation from an instance of \textsc{MIS} to an instance of \textsc{MBSBM} is performed in polynomial time. An important aspect of our construction is the assumption that, both in edge and vertex gadgets, there exist pairs of buyers with equal budgets. From Tables \ref{tab:edge_values} and \ref{tab:vertex_values}, one can observe that pairs of agents $\{\beta_e, \delta_e\}$ and $\{\epsilon_u, \theta_u\}$ verify the claim. Thus, the instance is not in general position, and therefore the hardness proof holds for cases where the property is violated. 
	
	We now proceed to the forward part of the proof, namely, prove that, given an independent set $IS(G)$ of size at least $k$, we construct an assignment $\mu$ as follows:
	\begin{itemize}
		\item For each $u \in V$, if $u \in IS(G)$, set $\mu(\epsilon_u) = \xi_u, \mu(\zeta_u) = \kappa_u, \mu(\theta_u) = \lambda_u$.
		\item For each $u \in V$, if $u \not\in IS(G)$, set $\mu(\epsilon_u) =\lambda_u, \mu(\zeta_u) = \xi_u,  \mu(\theta_u) = \kappa_u$.
		\item For each edge $e = \{u, u'\} \in E$ do:
		\begin{itemize}
			\item If $u \in IS(G)$, set $\mu(\beta_e) = \eta_e, \mu(\gamma_e) = \alpha^{u'}_e,  \mu(\delta_e) = \alpha^u_e$, which matches condition (1) of Lemma \ref{lem:match-conditions}. 
			\item If $u \not\in IS(G)$, set $\mu(\beta_e) = \alpha^{u'}_e, \mu(\gamma_e) = \alpha^u_e, \mu(\delta_e) = \eta_e$, which matches condition (2) of Lemma \ref{lem:match-conditions}. 
		\end{itemize}
		\item For any pair of agents $(i, j) \in G'$ not assigned above, set $\mu(i) = 0, \mu^{-1} (j) = 0$.
	\end{itemize}
	
	Since the size of the independent set is at least $k$, we can trivially verify that, according to the aforementioned rules, the surplus attained by the agents of all vertex gadgets is equal to $|V| \cdot (3 |V| + 26) + k$ and of all edge gadgets is $|E| \cdot (6|E|^2 + 27)$, thus achieving a total surplus of $SW$. In what follows, we prove that outcome $\mu$ is indeed stable.
	
	\noindent\textit{Edge Gadget.} Examining the edge gadget of edge $e = \{u, u'\} \in E$, we need to prove that there does not exist any blocking pair of agents. There are two distinct cases, as mentioned previously.
	\begin{enumerate}
		\item If $u \in IS(G)$, then $\mu$ corresponds to condition (1) of Lemma \ref{lem:match-conditions}. Therefore, no blocking pair involving agents $\{\beta_e, \gamma_e, \delta_e, \alpha^u_e, \eta_e\}$ exists. In this case, assigning $\alpha^{u'}_e$ to $\epsilon_{u'}$ would yield a higher utility for agent $\alpha^{u'}_e$. However, since $u' \not\in IS(G)$, by definition $\mu(\epsilon_{u'}) = \lambda_u$ and $\epsilon_{u'}$ is assigned to her most preferred agent, with no incentive to deviate. We conclude that there exists no blocking pair involving agents $\{\alpha^{u'}_e, \epsilon{u'}\}$, which guarantees stability for $\mu$.
		\item If $u \not\in IS(G)$, then $\mu$ corresponds to condition (2) of Lemma \ref{lem:match-conditions}. Therefore, no blocking pair involving agents $\{\beta_e, \gamma_e, \delta_e, \alpha^{u'}_e, \eta_e\}$ exists. In this case, assigning $\alpha^{u}_e$ to $\epsilon_u$ would yield a higher utility for agent $\alpha^{u}_e$. However, since $u \not\in IS(G)$, by definition $\mu(\epsilon_u) = \lambda_{u}$ and $\epsilon_u$ is integrally matched to her most preferred agent, with no incentive to deviate. We conclude that there exists no blocking pair involving agents $\{\alpha^u_e, \epsilon_u\}$, which guarantees stability for $\mu$. 
	\end{enumerate}
	
	We prove a similar result for the vertex gadgets of $G'$.

	\noindent\textit{Vertex Gadget.} Examining the vertex gadget of vertex $u \in V$, we again detect two cases.
	\begin{enumerate}
		\item If $u \in IS(G)$, then no blocking pair of $\mu$ involves agents $\{\xi_u, \theta_u, \lambda_u\}$, as $\mu$ assigns them to their most preferred agents. Therefore, no blocking pair involves agents $\{\zeta_u, \kappa_u\}$, since no possible combination can yield improved payoff for all participants. As previously argued, no agent from the edge gadget participates in a blocking pair, and thus $\epsilon_u$ is also not involved in a blocking pair.
		\item If $u \not\in IS(G)$, then no blocking pair of $\mu$ involves agents $\{\kappa_u, \zeta_u, \lambda_u\}$, as $\mu$ assigns them to their most preferred agents. The same holds for agent $\epsilon_u$. Therefore, no blocking pair involves agents $\{\theta_u, \xi_u\}$, since no possible combination can yield improved payoff for all participants.
	\end{enumerate}
	
	We conclude that neither the edge nor the vertex gadget contain agents involved in blocking pairs under $\mu$. Thus, $\mu$ is stable and achieves a surplus of $SW$.
	
	For the reverse direction, assume $\mu$ is a stable outcome for $<G', \pi_i, \pi_j>$ with surplus $surplus(\mu) \geq SW$. Defining a subset of the vertices as $IS(G) = \{u \in V\ |\ \mu(\epsilon_u) =\xi_u \}$, we prove that $IS(G)$ is an independent set of $G$ of size at least $k$. 
	
	Firstly, we prove the desired lower bound on $|IS(G)|$. We define $\mu(\epsilon_u) = N(\epsilon_u)$ as the assignment between vertex $\epsilon_u$ and any of the neighboring vertices of the edge gadgets. Summing up for all $|V|$ vertices of $G$, we get:
	\begin{align*}
		\sum_{u \in V}^{} &\ \Big( \sum_{i \in \{\epsilon_u, \zeta_u, \theta_u\}}\pi_i( \mu(i)) + \sum_{j \in \{\kappa_u, \lambda_u, \xi_u\}} \pi_j (\mu^{-1} (j)) \Big) \\\
		&\ = \sum_{u \in V}^{} (|V|+10) \cdot \mathds{1}_{ \{ \mu(\epsilon_u) = \lambda_u \}} + (|V|+12) \cdot \mathds{1}_{ \{ \mu(\epsilon_u) = \xi_u \}} + (|V|+8) \cdot \mathds{1}_{ \{ \mu(\epsilon_u) = N(\epsilon_u) \}} + \\
		&\ (|V|+7) \cdot \mathds{1}_{ \{ \mu(\zeta_u) = \kappa_u \}} + (|V|+9) \cdot \mathds{1}_{ \{ \mu(\zeta_u) = \xi_u \}} + (|V|+7) \cdot \mathds{1}_{ \{ \mu(\theta_u) = \kappa_u \}} + (|V|+8) \cdot \mathds{1}_{ \{ \mu(\theta_u) = \lambda_u \}}\\
		&\ \leq \sum_{u \in V}^{}  (|V|+10) \mathds{1}_{ \{ \mu(\epsilon_u) =\lambda_u \}} + \mathds{1}_{ \{ \mu(\epsilon_u) = \xi_u \}} + \mathds{1}_{ \{ \mu(\epsilon_u) =  N(\epsilon_u)\}} +  (|V|+8) \cdot \mathds{1}_{ \{ \mu(\zeta_u) = \kappa_u\}} + \mathds{1}_{ \{ \mu(\zeta_u) = \xi_u \}} +\\
		&\ (|V|+8) \cdot \mathds{1}_{ \{ \mu(\theta_u) = \kappa_u \}} + \mathds{1}_{ \{ \mu(\theta_u) = \lambda_u \}} + \sum_{u \in V}^{} \mathds{1}_{ \{ \mu(\epsilon_u) = \lambda_u \}} \\
		&\ \leq  |V| \cdot (3 |V|+26) + \sum_{u \in V}^{} \mathds{1}_{ \{ \mu(\epsilon_u) = \lambda_u\}}.
	\end{align*}
	
	Since $\mu$ corresponds to a binary assignment, only one condition is true for each of $\epsilon_u, \zeta_u, \theta_u$, and examining all possible outcomes, we use the mean of values to provide an upper bound on utilities. The sum of values for each vertex $\kappa_u, \lambda_u, \xi_u$ is at most 1, which is also taken into account when computing the upper bound. We can trivially verify that the total surplus of $\mu$ is maximized if and only if pair ${\epsilon_u, \xi_u}$ is integrally matched. 
	
	We now need to show that $|IS(G)| \geq k$. The set $IS(G)$ has been defined as the set of vertices ${\epsilon_u}$ for $u \in V$ that are integrally matched to ${\xi_u}$. This can be expressed as $|IS(G)| =  \sum_{u \in V}^{} \mathds{1}_{ \{ \mu(\epsilon_u) = \xi_u \}}$. It then suffices to prove inequality $\sum_{u \in V}^{} \mathds{1}_{ \{ \mu(\epsilon_u) = \xi_u \}} \geq k$, to prove the lower bound on the size of the independent set. Since the induced surplus from all $|E|$ edge gadgets is at most $surplus(\mu) \leq |E| \cdot (6|E|^2 + 27)$. At least $|V| \cdot (3 |V| +26) + \sum_{u \in V}^{} \mathds{1}_{ \{ \mu(\epsilon_u) = \xi_u \}}$ must be derived from the vertex gadgets. Using the upper bound provided above, we conclude that $\sum_{u \in V}^{} \mathds{1}_{ \{ \mu(\epsilon_u) = \xi_u \}} \geq k$, as required, thus proving the lower bound on the size of $IS(G)$. 
	
	Finally, we need to prove that $IS(G)$ is, in fact, an independent set of $G$. We prove the claim by contradiction. Suppose that there is an edge $e = \{u, u'\} \in V$  and $\{u, u'\} \subseteq IS(G)$. Then, for both nodes $u, u'$ under assignment $\mu$, it must hold that $\pi_{\epsilon_u} (\mu(\epsilon_u)) <  |V|+8 =\pi_{\epsilon_u} (\alpha^u_e)$ and $\pi_{\epsilon_{u'}} (\mu(\epsilon_{u'})) <  |V|+8 =\pi_{\epsilon_{u'}} (\alpha^{u'}_e)$, as $IS(G)$ is defined as the set of nodes $u$ that are integrally matched to $\xi_u$. From Lemma \ref{lem:match-surplus}, the induced surplus from agents of the edge gadgets is at most $6|E|^2 + 27 - |V|$, and using the previously shown bound, the surplus induced from agents of the vertex gadgets is at most $3|V|+26 + |V|$. Therefore, the total surplus of $G'$ under $\mu$ is lower than the original assumption, which is a contradiction. 
	Note that the restriction of our prices to be equal to the budget constraint $b^i$ of the buyers $i \in \mcal{B}$ in the edge and vertex gadgets is without loss of generality as shown in Lemma \ref{lem:generality}. 	
	Concluding, we have proven that indeed $IS(G)$ is an independent set of $G$.
	\qed \endproof

\end{APPENDICES}
\end{document}